\newtheorem{theorem}{Theorem}[section]
\newtheorem{proposition}{Proposition}[section]
\newtheorem{remark}[]{Remark}[section]
\newtheorem{corollary}{Corollary}[section]
\newlength{\dhatheight}
\newcommand{\doublehat}[1]{%
    \settoheight{\dhatheight}{\ensuremath{\hat{#1}}}%
    \addtolength{\dhatheight}{-0.25ex}%
    \hat{\vphantom{\rule{1pt}{\dhatheight}}%
    \smash{\hat{#1}}}}
\DeclareMathOperator*{\argmax}{arg\,max}
\newcommand{\field}[1]{\mathbb{#1}}
\begin{document}


\title{Feasible Invertibility Conditions for Maximum Likelihood Estimation for Observation-Driven Models\footnotemark \footnotetext{Corresponding author: Paolo Gorgi. 
Email address:  \texttt{gorgi@stat.unipd.it}}}

\author[a,b]{F. Blasques}
\author[a,c]{P. Gorgi}
\author[a,b,d]{S. J. Koopman}
\author[e,f]{O. Wintenberger}
\affil[a]{Vrije Universiteit Amsterdam, The Netherlands}
\affil[b]{Tinbergen Institute, The Netherlands}
\affil[c]{University of Padua, Italy}
\affil[d]{CREATES, Aarhus University, Denmark}
\affil[e]{Department of Mathematical Sciences, University of Copenhagen, Denmark}
\affil[f]{Sorbonne Universit\'{e}s, UPMC University Paris 06, France}

{\let\newpage\relax\maketitle}

\begin{abstract}
\noindent Invertibility conditions for observation-driven time series models often fail to be guaranteed in empirical applications. As a result, the asymptotic theory of maximum likelihood and quasi-maximum likelihood estimators may be compromised.  
We derive considerably weaker  conditions that can be used in practice to ensure the consistency of the maximum likelihood estimator for a wide class of observation-driven time series models. Our consistency results hold for both correctly specified and misspecified models. The practical relevance of the theory is highlighted in a set of empirical examples.  We further obtain an asymptotic test and  confidence bounds for the unfeasible ``true'' invertibility region of the parameter space.
\end{abstract}

\emph{Key words:} consistency, invertibility, maximum likelihood estimation, observation-driven models,  stochastic recurrence equations. \\

\section{Introduction}

Observation-driven models are widely employed in time series analysis and econometrics. These models feature time-varying parameters that are specified through a stochastic recurrence equation (SRE) that is driven by past observations of the time series variable.  A more accurate description of this class of models is provided by \cite{cox1981}. A key illustration of the observation-driven model class is the Generalized Autoregressive Conditional Heteroscedasticity (GARCH) model as introduced by \cite{engle1982} and \cite{bol1986}. Observation-driven models are also widely employed  outside the context of volatility models; see, for instance, the dynamic conditional correlation (DCC) model of \cite{Engle2002}, the time-varying quantile model of \cite{EM2004}, the dynamic copula models of \cite{Patton2006}, the score-driven models of \cite{ckl2013} and the time-varying location model of \cite{HL2014}. 

The asymptotic theory of the Quasi Maximum Likelihood (QML) estimator for GARCH and related models has attracted much attention. \cite{Lum1996} and \cite{Lee94} obtained the consistency and asymptotic normality of the QML estimator for the GARCH(1,1). \cite{berkes2003} generalized their results to the GARCH$(p,q)$ model. Among others,  \cite{FZ2004} and \cite{robinson2006} weakened the conditions for consistency and asymptotic normality and extended the results to a larger class of models. \cite{SM2006}  have provided a general approach that allows to handle nonlinearities in the variance recursion. The theory relies on the work of \cite{Bougerol1993} to ensure the invertibility of the filtered time-varying variance and to deliver asymptotic results that are subject to some restrictions on the parameter region where the QML estimator is defined.  The severity of these restrictions typically depends on the degree of nonlinearity in the recurrence equation. 

The invertibility conditions of \cite{SM2006} often fail to be guaranteed in empirical studies. In Section \ref{sec2}  and  \ref{sec6} we illustrate this issue through some empirical examples featuring the Beta-$t$-GARCH$(1,1)$ model of \cite{H2013} and  \cite{ckl2013}, the dynamic autoregressive model of \cite{nlgas2014} and \cite{DMP204}, and the fat-tailed location model of \cite{HL2014}. The main problem is due to the conditions themselves since they depend on the unknown data generating process. Hence they cannot be verified in practice. This leads researchers to rely on  feasible conditions that are typically only satisfied in either degenerate or very small parameter regions, which are unreasonable in practical situations. To address this issue and to ensure the asymptotic theory of the QML estimator of the EGARCH(1,1) model of \cite{Nelson1991}, \cite{Win2013} proposed to stabilize the inferential procedure by restricting the optimization of the quasi-likelihood function to a parameter region that satisfies an empirical version of the required invertibility conditions of \cite{SM2006}. This method  provides a consistent QML estimator for the EGARCH(1,1) model.

In recent contributions, consistency proofs for observation-driven models with nonlinear filters have appeared that do not rely on the invertibility concept of \cite{SM2006}; see, for instance, \cite{H2013}, \cite{HL2014} and \cite{Ryoko2016}. However,  these results appeal to Lemma 2.1 of \cite{JRab2004} and rely on the restrictive and non-standard assumption that the  true value of the unobserved time-varying parameter is known at time $t=0$. Although \cite{JRab2004} carefully show that they do not need to impose this assumption in their results for the non-stationary GARCH model, this crucial issue is typically not addressed in other work. As it is discussed in \cite{Win2013} and \cite{sorokin2011}, invertibility is not just a technical assumption. The lack of knowledge of the time-varying parameter at $t=0$ can lead to the impossibility of recovering asymptotically the true time-varying parameter  even when the true static parameter vector is known. Furthermore, besides the invertibility issue, the results based on Lemma 2.1 of \cite{JRab2004} are only valid under the correct specification and by assuming that the likelihood function is maximized on an arbitrary small neighbourhood around the true parameter value. 

We extend the stabilization method of \cite{Win2013} to a large class of observation-driven models and prove the consistency of the resulting maximum likelihood (ML) estimator. These results hold for both correctly specified and incorrectly specified models, in the latter case a pseudo-true parameter is considered.  Additionally, we derive a test and confidence bounds for the ``true''  unfeasible parameter region. Our results cover a very wide class of models including  ML estimation of GARCH and related models. In financial applications, maximum likelihood estimation for the GARCH family of models is often preferred to QML estimation as  the time series exhibit fat-tails and asymmetry. In this context, we provide an example of how our results can be useful in practice. In particular, we prove the consistency of the ML estimator for the Beta-$t$-GARCH(1,1) model of \cite{H2013}. The usefulness of our theoretical results is further illustrated considering two examples in the context of dynamic location model. In particular, we discuss the implications of our theoretical results considering the dynamic autoregressive model of \cite{nlgas2014} and \cite{DMP204}  and the fat-tailed location model of \cite{HL2014}.

The paper is structured as follows. Section \ref{sec2} motivates the theory with an empirical application for which the invertibility conditions used in \cite{SM2006} are too restrictive.  Section \ref{sec3} introduces the notion of invertibility of the filter and analyzes it in the context of the class of observation-driven models. Section \ref{sec4} presents the asymptotic results. Section \ref{sec5} derives an invertibility test for the filter and obtains confidence bounds for the parameter space of interest. Section \ref{sec6} shows the practical importance of asymptotic results through some empirical illustrations.  Section \ref{sec8} concludes.

\section{Motivation}
\label{sec2}

Consider the  Beta-$t$-GARCH(1,1) model introduced by  \cite{H2013} and \cite{ckl2013}  for a sequence of financial returns $\{y_{t}\}_{t \in \field{N}}$ with time-varying conditional volatility and leverage effects,

\begin{equation}\label{tgass}
y_{t}=\sqrt{f_t}\mathcal{\varepsilon}_t , \qquad 
f_{t+1}= \omega+\beta f_t+(\alpha +\gamma d_t)\frac{(v+1)y_t^2}{(v-2)+ y_t^2/  f_t},
\end{equation}
  where $\{\varepsilon_t\}_{t\in \mathbb{Z}}$ is an i.i.d.~sequence of standard Student's $t$ random variables with $v>2$ degrees of freedom and $d_t$ is a dummy variable that takes value $d_{t}=1$ for $y_{t} \leq 0$ and $d_{t}=0$ otherwise. In order to perform ML estimation of the model, the observed data $\{y_t\}_{t=1}^n$ are used to obtain the filtered time-varying parameter $\hat f_t(\theta)$ as
\begin{eqnarray*}
\hat f_{t+1}( \theta)=\omega+\beta \hat f_t(\theta)+(\alpha +\gamma d_t)\frac{(v+1)y_t^2}{(v-2)+ y_t^2/\hat f_t(\theta)}, \;\; t\in \mathbb{N},
\end{eqnarray*}
where the recursion is initialized at $\hat f_{0}( \theta) \in [0,+\infty)$. The invertibility concept of \cite{SM2006} is concerned with  the stability of $\hat f_t(\theta)$, in particular, it ensures that asymptotically the filtered parameter $\hat f_t(\theta)$ does not depend on the initialization $\hat f_{0}( \theta)$.
Figure \ref{fig:invplot} illustrates the importance of the invertibility of the filter. The plots show differences between filtered volatility paths obtained from the  S\&P 500 returns for different initializations $\hat f_{0}(\theta)$.  The left panel shows a situation where  the filter is invertible and hence the effect of the initialization $\hat f_{0}(\theta)$ on $\hat f_{t}( \theta)$ vanishes as $t$ increases. The right panel shows that the effect of the initialization does not vanish when the filter that is not invertible.  
\begin{figure}[h!]
\center
\includegraphics[width=0.80\textwidth]{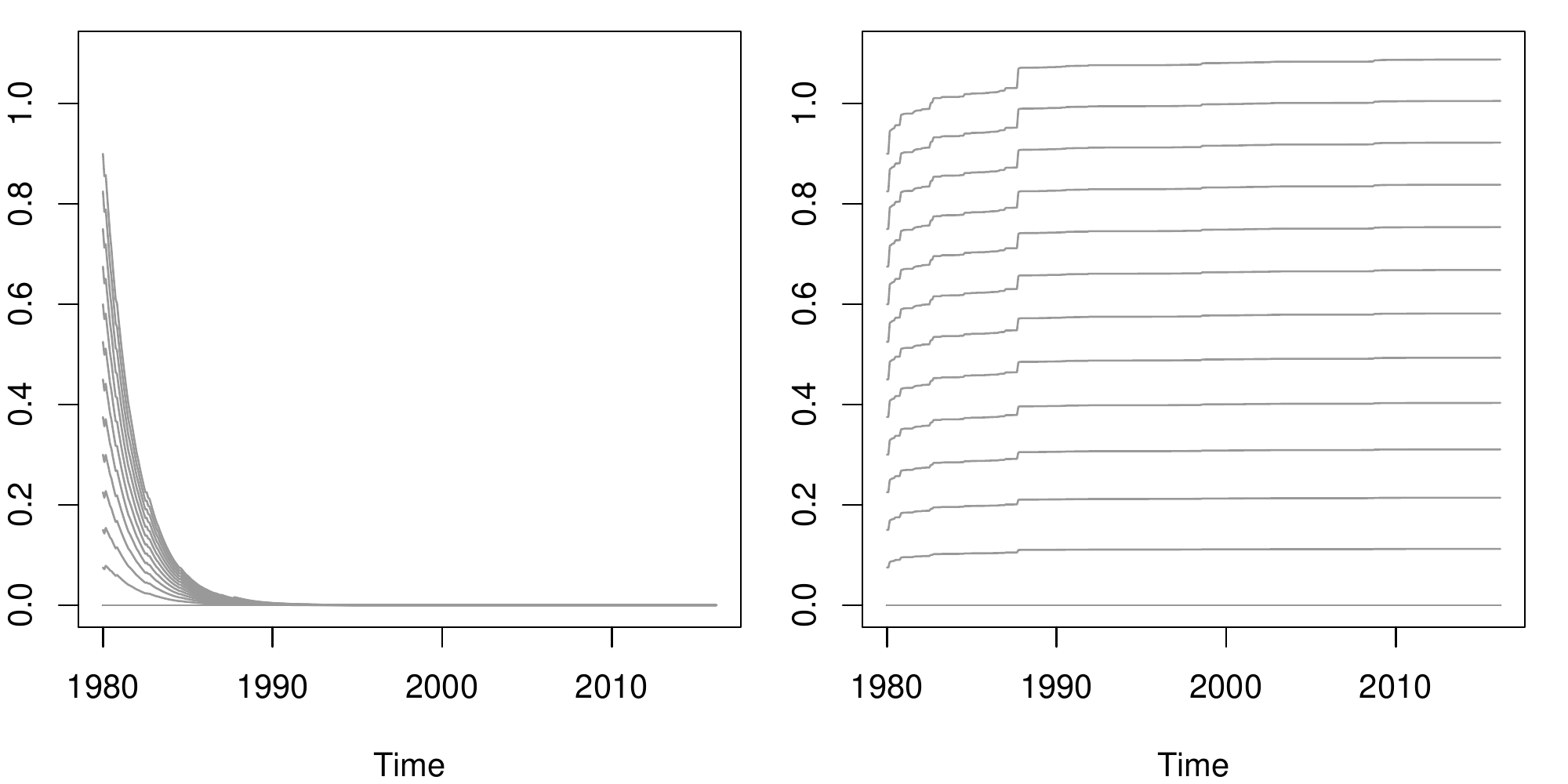}
\caption{\textit{The plots show differences of the filtered variance paths  for different initializations and using the  S\&P 500  time series. Differences are with respect to the filter initialized at $\hat f_0(\theta) = 0.1$. In the first plot, the vector of static parameters is selected to satisfy the invertibility conditions. In the second plot, a vector of static parameters that does not satisfy the invertibility conditions is considered.}}
\label{fig:invplot}
\end{figure}

From a ML estimation perspective, the lack of invertibility of the filter also poses fundamental problems. 
Without invertibility, even asymptotically, the likelihood function  depends on the initialization and hence this may lead the ML estimator to converge to different points when different initializations are considered. Furthermore, we may also be in a situation where we have a consistent estimator for the static parameter vector $\theta$ but not be able to consistently estimate the time-varying parameter. This consideration comes naturally from the fact that lack of invertibility can lead to the impossibility of recovering the true path of the time-varying parameter even when the true vector of static parameters $\theta_0$ is known, see \cite{Win2013} and \cite{sorokin2011} for a more detailed discussion.
As we shall see, the following condition is sufficient for invertibility, and hence ensures the reliability of the ML estimator,
\begin{eqnarray}\label{cons}
E\log\left|\beta+(\alpha+\gamma d_t)\frac{(v+1)y^4_t}{\left ((v-2)\bar \omega+y_t^2\right)^2}\right|<0, \; \forall \;\theta \in \Theta,
\end{eqnarray}
where $\bar\omega= \omega/(1-\beta)$. In practice, it is not possible to evaluate the expectation in (\ref{cons}) as it   depends on the unknown data generating process, even when the model is correctly specified since the true parameter vector $\theta_0$ is unknown. Therefore, the derivation of the region $\Theta$ has to rely on feasible sufficient conditions to ensure (\ref{cons}).
As we shall see in Section \ref{sec6}, assuming either correct specification or that $y_t$ has a symmetric probability distribution around zero\footnote{Without this assumption the feasible invertibility condition would be even more restrictive.}, we can obtain the following sufficient invertibility condition  that does not depend on $y_t$ 
\begin{eqnarray*}
\frac{1}{2}\log\left|\beta+(\alpha+\gamma)(v+1)\right|+\frac{1}{2}\log\left|\beta+\alpha(v+1)\right|<0.
\end{eqnarray*}
Figure \ref{fig:reg1} suggests that the set $\Theta$ obtained from such a sufficient  condition is too small for empirical applications. In particular, Figure \ref{fig:reg1} highlights that a typical ML  point estimate lies far outside $\Theta$. The specific point estimates are obtained from the Beta-$t$-GARCH model applied to a monthly time series of log-differences of the S\&P 500 financial index for a sample period from January 1980 to April 2016.  
\begin{figure}[h!]
\center
\includegraphics[width=0.85\textwidth]{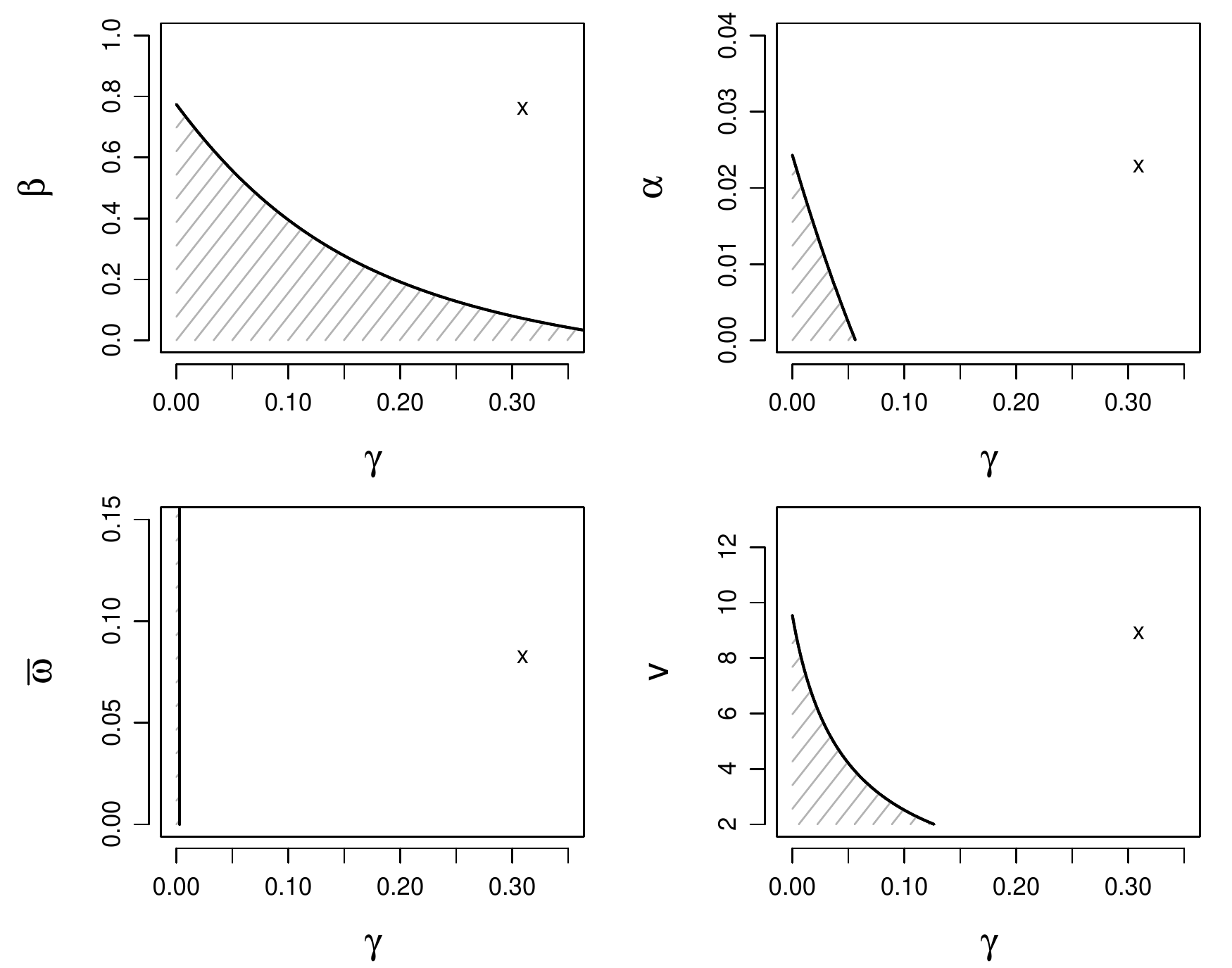}
\caption{\textit{ The shaded area identifies the parameter region $\Theta$ that satisfies sufficient conditions for invertibility. The crosses locate the point estimate of the parameters of the Beta-$t$-GARCH(1,1) model. }}
\label{fig:reg1}
\end{figure}
A visual inspection of Figure \ref{fig:reg1} may suggest that the presented point estimates reveal that the filter is not stable or invertible but in Section \ref{sec6} we will argue that this is not the case. These point estimates lie well inside the estimated regions for an invertible filter. in Section \ref{sec5} we develop the appropriate tests and confidence bounds which further confirm this claim.

The problem illustrated in Figure \ref{fig:reg1} is not specific to this sample of data or this conditional volatility model, see the discussion in Section \ref{sec6}. Different samples of financial returns produce similar point estimates that lie also outside $\Theta$. This problem is also not specific for the class of conditional heteroscedastic models. We illustrate this point   considering the autoregressive  model of \cite{nlgas2014} and \cite{DMP204} and the location model of   \cite{HL2014}.  We find that, in general, the typical invertibility conditions needed to ensure the consistency of the ML estimator, which are considered for  instance in \cite{SM2006}, \cite{Straumann2005} and \cite{BSA2014},  lead often to a parameter region that is too small for practical purposes. In contrary,   the estimation   method of \cite{Win2013}, proposed for the QML estimator  of the EGARCH(1,1) model, can provide a  parameter region large enough for practical applications.  In Section \ref{sec3} and Section \ref{sec4}, we  generalize the method of  \cite{Win2013} to  ML estimation of a wide class of observation driven models.


\section{Invertibility of observation-driven filters}
\label{sec3}

Let the observed sample of data $\{y_1,\dots ,y_n\}$ be a subset of the realized path of a random sequence $\{y_{t}\}_{t \in \field{Z}}$ with unknown conditional density  $p^{o}(y_{t}|y^{t-1})$, where $y^{t-1}$ denotes the entire past of the process $y^{t-1}:=\{y_{t-1},y_{t-2},...\}$. 
Consider the parametric observation-driven time-varying parameter model that is postulated by the researcher as given by
\begin{eqnarray}
&& y_{t}|f_t \sim p(y_t|f_t ,\theta), \label{model}\\
&& f_{t+1} =\phi(f_t ,Y^k_t,\theta),  \;\; t\in \mathbb{Z},\label{modelb}
\end{eqnarray}
where $\theta \in \Theta\subseteq \mathbb{R}^p$ is a vector of static parameters, $f_t$ is a time-varying parameter that takes values in $\mathcal{F}_\theta\subseteq \mathbb{R}$, $\phi$ is a continuous function from $\mathcal F_\theta \times \mathcal{Y}^k\times \Theta$ into $\mathcal F_\theta $, differentiable on its first coordinate, $Y_{t}^{k}$ is a vector containing at time $t$ the current and $k$ lags of the observed time series, that is $Y_{t}^{k}:=(y_{t},y_{t-1},...,y_{t-k})^T$, and  $p(\cdot|f_t,\theta)$ is a conditional density function such that $(y,f,\theta)\mapsto p(y|f,\theta)$ is continuous on $\mathcal{Y} \times \mathcal{F}_\theta \times  \Theta$. 

In general, we allow the parametric model  in (\ref{model}) and (\ref{modelb}) to be fully misspecified. It implies that both the dynamic specification of $f_t$ and the conditional density $p(\cdot|f_t,\theta)$ can be misspecified. A true time-varying parameter $f_t$ may not even exist because we only assume that a true conditional density $p^{o}(\cdot|y^{t-1})$ exists. When we assume correct specification,  the data generating process $\{y_{t}\}_{t \in \field{Z}}$ satisfies the model equations (\ref{model}) and (\ref{modelb}) for $\theta=\theta_0$ and we denote the true time-varying parameter as $f^o_t$. In this situation, we have that $p^o(\cdot|y^{t-1})=p(\cdot|f_t^o,\theta_0)$.

Despite the possibility of model misspecification, we emphasize that the model class based on (\ref{model}) and (\ref{modelb}) is general and covers a wide range of observation-driven models. It includes many GARCH and related models, the location models of \cite{HL2014}, the multiplicative error memory (MEM) model of \cite{Engle2002}, the autoregressive conditional duration model of \cite{EngleRussel1998}, the autoregressive conditional intensity model of \cite{Russell2001} and the Poisson autoregressive model of \cite{Davis2003}. 

An important advantage of observation-driven models is that the likelihood function is analytically tractable and it can be written in closed form as the product of conditional density functions. We consider the convention that the observations are available from time $t=1-k$.
Using the observed data, the filtered parameter $\hat{f}_{t}(\theta)$ that enters in the likelihood function  is obtained from the stochastic recurrence equation (SRE) given by
\begin{eqnarray}\label{rec}
\hat{f}_{t+1}(\theta)=\phi(  \hat{f}_{t}(\theta),  Y^k_t,  \theta ), \;\; t\in\mathbb{N},
\end{eqnarray}
where the recursion is initialized at $t=0$ with $\hat{f}_{0}({\theta}) \in \mathcal{F}_\theta$.
The set $\mathcal{F}_\theta$, where the time-varying parameter takes values, is indexed by $\theta \in \Theta$. As we will see for the Beta-$t$-GARCH model, this can be relevant in practice when dealing with specific models to weaken invertibility conditions; see the discussion in \cite{BSGW2015}. The ML estimator is then obtained as 
\begin{eqnarray}\label{mle}
\hat \theta_n(\hat f_0) = \argmax_{\theta \in \Theta}\hat{L}_n(\theta),
\end{eqnarray}
where $\hat{L}_n(\theta)$ denotes the log-likelihood function evaluated at $\theta \in \Theta$,
\begin{eqnarray}\label{llik}
\hat L_n( \theta) \ = \ n^{-1}\sum_{t = 1}^n \hat l_t(\theta) \ = \ n^{-1}\sum_{t = 1}^n \log p(y_t|\hat{f}_t(\theta),\theta) .
\end{eqnarray}

One of the difficulties in ensuring the consistency of the ML estimator is related to the recursive nature of the time-varying parameter and the consequent need of initializing the recursion in (\ref{rec}). In particular, the sequence $\{\hat f_t(\theta)\}_{t \in \mathbb{N}}$ as well as the sequence $\{\hat l_t(\theta)\}_{t\in \mathbb{N}}$ are both non-stationary. Therefore, the study of the limit behavior of $\{\hat f_t(\theta)\}_{t \in \mathbb{N}}$ is a natural requirement to ensure an appropriate form of convergence of the log-likelihood function $\hat L_n(\theta)$. 

\cite{Bougerol1993} provides well-known conditions for the filtered sequence $\{\hat{f}_{t}(\theta)\}_{t \in \field{N}}$ initialized at time $t=0$ to converge exponentially fast almost surely  (e.a.s.) to a unique stationary and ergodic sequence $\{\tilde f_{t}(\theta)\}_{t \in \field{Z}}$ as $t \to \infty$. In essence, this means that the effect of the initialization vanishes asymptotically at an exponential rate.\footnote{In the context of correctly specified models this implies that the true path $\{f_{t}^o\}_{t \in \field{Z}}$ can be asymptotically recovered as  $\hat{f}_{t}(\theta_{0})$ converges to $\tilde f_t(\theta_0)=f_{t}^o$ a.s.~as $t \to \infty$.} More formally, for any given $\theta \in \Theta$ and under appropriate conditions, Theorem 3.1 in \cite{Bougerol1993}  shows that  
\begin{equation*}
	|\hat f_t(\theta)-\tilde f_t(\theta)|\xrightarrow{e.a.s.} 0, \;\; t\xrightarrow{} \infty,
\end{equation*}
for any initialization $\hat f_0(\theta) \in \mathcal{F}_\theta$. \cite{SM2006} make use of Bougerol's theorem. Further, the e.a.s.~convergence stated above is sufficient for the invertibility of the filter\footnote{\cite{SM2006} say that the model is invertible if $\hat f_t(\theta_0)$ converges in probability to  $\tilde f_t^o$ and use Theorem 3.1 of \cite{Bougerol1993} precisely to obtain the desired convergence. }. Their definition of invertibility is  closely related to the definition of invertibility in \cite{Granger197887} since it implies  that $f_t^o$ is $y^{t-1}$ measurable. 

The stationary and ergodic limit sequence is denoted by $\tilde f_t(\theta)$ and it is \emph{not} denoted by $f_t(\theta)$ in order to stress that the stochastic properties of $\tilde f_t(\theta)$ are different from the stochastic properties of the sequence  $f_t(\theta)$ as implied by the model equations (\ref{model}) and (\ref{modelb}). This distinction is important as it emphasizes that $\tilde f_t(\theta)$ is driven by past random variables of the data generating process which are different than variables generated by the model equations (\ref{model}) and (\ref{modelb}). Under correct specification, we have that $\tilde f_t(\theta)$ has the same stochastic properties of $f_t(\theta)$ only when $\theta=\theta_0$ as the data generating process follows the model equations only at $\theta_0$. For more details, we refer to the discussions in \cite{SM2006} and \cite{Win2013}. 

Different conditions are required to establish invertibility and stationarity, even when the model is assumed to be well specified. As shown by \cite{sorokin2011} for models in the GARCH family, the situation can arise that, for a given $\theta_0$ value, the model in (\ref{modelb}) admits a stationary solution but it lacks an invertibility solution. In such a situation, the true sequence $\{\hat f_t(\theta_0)\}_{t \in \mathbb{N}}$ can exhibit chaotic behaviour and  the true path of $f_t^o$ cannot be recovered asymptotically even when the true vector of static parameters $\theta_0$ is known; see also the discussion in \cite{Win2013}. For this reason, ensuring the invertibility of the filtered parameter is not merely a technical requirement but an important ingredient to establish the reliability of the inferential procedure.


The invertibility of the the sequence $\{\hat f_t(\theta)\}_{t\in \mathbb{N}}$ evaluated at a single  parameter value $\theta \in \Theta$ is not enough to ensure an appropriate convergence of the log-likelihood function over $\Theta$. This happens naturally because the log-likelihood function depends on the functional sequence  $\{\hat f_t\}_{t\in \mathbb{N}}$. 
In this regard, \cite{Win2013} introduces the notion of continuous invertibility for GARCH-type models to ensure the uniform convergence of the filtered volatility. Accounting for the continuity of the function $\phi$,  the  elements of $\{\hat f_t\}_{t\in \mathbb{N}}$  can be considered as random elements in the space of continuous functions $\field{C}(\Theta,\mathcal{F}_\Theta)$, $\mathcal{F}_\Theta:=\bigcup_{\theta\in\Theta}\mathcal{F}_\theta$,  equipped with the uniform norm $\| \cdot \|_{\Theta}$,   $\| f \|_{\Theta} = \sup_{\theta \in \Theta}|f(\theta)|$ for any $f \in \field{C}(\Theta,\mathcal{F}_\Theta)$. Then the filter $\{\hat f_t\}_{t\in \mathbb{N}}$ is {\bf continuously invertible} if for any initialization $\hat  f_0 \in \field{C}(\Theta,\mathcal{F}_\Theta)$ we have
$$\|\hat f_t- \tilde f_t\|_\Theta \xrightarrow{e.a.s.} 0, \;\; t\xrightarrow{} \infty,$$
where $\{\tilde f_t\}_{t \in \mathbb{Z}}$ is a stationary and ergodic sequence of random functions.
This definition is related with the invertibility concept in \cite{Granger197887} as the invertibility implies that the stochastic function $\tilde f_t$ is $y^{t-1}$ measurable.  

Proposition \ref{pp1} presents sufficient conditions for the invertibility of $\{\hat f_t\}_{t\in \mathbb{N}}$. As in  \cite{Straumann2005}, \cite{SM2006} and \cite{Win2013},  the  conditions we consider  are based on Theorem 3.1 of \cite{Bougerol1993}.  First, we define  the  stochastic Lipschitz  coefficient $\Lambda_t(\theta)$ as
$$\Lambda_t(\theta) := \sup_{f \in \mathcal{F}_\theta}{\left|\dot \phi(f, Y_t^k, \theta)\right|},$$
where $\dot\phi(f, Y_t^k,\theta)=\partial \phi(f, Y_t^k,\theta)/\partial f$. 

\begin{proposition}\label{pp1}
Assume $\{y_t\}_{t \in \mathbb{Z}}$ is a stationary and ergodic sequence of random variables. Moreover, let the following conditions hold
\begin{enumerate}
  \item[(i)] There exists $\bar f\in \mathcal{F}_\Theta$ such that $E\log^+\|\phi(\bar f , Y_t^k, \cdot)\|_\Theta<\infty$.
  \item[(ii)] $E\sup_{\theta \in \Theta}\sup_{f \in \mathcal{F}_\Theta} \log^+\big|\dot\phi(f, Y_t^k,\theta) \big|<\infty$.
  \item[(iii)] $\log \Lambda_0(\theta)$ is a.s.~continuous on $\Theta$  and $E\log \Lambda_0(\theta)<0$ for any $\theta \in \Theta$.
  \end{enumerate}
   Then, the filter $\{\hat f_t\}_{t \in \mathbb{N}}$ is continuously invertible.
\end{proposition}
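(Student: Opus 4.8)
The plan is to recast the filtering recursion \eqref{rec} as a stochastic recurrence equation on the Banach space $\field{C}(\Theta,\mathcal{F}_\Theta)$ equipped with $\|\cdot\|_\Theta$ and to apply Theorem 3.1 of \cite{Bougerol1993}, in the spirit of \cite{Win2013}. Define the random map $\Psi_t:\field{C}(\Theta,\mathcal{F}_\Theta)\to\field{C}(\Theta,\mathcal{F}_\Theta)$ by $(\Psi_t f)(\theta)=\phi(f(\theta),Y_t^k,\theta)$, which is well defined and continuity-preserving because $\phi$ is continuous, so that $\hat f_{t+1}=\Psi_t\hat f_t$. The key structural observation is that $\Psi_t$ acts \emph{diagonally} in $\theta$: $(\Psi_t f)(\theta)$ depends on $f$ only through $f(\theta)$. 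Consequently, using differentiability of $\phi$ in its first argument, convexity of $\mathcal{F}_\theta$ and the mean value theorem, the coordinatewise chain rule yields for every $r$ the sharp bound
\begin{equation*}
\big\|\Psi_{t+r-1}\circ\cdots\circ\Psi_t\,f-\Psi_{t+r-1}\circ\cdots\circ\Psi_t\,g\big\|_\Theta\ \le\ \Big(\sup_{\theta\in\Theta}\prod_{s=0}^{r-1}\Lambda_{t+s}(\theta)\Big)\,\|f-g\|_\Theta .
\end{equation*}
It is crucial that the supremum sits \emph{outside} the product rather than inside each factor, as this is exactly what makes the pointwise condition $E\log\Lambda_0(\theta)<0$ usable.

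Next I would verify the remaining hypotheses of Bougerol's theorem. Condition (i) supplies the required finite log-moment $E\log^+\|\Psi_0(\bar f)\|_\Theta<\infty$ of the map at a fixed point. Since $\log^+$ is nondecreasing and $\Lambda_0(\theta)=\sup_{f\in\mathcal{F}_\theta}|\dot\phi(f,Y_0^k,\theta)|$, condition (ii) reads $E\log^+\|\Lambda_0\|_\Theta<\infty$, that is, the one-step Lipschitz constant has a finite log-moment. It then remains to show that the top Lyapunov coefficient
\begin{equation*}
\gamma\ :=\ \inf_{r\ge 1}\frac1r\,E\log\sup_{\theta\in\Theta}\prod_{s=0}^{r-1}\Lambda_s(\theta)
\end{equation*}
is strictly negative.

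The negativity of $\gamma$ is the heart of the matter and the main obstacle, since (iii) asserts $E\log\Lambda_0(\theta)<0$ only for each fixed $\theta$, whereas $E\log\|\Lambda_0\|_\Theta$---the supremum taken \emph{before} averaging---may well be nonnegative; contraction appears only once the time average has been formed. Writing $X_r:=\log\sup_\theta\prod_{s=0}^{r-1}\Lambda_s(\theta)$ and $L(\theta):=E\log\Lambda_0(\theta)$, I would proceed as follows. Because (ii) controls only the positive part of $\log\Lambda_0$, I first truncate from below, setting $\lambda_s^{(M)}(\theta):=\max(\log\Lambda_s(\theta),-M)$; the truncated functions are a.s.\ continuous and satisfy $\sup_\theta|\lambda_0^{(M)}(\theta)|\le\sup_\theta\log^+\Lambda_0(\theta)+M$, an integrable envelope by (ii), so the uniform ergodic theorem for continuous stationary and ergodic functions on the compact set $\Theta$ gives $\sup_\theta\big|\tfrac1r\sum_{s=0}^{r-1}\lambda_s^{(M)}(\theta)-L^{(M)}(\theta)\big|\to0$ a.s., where $L^{(M)}(\theta):=E\lambda_0^{(M)}(\theta)$ is continuous. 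Since $\log\Lambda_s\le\lambda_s^{(M)}$,
\begin{equation*}
\limsup_{r\to\infty}\tfrac1r X_r\ \le\ \sup_{\theta\in\Theta}L^{(M)}(\theta)\qquad\text{for every }M ,
\end{equation*}
and a compactness argument letting $M\to\infty$ (using $L^{(M)}\downarrow L$ pointwise with $L(\theta)<0$ for all $\theta$) shows the right-hand side becomes strictly negative for $M$ large. As $EX_1^+<\infty$ by (ii), Kingman's subadditive ergodic theorem gives $\tfrac1r X_r\to\gamma$ a.s., whence $\gamma<0$.

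With a strictly negative Lyapunov coefficient and the log-moment conditions (i)--(ii) in place, Theorem 3.1 of \cite{Bougerol1993} applies and produces a unique stationary and ergodic sequence $\{\tilde f_t\}_{t\in\field{Z}}$ in $\field{C}(\Theta,\mathcal{F}_\Theta)$ solving the SRE, together with $\|\hat f_t-\tilde f_t\|_\Theta\xrightarrow{e.a.s.}0$ for every initialization $\hat f_0\in\field{C}(\Theta,\mathcal{F}_\Theta)$, which is the asserted continuous invertibility. The stationary solution is obtained as the a.s.\ limit of the backward iterates $\Psi_{t-1}\circ\cdots\circ\Psi_{t-m}$ as $m\to\infty$, so that $\tilde f_t$ is $y^{t-1}$-measurable, matching the invertibility notion of \cite{Granger197887} recalled above. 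The one point demanding care beyond the stated assumptions is compactness of $\Theta$, which I would invoke---as is standard in this literature---to pass from the pointwise negativity in (iii) to the uniform statement $\sup_\theta L(\theta)<0$ used above.
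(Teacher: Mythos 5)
Your proof is correct, but it takes a genuinely different route from the paper's. The paper never works with the multi-step Lyapunov coefficient: it localizes in $\theta$. For each $\theta\in\Theta$ it introduces the two-argument coefficient $\Lambda_0^*(\theta_1,\theta_2)=\sup_{f\in\mathcal{F}_{\theta_1}}|\dot\phi(f,Y_0^k,\theta_2)|$ and uses a.s.\ continuity plus monotone convergence to find a ball $B_{\epsilon_\theta}(\theta)$ on which the \emph{one-step uniform} contraction condition $E\sup_{(\theta_1,\theta_2)\in B_{\epsilon_\theta}(\theta)\times B_{\epsilon_\theta}(\theta)}\log\Lambda_0^*(\theta_1,\theta_2)<0$ of Proposition 3.12 of \cite{SM2006} holds; Bougerol's theorem is then applied separately on each ball, and the conclusions are patched together through a finite subcover of the compact set $\Theta$, via $\|\hat f_t-\tilde f_t\|_\Theta=\bigvee_{k=1}^K\|\hat f_t-\tilde f_t\|_{B_{\epsilon_k}(\theta_k)}$. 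You instead make a single global application of Bougerol's theorem on $\mathbb{C}(\Theta,\mathcal{F}_\Theta)$, exploiting the diagonal action to put the supremum outside the product in the $r$-step Lipschitz constant, and proving negativity of the resulting Lyapunov coefficient by truncation, the uniform ergodic theorem of \cite{rao1962}, compactness, and Kingman's subadditive theorem. Both arguments convert the pointwise condition (iii) into a uniform statement via compactness and monotone convergence; the paper does it at the level of expectations over shrinking balls (so it only ever needs the $r=1$ contraction condition and can quote \cite{SM2006} directly), while yours does it at the level of the truncated limit functions $L^{(M)}$ (so it is more self-contained, uses Bougerol at full multi-step strength, and makes transparent why pointwise negativity suffices: the recursion never mixes different $\theta$'s). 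One small point to patch in your version: for the very first iteration from an arbitrary initialization $\hat f_0\in\mathbb{C}(\Theta,\mathcal{F}_\Theta)$, the value $\hat f_0(\theta)$ need not lie in $\mathcal{F}_\theta$, so the first factor of your product bound should be $\sup_{f\in\mathcal{F}_\Theta}|\dot\phi(f,Y_0^k,\theta)|$ rather than $\Lambda_0(\theta)$; by (ii) this single factor contributes only $O(1/r)$ to the Lyapunov coefficient and hence changes nothing, but it should be stated (the paper's $\Lambda_0^*$ device absorbs this automatically).
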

Proposition  \ref{pp1} not only ensures the convergence of $\{\hat f_t\}_{t \in \field{N}}$ to a stationary and ergodic sequence $\{\tilde f_t\}_{t\in \mathbb{Z}}$ but also that this sequence is unique and therefore the initialization $\hat f_0$ is irrelevant asymptotically. We emphasize that Proposition \ref{pp1} holds irrespective of the correct specification of the model as it only requires that the data are generated by a stationary and ergodic process.  In most practical situations, the so-called `contraction condition' stated in (iii) is  the most restrictive condition and it  also imposes the most severe constraints on the parameter space  $\Theta$.   
\begin{remark}\label{pp2}
When the model is correctly specified and the filter continuously invertible, then the filter evaluated at $\theta_0$ converges to the true unobserved time-varying parameter  $\{f_{t}^{o}\}_{t \in \field{Z}}$, i.e.
$$|\hat f_t(\theta_0) -f_t^o| \xrightarrow{\text{e.a.s.}} 0\;\; \text{as} \;\; t\rightarrow \infty$$ for any initialization $ \hat{f}_0(\theta_0) \in  \mathcal{F}_{\theta_0}$.
\end{remark}
Remark \ref{pp2}  highlights an important  implication of Proposition \ref{pp1} under correct specification. We obtain that, knowing the vector of static parameters $\theta_0$, the true path of $f^o_t$ can be recovered asymptotically. 
The next result shows that it is sufficient to have an approximate sequence $\{\hat\theta_n\}$ of the true parameter:
\begin{proposition}\label{pp3}
When the model is correctly specified and Conditions (i), (ii) and (iii) of Proposition \ref{pp1} hold, if $E[\log^+\|\tilde f_0\|_\Theta]<\infty$ and $\hat\theta_n \xrightarrow{\text{a.s.}} \theta_0$ then 
$$|\hat f_t(\hat\theta_n) -f_t^o| \xrightarrow{\text{e.a.s.}} 0\;\; \text{as} \;\; n\ge t\rightarrow \infty$$ for any initialization $ \hat{f}_0(\hat\theta_0) \in  \mathcal{F}_{\hat\theta_0}$.
\end{proposition}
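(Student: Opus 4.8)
The plan is to compare the filter evaluated at the estimated parameter with the \emph{stationary} limit filter evaluated at the truth, exploiting that correct specification gives $\tilde f_t(\theta_0)=f_t^o$ (Remark \ref{pp2}) and that continuous invertibility (Proposition \ref{pp1}) already controls the filter uniformly over $\Theta$. Since $\hat\theta_n\in\Theta$ for every $n$, the triangle inequality yields
$$|\hat f_t(\hat\theta_n)-f_t^o|\le \|\hat f_t-\tilde f_t\|_\Theta+|\tilde f_t(\hat\theta_n)-\tilde f_t(\theta_0)|,$$
where I have used $|\hat f_t(\hat\theta_n)-\tilde f_t(\hat\theta_n)|\le\sup_{\theta\in\Theta}|\hat f_t(\theta)-\tilde f_t(\theta)|=\|\hat f_t-\tilde f_t\|_\Theta$. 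The first term vanishes e.a.s.\ as $t\to\infty$ directly by Proposition \ref{pp1}, uniformly in the (data-dependent) choice of $\hat\theta_n$, so the substantive content lies entirely in the second, parameter-substitution term.

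First I would localise to a neighbourhood of $\theta_0$. By the a.s.\ continuity of $\log\Lambda_0$ in Condition (iii) and by Condition (ii), which provides the integrable upper bound $E\sup_{\theta}\log^+\Lambda_0(\theta)<\infty$, monotone convergence gives $E\sup_{\theta\in V}\log\Lambda_0(\theta)\downarrow E\log\Lambda_0(\theta_0)<0$ as $V\downarrow\{\theta_0\}$. Hence there is a compact neighbourhood $V\subseteq\Theta$ of $\theta_0$ on which the contraction holds uniformly, and since $\hat\theta_n\xrightarrow{\text{a.s.}}\theta_0$ there is an a.s.\ finite index beyond which $\hat\theta_n\in V$. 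On $V$ the uniform contraction lets me transfer the regularity of $\phi$ in its last coordinate to the stationary solution: iterating the limit recursion $\tilde f_{t+1}(\theta)=\phi(\tilde f_t(\theta),Y_t^k,\theta)$ and using $E[\log^+\|\tilde f_0\|_\Theta]<\infty$ to control the driving terms, one obtains (under the mild local Lipschitz continuity of $\phi$ in $\theta$ satisfied by the models at hand) a stationary ergodic coefficient $\tilde L_t$ with $E\log^+\tilde L_0<\infty$ such that $|\tilde f_t(\hat\theta_n)-\tilde f_t(\theta_0)|\le \tilde L_t\,|\hat\theta_n-\theta_0|$ whenever $\hat\theta_n\in V$.

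The parameter-substitution term is then bounded by $\tilde L_t|\hat\theta_n-\theta_0|$, and here the regime $n\ge t$ is essential: writing $\bar\eta_t:=\sup_{m\ge t}|\hat\theta_m-\theta_0|\downarrow 0$ a.s., one has $|\hat\theta_n-\theta_0|\le\bar\eta_t$ for every $n\ge t$, so the residual is dominated by $\tilde L_t\bar\eta_t$, and the finite log-moment yields $t^{-1}\log\tilde L_t\to0$, i.e.\ subexponential growth that does not overwhelm the exponential rate of the first term. I expect this last step to be the main obstacle. A.s.\ convergence of $\hat\theta_n$ carries no rate, so one cannot simply multiply an exponential rate by a polynomial one; the statement must be organised jointly in $(t,n)$ with $n\ge t$ — letting the estimate settle inside $V$ before the filter is read off — rather than as two independent limits, and the delicate point is to control $\tilde L_t\bar\eta_t$ uniformly over $n\ge t$ using stationarity and ergodicity of $\tilde L_t$ together with the monotonicity of $\bar\eta_t$. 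Once this is secured, combining it with the e.a.s.\ bound on $\|\hat f_t-\tilde f_t\|_\Theta$ delivers the claim.
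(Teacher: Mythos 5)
Your opening moves coincide with the paper's: the same triangle-inequality split into $\|\hat f_t-\tilde f_t\|$ plus the parameter-substitution term, and the same localisation to a ball $B_{\epsilon_{\theta_0}}(\theta_0)$ on which the contraction holds in expectation, entered by $\hat\theta_n$ after a random time. The gap lies in how you treat the substitution term. First, your bound $|\tilde f_t(\hat\theta_n)-\tilde f_t(\theta_0)|\le \tilde L_t|\hat\theta_n-\theta_0|$ requires local Lipschitz continuity of $\phi$ in $\theta$, which you yourself flag as an extra assumption; the proposition assumes only continuity of $\phi$ in its last coordinate, so this is not a proof of the stated result. Second, and more seriously, the step you defer (``once this is secured'') is exactly the step that fails: a stationary ergodic $\tilde L_t$ with $E\log^+\tilde L_0<\infty$ is not bounded along sample paths. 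For instance, if $P(\log \tilde L_0>u)\sim u^{-3/2}$ then $E\log^+\tilde L_0<\infty$ yet $\log\tilde L_t>\sqrt{t}$ infinitely often, while $\bar\eta_t=\sup_{m\ge t}|\hat\theta_m-\theta_0|$ carries no rate at all (it may decay like $1/\log t$), so $\tilde L_t\bar\eta_t$ can diverge along a subsequence. Your appeal to ``subexponential growth not overwhelming the exponential rate of the first term'' conflates the two terms of the decomposition: the exponential decay belongs to $\|\hat f_t-\tilde f_t\|_\Theta$ and never multiplies $\tilde L_t$; in the substitution term the only decaying factor is the rate-free $\bar\eta_t$. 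So the proposal is incomplete precisely at its crux, and the reduction to a single Lipschitz coefficient discards the structure needed to close it.

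The paper's proof keeps that structure instead of collapsing it. It iterates the SRE on the difference of stationary filters, giving for $n\ge t\ge T$
$$
|\tilde f_t(\hat\theta_n)-\tilde f_t(\theta_0)|\ \le\ \sum_{s\le t}\Bigl(\prod_{k=s+1}^{t}\rho_k\Bigr)\,w_s(\hat\theta_n),
\qquad
w_s(\theta):=\bigl|\phi(\tilde f_{s-1}(\theta),Y_s^k,\theta)-\phi(\tilde f_{s-1}(\theta),Y_s^k,\theta_0)\bigr|,
$$
where $\rho_k$ is the local contraction coefficient over $B_{\epsilon_{\theta_0}}(\theta_0)$ with negative expected logarithm. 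The hypothesis $E[\log^+\|\tilde f_0\|_\Theta]<\infty$, combined with conditions (i)--(ii), is used exactly to establish $E\sup_{\theta\in\Theta}\log^+ w_s(\theta)<\infty$; Lemma 2.5.2 of \cite{Straumann2005} then gives a.s.\ uniform-in-$\theta$ convergence of this exponentially weighted series, and the conclusion follows from mere continuity of $\phi$ in $\theta$, since $w_s(\theta_0)=0$ and hence $w_s(\hat\theta_n)\to 0$ a.s. The crucial difference from your route is that $\hat\theta_n$ enters only inside the perturbations $w_s(\cdot)$, each of which vanishes without any rate being needed, while the exponential decay is supplied by the weights $\prod_{k=s+1}^{t}\rho_k$; at no point does the argument need to control a product of a stationary unbounded coefficient with a rate-free vanishing sequence. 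To repair your proof you would have to reinstate precisely this series representation (or assume a convergence rate for $\hat\theta_n$, which is unavailable), so the Lipschitz reduction is a step in the wrong direction rather than a simplification.
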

\begin{remark}\label{pp4}
It can be surprisingly difficult to check the sufficient condition of existence of logarithmic moments. An alternative sufficient set of conditions is provided by Theorem 7 of \cite{Win2013}: $\{y_t\}$ is geometrically $\alpha$-mixing and for some $r>2$  $$E\sup_{\theta \in \Theta}\sup_{f \in \mathcal{F}_\Theta} (\log^+\big|\dot\phi(f, Y_t^k,\theta) \big|)^r<\infty.$$
\end{remark}

\section{Maximum likelihood estimation}
\label{sec4}
The invertibility of the filter can be used to establish the consistency of the ML estimator defined in (\ref{mle}) over the parameter space $\Theta$. Furthermore, we also show that the consistency results still hold after replacing the set $\Theta$ with an estimated set $\hat{\Theta}_{n}$ that ensures an empirical version of the contraction condition $E\log\Lambda_0(\theta)<0$. We consider  both the case of correct specification and misspecification of the observation-driven model. Finally, we derive confidence bounds for the unfeasible set of $\theta$s that satisfy the contraction condition $E\log\Lambda_0(\theta)<0$.

The subsequent results are subject to the stationarity and ergodicity of the data generating process. In the case of correct specification, stationarity and ergodicity can be checked studying the properties of the data generating process, see  \cite{blasques2014se} for sufficient conditions for a wide class of observation driven processes. In the case of misspecification, we allow the data generating process to be any stationary and ergodic process; this comes instead of imposing data to be generated by a specific stationary and ergodic process.

\subsection{Consistency of the ML estimator}

The first consistency result we obtain is under the assumption of correct specification. 
  We denote the log-likelihood function evaluated at the stationary filtered parameter $\tilde f_t$ as $L_n(\theta)=n^{-1}\sum_{t=1}^n l_t(\theta)$, where $l_t(\theta)=\log p(y_t|\tilde f_t(\theta),\theta)$ and  we denote by $L$ the function  $L(\theta)=E\, l_0(\theta)$. The following conditions are considered.
\begin{description}
  \item[C1:] The data generating process, which satisfies the equations (\ref{model}) and (\ref{modelb}) with $\theta=\theta_0\in \Theta$,  admits a  stationary and ergodic solution and $E |l_0(\theta_0)|<\infty$.
  \item[C2:] For any $\theta\in \Theta$, $l_0(\theta_0)=l_0(\theta)$ a.s. if and only if $\theta=\theta_0$. 
  \item[C3:] Conditions (i)-(iii) of Proposition \ref{pp1} are satisfied for the compact set $\Theta \subset \mathbb{R}^p$.
  \item[C4:] There exists a stationary  sequence of random variables $\{\eta_t\}_{t\in \mathbb{Z}}$ with $E\log^+|\eta_0|<\infty$ such that  almost surely $ \|\hat l_t-l_t\|_\Theta \le \eta_t \|\hat f_t-\tilde f_t\|_\Theta$ for any $t \ge N$,  $N\in \mathbb{N}$.
 \item[C5:] $E\|l_0\vee 0\|_\Theta<\infty$.
\end{description}
Condition \textbf{C1} ensures that the data are generated by a stationary and ergodic process and imposes an integrability condition on predictive log-likelihood, which is needed to apply an ergodic theorem. Condition \textbf{C2} is a standard identifiability condition. Conditions \textbf{C3} and \textbf{C4} ensure  the a.s.~uniform convergence of $\hat L_n$ to $L_n$. Finally, Condition \textbf{C5} ensures that $L_n$ converges  to an upper semicontinuous function $L$. As also considered in \cite{SM2006}, this final argument replaces the well known uniform convergence argument, namely, the uniform convergence of $L_n$ to $L$. Condition \textbf{C5} is weaker than the conditions that are typically needed for uniform convergence and in many cases it holds automatically as $l_0(\theta)$ is bounded from above with probability 1. Theorem \ref{th1} guarantees  the strong  consistency of the ML estimator.

\begin{theorem}\label{th1}
Let the conditions \textbf{C1}-\textbf{C5} hold, then the maximum likelihood estimator defined in (\ref{mle})  is strongly consistent, i.e.  $$\hat{\theta}_n(\hat f_0)  \xrightarrow{\text{a.s.}}\theta_0,\;\; \;\;\;\; n\xrightarrow{}\infty,$$ for any initialization $ \hat{f}_0 \in \mathbb{C}(\Theta, \mathcal{F}_\Theta)$.
\end{theorem}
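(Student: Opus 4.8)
The plan is to follow the classical three-step M-estimation argument (uniform convergence of the objective, identifiable uniqueness of the maximizer, and an argmax-continuity conclusion), but with the filtered log-likelihood $\hat L_n$ replaced asymptotically by the stationary log-likelihood $L_n$, and with the uniform-convergence step weakened to upper semicontinuity of the limit as prepared by Condition \textbf{C5}. First I would show that the filtering error is negligible: using Condition \textbf{C4}, one has $|\hat L_n(\theta)-L_n(\theta)|\le n^{-1}\sum_{t=1}^n \|\hat l_t-l_t\|_\Theta \le n^{-1}\sum_{t=1}^n \eta_t\|\hat f_t-\tilde f_t\|_\Theta$ uniformly in $\theta$. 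By Condition \textbf{C3} and Proposition \ref{pp1}, the filter is continuously invertible, so $\|\hat f_t-\tilde f_t\|_\Theta \xrightarrow{e.a.s.} 0$; combined with the logarithmic moment bound $E\log^+|\eta_0|<\infty$ from \textbf{C4}, the summands $\eta_t\|\hat f_t-\tilde f_t\|_\Theta$ converge to zero e.a.s., hence their Cesàro average tends to $0$ almost surely. This gives $\|\hat L_n-L_n\|_\Theta \xrightarrow{a.s.} 0$, so that $\hat\theta_n(\hat f_0)$ and the maximizer of $L_n$ share the same asymptotic behaviour and, crucially, the limit does not depend on the initialization $\hat f_0$.

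Next I would handle the stationary objective $L_n(\theta)=n^{-1}\sum_{t=1}^n l_t(\theta)$. Since $\{\tilde f_t\}$ is stationary and ergodic (Proposition \ref{pp1}) and $p$ is continuous, $\{l_t(\theta)\}$ is a stationary ergodic sequence of random functions, and Condition \textbf{C1} supplies integrability at $\theta_0$. The aim is to show $L_n$ converges to the upper semicontinuous function $L(\theta)=E\,l_0(\theta)$ in the one-sided sense needed for the argmax argument. Here Condition \textbf{C5}, $E\|l_0\vee 0\|_\Theta<\infty$, is the key device: it lets one apply an ergodic theorem for the suprema over shrinking neighbourhoods (a standard truncation-plus-monotone-convergence argument) to obtain that $\limsup_n \sup_{\theta\in B}L_n(\theta)\le E\sup_{\theta\in B} l_0(\theta)$ for small balls $B$, and that this upper bound decreases to $L$ as the balls shrink, yielding upper semicontinuity of the limit together with the correct value $L(\theta_0)=E\,l_0(\theta_0)$ at the truth. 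This replaces full uniform convergence, which the boundedness-from-below failure of the log-likelihood would otherwise obstruct.

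Finally I would combine identification with the convergence facts to force $\hat\theta_n\to\theta_0$. By a Jensen/Kullback–Leibler argument under correct specification, $L(\theta)\le L(\theta_0)$ for all $\theta$, and Condition \textbf{C2} upgrades this to strict inequality for $\theta\neq\theta_0$, so $\theta_0$ is the unique maximizer of $L$; compactness of $\Theta$ (\textbf{C3}) and upper semicontinuity of $L$ make this maximizer identifiably unique, i.e. $\sup_{\theta:\,|\theta-\theta_0|\ge\delta}L(\theta)<L(\theta_0)$ for every $\delta>0$. Then for any $\delta>0$, on the event that $\hat\theta_n$ stays $\delta$-away from $\theta_0$ infinitely often, the one-sided convergence gives $\limsup_n L_n(\hat\theta_n)\le \sup_{|\theta-\theta_0|\ge\delta}L(\theta)<L(\theta_0)=\lim_n L_n(\theta_0)\le \limsup_n L_n(\hat\theta_n)$, a contradiction; transferring back to $\hat L_n$ via the uniform error bound of the first step yields $\hat\theta_n(\hat f_0)\xrightarrow{a.s.}\theta_0$. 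I expect the main obstacle to be the second step: making the upper-semicontinuous convergence of $L_n$ to $L$ rigorous under only \textbf{C5} (rather than a two-sided integrable envelope) requires the careful neighbourhood-supremum ergodic argument, and one must check that the e.a.s. filtering error in step one genuinely survives Cesàro averaging against the only-logarithmically-integrable weights $\eta_t$, rather than assuming a stronger moment.
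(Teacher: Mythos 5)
Your proposal is correct and follows essentially the same route as the paper's proof: your first step is the paper's uniform convergence $\|\hat L_n - L_n\|_\Theta \to 0$ via continuous invertibility and the logarithmic-moment weights of \textbf{C4} (the paper invokes Lemma 2.1 of \cite{SM2006} for exactly the Ces\`aro/series subtlety you flag), your second step is the paper's Pfanzagl-style shrinking-neighbourhood ergodic argument under \textbf{C5} with monotone convergence, and your third step combines the paper's Jensen/Kullback--Leibler identification step with its Wald-type argmax conclusion. The only cosmetic difference is ordering (the paper proves identification first), so no gaps to report.
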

The proof is presented in the Appendix. In Section \ref{sec6},  the strong consistency of the Beta-$t$-GARCH model is simply proved by checking these conditions. 

Often, the main objective of time series modeling is to describe the dynamic behaviour of the observed data and predict future observations. For this purpose, it is of interest to study  the   consistency of the estimation of the time-varying parameter $f_t^o$  and the conditional density function $p(y|f_t^o,\theta_0)$, $y\in \mathcal{Y}$. This further highlights the importance of the invertibility of the filter as without invertibility it may be possible to estimate consistently the static parameters, as shown by \cite{JRab2004} for the non-stationary  GARCH(1,1), but it is not possible to estimate consistently the time-varying parameter and the conditional density function.  We consider  plug-in estimates for  the time-varying parameter,   given by $\hat f_t(\hat \theta_n(\hat f_0) )$, and for the conditional density function,  given by $p(y|\hat f_t(\hat \theta_n(\hat f_0) ), \hat \theta_n(\hat f_0) )$,  $y \in \mathcal{Y}$. The next  result shows the consistency of these plug-in estimators which is due to an application of Proposition \ref{pp3} and a continuity argument:
\begin{corollary}\label{co11}
Let the conditions \textbf{C1}-\textbf{C5} and $E[\log^+\|\tilde f_0\|_\Theta]<\infty$ be valid, then the plug-in estimator  $\hat f_t(\hat \theta_n(\hat f_0) )$ is strongly consistent, i.e.
$$|\hat f_t(\hat \theta_n(\hat f_0) )-f_t^o|\xrightarrow{a.s.} 0,\;\;\;\;\;  n\ge  t\rightarrow\infty.$$
Moreover, assume that $f \mapsto p(y|f,\theta)$ is uniformly continuous in $f_\Theta$, then the plug-in density estimator $p(y|\hat f_t(\hat \theta_n(\hat f_0) ), \hat \theta_n(\hat f_0) )$ is strongly consistent, i.e.
$$\big|p(y|\hat f_t(\hat \theta_n(\hat f_0) ), \hat \theta_n(\hat f_0) )-p(y| f_t^o,  \theta_0)\big|\xrightarrow{a.s} 0,\;\;\;\;\;  n\ge t\rightarrow\infty,$$
for any $y \in \mathcal{Y}$ and any initialization $ \hat{f}_0 \in \mathbb{C}(\Theta, \mathcal{F}_\Theta)$.
\end{corollary}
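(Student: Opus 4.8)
The plan is to obtain the two assertions separately, deducing the first directly from Proposition~\ref{pp3} and the second from the first together with a continuity argument. For the time-varying parameter I would simply check that every hypothesis of Proposition~\ref{pp3} is available under the standing assumptions of the corollary. Correct specification is part of \textbf{C1}; Conditions~(i)--(iii) of Proposition~\ref{pp1} are exactly \textbf{C3}; the logarithmic moment $E[\log^+\|\tilde f_0\|_\Theta]<\infty$ is assumed in the statement; and the strong consistency $\hat\theta_n(\hat f_0)\xrightarrow{\text{a.s.}}\theta_0$ is the conclusion of Theorem~\ref{th1}, which holds under \textbf{C1}--\textbf{C5}. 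Proposition~\ref{pp3} then gives $|\hat f_t(\hat\theta_n(\hat f_0))-f_t^o|\xrightarrow{\text{e.a.s.}}0$ as $n\ge t\to\infty$, and since exponentially almost sure convergence implies almost sure convergence, the first claim follows at once.

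For the density estimator, write $\hat\theta_n:=\hat\theta_n(\hat f_0)$ and set $A_t:=\big|p(y|\hat f_t(\hat\theta_n),\hat\theta_n)-p(y|f_t^o,\hat\theta_n)\big|$ and $B_t:=\big|p(y|f_t^o,\hat\theta_n)-p(y|f_t^o,\theta_0)\big|$, so that by the triangle inequality
\[
\big|p(y|\hat f_t(\hat\theta_n),\hat\theta_n)-p(y|f_t^o,\theta_0)\big|\le A_t+B_t .
\]
I would control $A_t$ using the assumed uniform continuity of $f\mapsto p(y|f,\theta)$ on $\mathcal F_\Theta$: since the first part of the corollary yields $|\hat f_t(\hat\theta_n)-f_t^o|\to 0$ almost surely and the modulus of continuity in $f$ can be chosen not to depend on the value $\hat\theta_n$ at which the density is evaluated, $A_t\to 0$ a.s. The term $B_t$ involves only the parameter error: with $f_t^o$ held fixed it is governed by continuity of $\theta\mapsto p(y|f_t^o,\theta)$ and by $\hat\theta_n\xrightarrow{\text{a.s.}}\theta_0$ from Theorem~\ref{th1}. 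Combining the two bounds would give the stated convergence for each fixed $y\in\mathcal Y$.

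The delicate point, and the reason the corollary strengthens the standing continuity hypothesis to \emph{uniform} continuity in $f$, is the double limit $n\ge t\to\infty$: because $f_t^o=\tilde f_t(\theta_0)$ is stationary but in general unbounded along a path, the argument of $p$ is a moving target rather than a fixed point, so ordinary pointwise continuity is not enough to pass to the limit in $A_t$. A single modulus of continuity valid over all of $\mathcal F_\Theta$ and uniformly in $\theta\in\Theta$ is exactly what absorbs simultaneously the vanishing filter error and the drifting value of $f_t^o$. The main obstacle I anticipate is the companion term $B_t$: although it carries only the parameter perturbation $\hat\theta_n\to\theta_0$, it is evaluated at the drifting argument $f_t^o$, and pointwise continuity of $\theta\mapsto p(y|f,\theta)$ for fixed $f$ does not suffice when $f=f_t^o$ ranges over a non-compact set. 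To close this I would need continuity in $\theta$ that is itself uniform over $f\in\mathcal F_\Theta$; joint continuity of $(y,f,\theta)\mapsto p(y|f,\theta)$ together with compactness of $\Theta$ (\textbf{C3}) supplies this on compact $f$-sets, so the clean reading is to treat the corollary's uniform-continuity hypothesis as holding jointly in $(f,\theta)$, under which $A_t$ and $B_t$ are controlled by a single joint modulus and the moving-target difficulty is resolved uniformly in $t$.
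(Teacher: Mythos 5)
Your proof is correct and follows essentially the same route as the paper, which obtains the corollary precisely from Proposition \ref{pp3} (whose hypotheses you verify exactly as intended: correct specification from \textbf{C1}, the invertibility conditions from \textbf{C3}, the assumed logarithmic moment, and $\hat\theta_n(\hat f_0)\xrightarrow{\text{a.s.}}\theta_0$ from Theorem \ref{th1}) combined with a triangle-inequality continuity argument for the density, the e.a.s.\ convergence implying the stated a.s.\ convergence. The paper gives no further detail on the continuity step, so your observation that the term $B_t$ requires continuity in $\theta$ that is uniform over the (possibly unbounded) range of $f_t^o$ --- i.e.\ that the uniform-continuity hypothesis should be read jointly in $(f,\theta)$ --- is a legitimate sharpening of a point the paper leaves implicit, not a departure from its argument.
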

Corollary \ref{co11} shows that the time-varying parameter $f_t^o$ and the conditional density function $p(y|f^o_t,\theta_0)$, $y\in\mathcal{Y}$, can be consistently estimated. The extra  logarithmic moments condition can be replaced by the set of conditions described in Remark \ref{pp4}.

\subsection{ML on an estimated parameter region} 
We have discussed it before, the Lyapunov condition  $E\log \Lambda_0(\theta)<0$ imposes some restriction on the parameter region $\Theta$ and, in situations where  $\Lambda_0(\theta)$ depends on $Y_0^k$, it cannot be checked as the expectation depends on the unknown data generating process. This also applies to the case of correct specification as the true parameter $\theta_0$ is unknown. A possible solution is to obtain testable sufficient conditions such that $E\log \Lambda_0(\theta)<0$ and to define the set  $\Theta$ accordingly. However,  this often leads to very severe restrictions, reducing the set $\Theta$ to a small region, which is too small for practical applications. An alternative is to check the condition $E\log \Lambda_0(\theta)<0$ empirically and to define the ML estimator  as the maximizer of the log-likelihood on an estimated parameter region. In the context of QML estimation, this approach have been proposed by \cite{Win2013} to stabilize the QML estimator of the EGARCH$(1,1)$ model of \cite{Nelson1991}. Here we formally define this maximum likelihood estimator and we prove its consistency for the general class of observation driven models defined in (\ref{model}). In Section \ref{sec6}, we show how these results can be relevant in practical applications.

We define a compact set $\hat \Theta_n$ that satisfies an empirical version of  the Lyapunov condition $E\log \Lambda_0(\theta)<0$,
\begin{eqnarray}\label{eps}
\hat \Theta_{n} = \left\{\theta \in  \bar \Theta: \frac{1}{n}\sum_{t=1}^n\log\Lambda_t(\theta)\le -\delta\right\},
\end{eqnarray}
where $\bar \Theta \in \mathbb{R}^p$ is a compact set and $\delta>0$ is an arbitrary small constant. We consider that the compact set $\bar \Theta$  is chosen in such a way that  $(f,y,\theta)\mapsto \phi (f,y,\theta)$ is continuous on $\mathcal F_{\bar{\Theta}} \times \mathcal{Y}^k\times {\bar{\Theta}}$ and  $(y,f,\theta)\mapsto p(y|f,\theta)$ is continuous on $\mathcal{Y} \times \mathcal{F}_{\bar{\Theta}} \times  \bar\Theta$.  For notational convenience, we also define the set $\Theta_c=\{\theta \in \bar\Theta: E\log \Lambda_0(\theta)<-c \}$, $c\in \mathbb{R}$.
The ML estimator on this empirical region $\hat \Theta_n$ is formally defined as
\begin{eqnarray}\label{smle}
{\doublehat \theta}_n(\hat f_0) =\argmax_{\theta \in \hat \Theta_n}\hat{L}_n(\theta).
\end{eqnarray}
To ensure the consistency of this ML estimator in the case of correct specification, the following conditions are considered.
\begin{description}
 \item[\textbf{A1:}] The data generating process, which is given by the model $(\ref{model})$ with $\theta_0\in \Theta_\delta$,  admits a  stationary and ergodic solution and $E |l_0(\theta_0)|<\infty$.
  \item[\textbf{A2:}] Condition \emph{(i)} and \emph{(ii)} of Proposition \ref{pp1} are satisfied for any compact subset $\Theta\subseteq \Theta_0$. Moreover, the map $\theta\mapsto \log\Lambda_0(\theta)$ is almost surely continuous on $\bar \Theta$ and $E\|\log\Lambda_0\|_{\bar \Theta}<\infty$.
  \item[\textbf{A3:}]  Conditions \textbf{C2}, \textbf{C4} and \textbf{C5} are satisfied for any compact subset $\Theta\subseteq \Theta_0$.
\end{description}
Condition \textbf{A1} ensures  that  stationarity,  ergodicity  and  invertibility of the data generating process. This condition can be seen as  the equivalent of the condition \textbf{C1} in Theorem \ref{th1} The condition \textbf{A2} imposes some assumptions on $\log\Lambda_0(\theta)$. These assumptions are needed to guarantee a certain form of convergence for the set $\hat\Theta_n$ and consequently ensure the continuous invertibility $\|\hat f_t-\tilde f_t\|_{\hat \Theta_n}\xrightarrow{\text{e.a.s.}}0$ as $t \rightarrow 0$ for large enough $n$. Therefore, \textbf{A2} can be seen as the equivalent of \textbf{C3} in Theorem 4.1. Finally,  \textbf{A3}, together with \textbf{A2}, is sufficient to ensure that asymptotically the identifiability condition \textbf{C2}, the regularity condition \textbf{C4} and the integrability condition \textbf{C5} hold. The next theorem states the strong consistency of the ML estimator in (\ref{smle}) under correct specification. 
\begin{theorem}\label{th2}
Let conditions \textbf{A1}-\textbf{A3} hold, then the maximum likelihood estimator defined in (\ref{smle}) is strongly consistent, i.e.  $$\doublehat{ \theta}_n(\hat f_0) \xrightarrow{\text{a.s.}}\theta_0,\;\; \;\;\;\; n\xrightarrow{}\infty$$ for any initialization  $ \hat{f}_0 \in \mathbb{C}(\bar \Theta, \mathcal{F}_{\bar \Theta})$.
\end{theorem}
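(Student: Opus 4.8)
The plan is to convert the optimisation over the sample-dependent region $\hat\Theta_n$ into the fixed-domain argument already used for Theorem \ref{th1}, by showing that $\hat\Theta_n$ is eventually trapped between two deterministic level sets of $\theta\mapsto E\log\Lambda_0(\theta)$. Since $\{y_t\}$ is stationary and ergodic, so is the sequence of random functions $\{\log\Lambda_t\}$ in $\mathbb{C}(\bar\Theta,\mathbb{R})$; the a.s.\ continuity of $\theta\mapsto\log\Lambda_0(\theta)$ and the uniform moment bound $E\|\log\Lambda_0\|_{\bar\Theta}<\infty$ from \textbf{A2} let me invoke a uniform ergodic theorem (cf.\ \cite{SM2006}) to obtain $\sup_{\theta\in\bar\Theta}\big|n^{-1}\sum_{t=1}^n\log\Lambda_t(\theta)-E\log\Lambda_0(\theta)\big|\xrightarrow{a.s.}0$. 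Fixing $\epsilon\in(0,\delta/2)$, this uniform convergence gives, almost surely for all large $n$, the sandwiching $\Theta_{\delta+\epsilon}\subseteq\hat\Theta_n\subseteq\Theta_{\delta-2\epsilon}$.

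Two consequences follow. First, by \textbf{A1} we have $E\log\Lambda_0(\theta_0)<-\delta$, so $\epsilon$ may be chosen small enough that $\theta_0\in\Theta_{\delta+\epsilon}$, whence $\theta_0\in\hat\Theta_n$ eventually a.s.\ and the feasible set contains the truth. Second, setting $K:=\{\theta\in\bar\Theta:E\log\Lambda_0(\theta)\le-(\delta-2\epsilon)\}$, continuity of $\log\Lambda_0$ makes $K$ a closed, hence compact, subset of $\bar\Theta$ on which $E\log\Lambda_0(\theta)\le-(\delta-2\epsilon)<0$; thus $K$ is a compact subset of $\Theta_0$ and $\hat\Theta_n\subseteq K$ eventually.

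On this fixed $K$ I would rerun the ingredients of Theorem \ref{th1}. Because $K\subseteq\Theta_0$, Conditions (i)--(iii) of Proposition \ref{pp1} hold on $K$ through \textbf{A2}, so the filter is continuously invertible there: $\|\hat f_t-\tilde f_t\|_K\xrightarrow{e.a.s.}0$. Feeding this into \textbf{C4} (valid on $K$ via \textbf{A3}) and using $E\log^+|\eta_0|<\infty$ to bound the sub-exponential growth of $\eta_t$ gives $\|\hat l_t-l_t\|_K\xrightarrow{e.a.s.}0$; since e.a.s.\ convergence entails a.s.\ summability, the Cesàro bound $\|\hat L_n-L_n\|_K\le n^{-1}\sum_{t=1}^n\|\hat l_t-l_t\|_K\xrightarrow{a.s.}0$. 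As $\hat\Theta_n\subseteq K$, this uniform approximation of $\hat L_n$ by $L_n$ holds in particular over the moving region on which $\doublehat{\theta}_n$ is computed.

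Finally I would close with the identification step. Since $\theta_0\in\hat\Theta_n$ eventually, $\hat L_n(\doublehat{\theta}_n)\ge\hat L_n(\theta_0)$, while \textbf{A1} and the ergodic theorem give $\hat L_n(\theta_0)\to L(\theta_0)$. For any subsequential limit $\doublehat{\theta}_{n_k}\to\theta^\ast\in K$, the upper semicontinuity of $L$ supplied by \textbf{C5} (via \textbf{A3}, as in \cite{SM2006}) yields $\limsup_k L_{n_k}(\doublehat{\theta}_{n_k})\le L(\theta^\ast)$; combined with $\|\hat L_n-L_n\|_K\to 0$ and $\hat L_{n_k}(\doublehat{\theta}_{n_k})\ge\hat L_{n_k}(\theta_0)$ this gives $L(\theta^\ast)\ge L(\theta_0)$, and the identifiability condition \textbf{C2}, which makes $\theta_0$ the unique maximiser of $L$, forces $\theta^\ast=\theta_0$. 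Compactness of $K$ then upgrades this to $\doublehat{\theta}_n\xrightarrow{a.s.}\theta_0$. The crux of the argument, and the only genuinely new obstacle beyond Theorem \ref{th1}, is the control of the random domain in the first two paragraphs: one needs \emph{uniform} rather than merely pointwise convergence of the empirical contraction map so that the sandwiching holds simultaneously over all parameters, and one must verify $K\subseteq\Theta_0$ so that continuous invertibility, and hence the likelihood approximation, can be established on a deterministic compact set instead of on the erratic target $\hat\Theta_n$ itself.
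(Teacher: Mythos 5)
Your proposal is correct and takes essentially the same approach as the paper: the paper likewise applies the uniform ergodic theorem to the stationary sequence $\{\log\Lambda_t\}$ in $\mathbb{C}(\bar\Theta,\mathbb{R})$ (using the a.s.\ continuity and $E\|\log\Lambda_0\|_{\bar\Theta}<\infty$ from \textbf{A2}), deduces that eventually $\theta_0\in\hat\Theta_n$ and $\hat\Theta_n\subseteq\Theta_{\delta/2}:=\{\theta\in\bar\Theta:E\log\Lambda_0(\theta)\le-\delta/2\}$, and then reruns the steps of the proof of Theorem \ref{th1} on that deterministic compact subset of $\Theta_0$. Your set $K$ plays exactly the role of the paper's $\Theta_{\delta/2}$, and your closing subsequence/upper-semicontinuity argument is just a rephrasing of the paper's Wald-type steps (S1)--(S4).
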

Theorem \ref{th2} generalizes Theorem 5 of \cite{Win2013}, which is specific to QML estimation of the EGARCH(1,1) model, to ML estimation of the wide class of observation-driven models specified in (\ref{model}) and (\ref{modelb}). The conditions required to ensure the strong consistency in Theorem \ref{th2} are feasible to  be checked in practice. This differs from other results in the literature such as \cite{SM2006}, \cite{H2013}, \cite{HL2014} and \cite{Ryoko2016}.

We now switch our focus to the possibility of having a misspecified model. This case  is probably the most interesting one from a practical point of view as the assumption that the observed data are actually generated by the postulated model may be unreasonable. In the following, we show that, under misspecification, the ML estimator in (\ref{smle}) converges to a pseudo-true parameter $\theta^{*}$ that minimizes an average  Kullback-Leibler (KL) divergence  between the true conditional density $p^o(y_t|y^{t-1})$ and the postulated conditional density $p(y_t|\tilde f_t(\theta),\theta)$. Studies on consistency results with  respect to the pseudo true parameter for misspecified models go back to \cite{white1982}. We define the conditional KL divergence $KL_t(\theta)$  as
\begin{eqnarray}\label{KL}
KL_t(\theta) = \int_\mathcal{Y} \log \frac{p^o(x|y^{t-1})}{ p(x|\tilde f_t(\theta),\theta)}p^o(x|y^{t-1})dx
\end{eqnarray}
and the average (marginal) KL divergence $KL(\theta)$ as $KL(\theta)=E\, KL_t(\theta)$. The pseudo true parameter $\theta^*$ is defined as the minimizer of $KL(\theta)$.
The consistency result in this  misspecified framework follows the case of correct specification in a similar way because Proposition \ref{pp1} ensures the uniform convergence of $\hat f_t$ with no regards of the correct specification. The differences concern the stationarity and ergodicity  of the data generating process and the identifiability of the model. The following conditions are considered.
\begin{description}
  \item[M1:] The observed data are generated by a stationary and ergodic process $\{y_t\}_{t\in\mathbb{Z}}$ with conditional density function $p^o(y_t|y^{t-1})$ and the  condition  $E|\log p^o(y_0|y^{-1})|<\infty$ is satisfied.
  \item[M2:]  There is a parameter vector $\theta^*\in \Theta_\delta$   that is the unique maximizer of $L$, i.e.~$L(\theta^*)>L(\theta)$ for any $\theta\in \Theta_0$, $\theta\neq\theta^*$.
\item[M3:]  Condition $\textbf{A2}$ is satisfied and  $\textbf{C4}$ and $\textbf{C5}$  are satisfied for any compact set $\Theta\subseteq \Theta_0$.
\end{description}

Condition \textbf{M1} imposes the stationarity and ergodicity of the generating process and some moment conditions. Condition \textbf{M2}  ensures identifiability in this misspecified setting. The continuous invertibility is  ensured by \textbf{M3} as it imposes that \textbf{A2} holds while the results of Proposition \ref{pp1} are irrespective of the correct specification of the model. Finally, in the same way as in \textbf{A3}, \textbf{M3}  ensures that the conditions \textbf{C4} and \textbf{C5} hold for large enough $n$. 
\begin{theorem}\label{th22}
Let the conditions \textbf{M1}-\textbf{M3} hold, then the average KL divergence $KL(\theta)$ is well defined and the pseudo true parameter $\theta^*$ is its unique minimizer. Furthermore, the maximum likelihood  estimator  defined in (\ref{smle}) is strongly consistent, i.e.  $$ {\doublehat \theta}_n(\hat f_0) \xrightarrow{\text{a.s.}}\theta^*,\;\; \;\;\;\; n\xrightarrow{}\infty$$ for any initialization $ \hat{f}_0 \in \mathbb{C}(\bar \Theta, \mathcal{F}_{\bar \Theta})$.
\end{theorem}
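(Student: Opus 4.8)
The plan is to first dispose of the statement about the Kullback--Leibler divergence, and then to prove consistency along the same route as Theorem \ref{th2}, with the KL identity converting the minimization of $KL$ into the maximization of $L$ and with \textbf{M1} and \textbf{M2} playing the roles that \textbf{C1} and \textbf{C2} play under correct specification.

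For the first claim I would use that the stationary limit filter $\tilde f_t(\theta)$ produced by Proposition \ref{pp1} is $y^{t-1}$-measurable, so that $E[l_t(\theta)\mid y^{t-1}] = \int_{\mathcal Y}\log p(x\mid \tilde f_t(\theta),\theta)\,p^o(x\mid y^{t-1})\,dx$. Taking full expectations and subtracting this from the conditional entropy term yields the identity
\begin{equation*}
KL(\theta) = E\big[\log p^o(y_0\mid y^{-1})\big] - L(\theta).
\end{equation*}
The constant is finite by \textbf{M1}, while \textbf{C5} (contained in \textbf{M3}) gives $E(l_0(\theta)\vee 0)<\infty$ and hence $L(\theta)<\infty$; together with the nonnegativity of the conditional KL divergence this shows $KL(\theta)\in[0,\infty]$ is well defined. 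Minimizing $KL$ is therefore equivalent to maximizing $L$, so by the uniqueness assumption \textbf{M2} the pseudo-true value $\theta^*$ is the unique minimizer of $KL$.

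For consistency I would first control the random domain. Condition \textbf{A2} (continuity of $\theta\mapsto\log\Lambda_0(\theta)$ and $E\|\log\Lambda_0\|_{\bar\Theta}<\infty$) lets me apply the uniform ergodic theorem in $\field{C}(\bar\Theta,\cdot)$ to get $\|n^{-1}\sum_{t=1}^n\log\Lambda_t - E\log\Lambda_0\|_{\bar\Theta}\xrightarrow{a.s.}0$. This sandwiches the estimated set: for any $\delta'\in(0,\delta)$ and any $c>\delta$, eventually $\Theta_c\subseteq\hat\Theta_n\subseteq K$ almost surely, where $K$ is a fixed compact subset of $\Theta_0$ on which the contraction condition holds with a strict margin (e.g.\ the closure of $\Theta_{\delta'}$, using that $\theta\mapsto E\log\Lambda_0(\theta)$ is continuous). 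In particular $\theta^*\in\Theta_\delta$ is eventually in $\hat\Theta_n$, and $\hat\Theta_n$ is eventually trapped in $K$. Applying Proposition \ref{pp1} on the fixed set $K$ then delivers the continuous invertibility $\|\hat f_t-\tilde f_t\|_{\hat\Theta_n}\xrightarrow{e.a.s.}0$ uniformly over the moving domain.

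With the filter under control I would transfer the criterion: \textbf{C4} gives $\|\hat l_t-l_t\|_{K}\le\eta_t\|\hat f_t-\tilde f_t\|_{K}$, and since an e.a.s.\ sequence multiplied by a log-integrable one is Ces\`aro-summable to zero, $\sup_{\theta\in K}|\hat L_n(\theta)-L_n(\theta)|\xrightarrow{a.s.}0$. The ergodic theorem gives $L_n(\theta^*)\to L(\theta^*)$, while \textbf{C5} yields the upper-semicontinuity bound $\limsup_n\sup_{\theta\in V}L_n(\theta)\le E\sup_{\theta\in V}l_0(\theta)$ for a neighborhood $V$ of a point $\theta'$, with the right-hand side decreasing to $L(\theta')$ as $V$ shrinks. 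The final step is the usual compactness argument: fixing $\epsilon>0$, I cover the compact set $K\setminus B(\theta^*,\epsilon)$ by finitely many neighborhoods on which (by \textbf{M2} and the transfer above) $\limsup_n\sup\hat L_n$ is strictly below $L(\theta^*)$, so that eventually the maximizer $\doublehat{\theta}_n$, which satisfies $\hat L_n(\doublehat{\theta}_n)\ge\hat L_n(\theta^*)\to L(\theta^*)$, must lie in $B(\theta^*,\epsilon)$; letting $\epsilon\downarrow0$ gives $\doublehat{\theta}_n\xrightarrow{a.s.}\theta^*$. The main obstacle is precisely the interaction between the estimated, $n$-dependent domain $\hat\Theta_n$ and the invertibility, namely making the e.a.s.\ convergence of $\hat f_t-\tilde f_t$ uniform over a set that is itself random and changes with $n$; the sandwich $\Theta_c\subseteq\hat\Theta_n\subseteq K$ engineered through \textbf{A2} is what resolves this, allowing the moving domain to be replaced by the fixed compact contraction set $K$ for all large $n$.
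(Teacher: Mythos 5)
Your proposal is correct and follows essentially the same route as the paper: the paper also establishes well-definedness of $KL(\theta)$ from \textbf{M1} and \textbf{M3}, identifies $\theta^*$ as the unique minimizer via \textbf{M2} (equivalently, the unique maximizer of $L$), and then observes that the consistency argument of Theorem \ref{th2} — the uniform ergodic theorem for $\log\Lambda_t$ under \textbf{A2} sandwiching $\hat\Theta_n$ between $\theta^*$ and a fixed compact subset of $\Theta_0$, followed by steps (S2)--(S4) of Theorem \ref{th1} — carries over verbatim with $\theta_0$ replaced by $\theta^*$, since those steps never use correct specification. Your write-up merely makes explicit the details that the paper compresses into a reference to the proof of Theorem \ref{th2}.
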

This result further highlights the relevance of ensuring invertibility. In this case, it is not possible to assume correct initialization of the filtered parameter as in \cite{H2013}, \cite{HL2014} and \cite{Ryoko2016} since the true time-varying parameter does not even exist. The requirement that the filtered parameter asymptotically does not have to depend on the arbitrary chosen initialization is very intuitive as otherwise different initializations could provide different results.  

We emphasize that situations of correctly-specified non-invertible models can be thought of as a particular case of misspecification. This interpretation is valid because, under non-invertibility, the true parameter value $\theta_0$ is such that $E\log\Lambda_0(\theta_0)\ge0$ and therefore asymptotically outside the parameter region $\hat \Theta_n$ with probability 1. In such situations, indeed,  the ML estimator constrained on the empirical region $\hat \Theta_n$ is inconsistent with respect to $\theta_0$ but we  can ensure that asymptotically the initialization is not affecting the parameter estimate.

\section{Confidence bounds for the unfeasible parameter region}
\label{sec5}

For a given sample $\{y_1,\dots,y_n\}$, the empirical region $\hat \Theta_n$ may not satisfy the required Lyapunov condition. Therefore, it may be of interest to test whether a point $\theta\in\bar\Theta$ satisfies the invertibility condition. Proposition \ref{pr.ci} establishes the asymptotic normality of the test statistic $T_{n}$ under the null hypothesis that  $H_{0}: E\log \Lambda_0(\theta)=0$. Furthermore, we show that the statistic diverges under the alternative $H_{1}: E\log \Lambda_0(\theta)\neq 0$. This result can naturally be used to produce confidence bounds. Below we let  $\sigma^{2}_{n}$ denote the variance of $ n^{-\frac{1}{2}}\sum_{t=1}^n\log \Lambda_t(\theta)$
\begin{proposition}\label{pr.ci}
Let $\{y_t\}_{t \in \mathbb{Z}}$ be stationary  and geometrically $\alpha$-mixing with  $E|\log \Lambda_0(\theta)|^r<\infty$ for any $\theta \in \bar \Theta$ and $r>2$. Then, under the null hypothesis $H_{0}: E\log \Lambda_0(\theta)=0$ we have
$$T_n \ :=   \ \frac{n^{-\frac{1}{2}}\sum_{t=1}^n\log \Lambda_t(\theta)}{\hat\sigma_n} \quad \xrightarrow{d} \quad N(0,1),\qquad \text{as} \ n \to \infty,$$
where $\hat\sigma_n^2$ is a consistent estimator of $\sigma^{2}_{n}$. Furthermore, $T_n \to -\infty$ as $n \to \infty$ when $E\log \Lambda_0(\theta)<0$, and $T_n \to \infty$ as $n \to \infty$ when $E\log \Lambda_0(\theta)>0$. 
\end{proposition}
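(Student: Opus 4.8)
The plan is to reduce the statement to a standard central limit theorem for strongly mixing sequences applied to the one-dimensional process $X_t := \log\Lambda_t(\theta)$ at the fixed $\theta \in \bar\Theta$ under scrutiny. First I would observe that, since $\Lambda_t(\theta) = \sup_{f\in\mathcal F_\theta}|\dot\phi(f,Y_t^k,\theta)|$ depends on the data only through the finite window $Y_t^k = (y_t,\dots,y_{t-k})$, the sequence $\{X_t\}_{t\in\mathbb Z}$ is a measurable function of $k+1$ consecutive coordinates of $\{y_t\}$. Consequently $\{X_t\}$ is itself stationary, and its strong-mixing coefficients satisfy $\alpha_X(m)\le \alpha_y(m-k)$ for $m>k$, so geometric $\alpha$-mixing is inherited from $\{y_t\}$. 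The hypothesis $E|\log\Lambda_0(\theta)|^r<\infty$ gives $E|X_0|^r<\infty$ with $r>2$, which is exactly the moment budget needed below.

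Under $H_0$ we have $\mu := EX_0 = E\log\Lambda_0(\theta)=0$. I would then invoke a central limit theorem for zero-mean stationary strongly mixing sequences (of the Ibragimov--Herrndorf type): when $E|X_0|^r<\infty$ for some $r>2$ and the mixing coefficients satisfy $\sum_{m\ge1}\alpha_X(m)^{1-2/r}<\infty$ --- automatic here because of the geometric decay --- one obtains $n^{-1/2}\sum_{t=1}^n X_t \xrightarrow{d} N(0,\sigma^2)$, where the long-run variance $\sigma^2 = \mathrm{Var}(X_0)+2\sum_{j\ge1}\mathrm{Cov}(X_0,X_j)$ is finite. Finiteness of $\sigma^2$, and the convergence $\sigma_n^2\to\sigma^2$ of the variances of the normalized sums, both follow from Davydov's covariance inequality, which bounds $|\mathrm{Cov}(X_0,X_j)|$ by a constant times $\alpha_X(j)^{1-2/r}(E|X_0|^r)^{2/r}$, a geometrically summable quantity. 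I would record the nondegeneracy assumption $\sigma^2>0$ at this point, since it is what makes the standardization in $T_n$ meaningful.

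Next I would handle the denominator. Taking $\hat\sigma_n^2$ to be any consistent estimator of $\sigma_n^2$, for instance a Newey--West / HAC estimator, its consistency $\hat\sigma_n^2\xrightarrow{p}\sigma^2$ under geometric mixing and $r>2$ moments is standard (Andrews; de Jong--Davidson); since $\sigma_n^2\to\sigma^2$, consistency for $\sigma_n^2$ and for $\sigma^2$ coincide. Combining the CLT for the numerator with $\hat\sigma_n\xrightarrow{p}\sigma>0$ through Slutsky's theorem yields $T_n = (n^{-1/2}\sum_{t=1}^n X_t)/\hat\sigma_n \xrightarrow{d} N(0,1)$, which is the first claim.

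For the divergence under the alternative, I would use that geometric $\alpha$-mixing implies ergodicity, so Birkhoff's ergodic theorem gives $\bar X_n := n^{-1}\sum_{t=1}^n X_t \xrightarrow{a.s.}\mu$. Writing $T_n = \sqrt n\,\bar X_n/\hat\sigma_n$ and noting that the HAC estimator is built from the demeaned series and therefore still converges to the finite, strictly positive long-run variance $\sigma_*^2$ of $\{X_t-\mu\}$, the numerator $\sqrt n\,\bar X_n$ tends to $+\infty$ when $\mu>0$ and to $-\infty$ when $\mu<0$, while the denominator stays bounded away from $0$; hence $T_n\to+\infty$ for $E\log\Lambda_0(\theta)>0$ and $T_n\to-\infty$ for $E\log\Lambda_0(\theta)<0$, as claimed. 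The main obstacle I anticipate is not any single estimate but the packaging: ensuring the long-run variance is finite and nondegenerate ($\sigma^2>0$) so that the standardized statistic is well behaved, and pinning down a variance estimator whose consistency holds simultaneously under $H_0$ and under the alternative; the transfer of mixing to $\{X_t\}$ and the covariance bookkeeping via Davydov's inequality are routine by comparison.
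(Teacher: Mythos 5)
Your proposal is correct and follows essentially the same route as the paper's proof: transfer of geometric $\alpha$-mixing from $\{y_t\}$ to $\{\log\Lambda_t(\theta)\}$ via measurability with respect to the finite window $Y_t^k$, a CLT for strongly mixing sequences under the $r>2$ moment condition, and Slutsky's theorem for the studentization. If anything, you are more thorough than the paper, which cites the mixing CLT and Slutsky and stops there --- your explicit Davydov bookkeeping for the long-run variance and the ergodic-theorem argument for the divergence of $T_n$ under the alternative fill in steps the paper leaves implicit.
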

The variance $\sigma_n^2$ can be  consistently estimated using the Newey-West  estimator; see \cite{Newey1987}. Proposition \ref{pr.ci} shows that, for any given $\theta$ and at any given confidence level $\alpha$, we ascertain that the test statistic $T_n$ is asymptotically standard normal, if $\theta$ is a boundary point satisfying $E\log\Lambda(\theta)=0$. If the null hypothesis is rejected with negative values of $T_{n}$, then the evidence suggests that the contraction condition is satisfied for that $\theta$, i.e.~that $E\log\Lambda(\theta)<0$. If the null hypothesis is rejected with positive values of $T_{n}$, then the evidence suggests that   $E\log\Lambda(\theta)>0$.  On the basis of  the asymptotic result in Proposition \ref{pr.ci}, we can also obtain level $\alpha$ confidence sets for  $\Theta_0=\left\{\theta \in \bar\Theta: E\log\Lambda_0(\theta)<0\right\}$. More specifically, we consider the set $\hat\Theta_\alpha^{up}=\left\{\theta \in \bar\Theta: T_n<z_{1-\alpha}\right\}$ such that and  for any $\theta\in \Theta_0$ we have
$$\lim_{n\rightarrow\infty}P\{\theta\in\hat\Theta_\alpha^{up}\}\ge 1-\alpha.$$
This means that  any element in the set $\Theta_0$ has   an asymptotic probability of  at least $1-\alpha$ of being contained in the  set $\hat \Theta_\alpha^{up}$. 
Similarly, we  also consider  the set $\hat\Theta_\alpha^{lo}=\left\{\theta \in \bar\Theta: T_n<z_{\alpha}\right\}$ and for this set for that any $\theta \in \Theta_0^c$, where $\Theta_0^c=\left\{\theta \in \bar\Theta:  E\log\Lambda_0(\theta)\ge0 \right\}$, we have that
$$\lim_{n\rightarrow\infty}P\{\theta\in\hat\Theta_\alpha^{lo}\}\le \alpha.$$
The set $\hat\Theta_\alpha^{lo}$ can be viewed as a lower bound confidence set of level $\alpha$ for $\Theta_0$, because it is a conservative set in the sense that we fix the maximum asymptotic probability $\alpha$ such that a $\theta$ not being contained in $\Theta_0$ can be in $\hat\Theta_\alpha^{lo}$. In an equivalent way, the set $\hat\Theta_\alpha^{up}$ can be viewed as an upper bound confidence set for $\Theta_0$. In this case, the maximum asymptotic probability of having an element $\theta\in \Theta_0$ not being in $\hat\Theta_\alpha^{up}$ is fixed at a level $\alpha$.



\section{Some practical examples}
\label{sec6}

\subsection{Beta-$t$-GARCH model}

Consider first the properties of the  Beta-$t$-GARCH model as a data generating process. The basic dynamic process equation in (\ref{tgass}) with $\theta=\theta_0$ can alternatively be expressed as
\[
f^o_{t+1}=\omega_0+ f^o_{t}c_t, \qquad c_t=\beta_0+(\alpha_{0} +\gamma_{0}d_t)(v_0+1)b_t,
\]
where $b_t={\varepsilon_t^2}/({v_0-2+\mathcal{\varepsilon}_t^2})$ has a beta distribution with parameters $1/2$ and $v_0/2$, see Chapter 3 of \cite{H2013}. 
In order to ensure that  $f_t^o$ is positive with probability 1 and that $f_t^o$ is the conditional variance of $y_t$ given $y^{t-1}$, the  parameter vector $\theta_0 =(\omega_0,\beta_0,\alpha_0,\gamma_0,v_0)^T$ has to satisfy the following conditions $\omega_0>0$, $\beta_0\ge0$, $\alpha_0>0$ and $\gamma_0 \ge-\alpha_0$. Letting $v_0\rightarrow\infty$, the Student's $t$ distribution approaches the Gaussian distribution and the recursion of $f_t^o$ in (\ref{tgass}) becomes 
$$f_{t+1}^o=\omega_0+\beta_0 f_t^o+(\alpha_0 +\gamma_0 d_t)y_t^2,$$ such that, in this limiting case of $v_0\rightarrow\infty$, the model reduces to the so-called GJR-GARCH model of \cite{GLOSTEN1993}, and to the GARCH(1,1) model, when $\gamma_0 =0$.

\begin{theorem}\label{SEth}
The model  in (\ref{tgass}) admits a unique stationary  and ergodic solution $\{f_{t}^{o}\}_{t \in \field{Z}}$ if and only if $E\log  c_t <0$. 
\end{theorem}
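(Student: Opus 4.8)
The crucial structural observation that I would isolate first is that $\{c_t\}_{t\in\mathbb Z}$ is an \emph{i.i.d.} sequence. Since $\omega_0>0$ forces any (non-negative) solution to obey $f_t^o\ge\omega_0>0$, the sign of $y_t=\sqrt{f_t^o}\,\varepsilon_t$ coincides with the sign of $\varepsilon_t$, so the dummy satisfies $d_t=\mathbf 1\{y_t\le 0\}=\mathbf 1\{\varepsilon_t\le 0\}$, and both $d_t$ and $b_t=\varepsilon_t^2/(v_0-2+\varepsilon_t^2)$ are deterministic functions of $\varepsilon_t$ alone. Hence $c_t=\beta_0+(\alpha_0+\gamma_0 d_t)(v_0+1)b_t$ is a measurable function of $\varepsilon_t$, and the i.i.d.\ property of $\{\varepsilon_t\}$ carries over to $\{c_t\}$. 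Moreover $0\le c_t\le \beta_0+(\alpha_0+|\gamma_0|)(v_0+1)$ because $b_t<1$, so $c_t$ is bounded above, $E\log^+ c_t<\infty$, and $E\log c_t$ is well defined in $[-\infty,\infty)$. With these facts the recursion is the affine first-order stochastic recurrence equation $f_{t+1}^o=\omega_0+c_t f_t^o$ with i.i.d.\ coefficients, whose Lipschitz coefficient is exactly $\Lambda_t=c_t$.

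For the sufficiency direction I would verify the hypotheses of Theorem 3.1 of \cite{Bougerol1993}. The moment condition $E\log^+ c_t<\infty$ holds automatically, as does $E\log^+|\omega_0+c_t\bar f|<\infty$ at any fixed $\bar f$ since $c_t$ is bounded; the contraction condition is precisely the assumed $E\log c_t<0$. Bougerol's theorem then delivers a unique stationary and ergodic solution to which the initialized recursion converges e.a.s. Concretely this solution is the a.s.\ convergent backward series $\tilde f_t^o=\omega_0\sum_{k\ge 0}\prod_{j=1}^{k}c_{t-j}$: convergence follows from the root test, because the strong law of large numbers gives $k^{-1}\sum_{j=1}^{k}\log c_{t-j}\to E\log c_0<0$ a.s., so the general term decays geometrically; ergodicity is inherited from $\{\varepsilon_t\}$ since $\tilde f_t^o$ is a measurable function of $\{\varepsilon_{t-j}\}_{j\ge 1}$.

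For the necessity direction, suppose a stationary, ergodic (hence a.s.\ finite), non-negative solution $\{f_t^o\}$ exists. Because $\omega_0>0$ and $c_t\ge 0$, unrolling the recursion $n$ steps gives $f_t^o=\omega_0\sum_{k=0}^{n-1}\prod_{j=1}^{k}c_{t-j}+\big(\prod_{j=1}^{n}c_{t-j}\big)f_{t-n}^o$, in which every summand is non-negative. Thus the partial sums $\omega_0\sum_{k=0}^{n-1}\prod_{j=1}^{k}c_{t-j}$ are non-decreasing in $n$ and bounded above by $f_t^o<\infty$, so the positive-term series $\sum_{k\ge 0}\prod_{j=1}^{k}c_{t-j}$ converges a.s., which forces its general term to vanish, i.e.\ $S_k:=\sum_{j=1}^{k}\log c_{t-j}\to-\infty$ a.s. I would then exclude $E\log c_0\ge 0$: if $E\log c_0>0$ the strong law yields $S_k/k\to E\log c_0>0$, so $S_k\to+\infty$, a contradiction; if $E\log c_0=0$ then $E\log^+ c_0<\infty$ forces $E|\log c_0|<\infty$, the increments of $\{S_k\}$ are centered and non-degenerate (as $b_t$ is a non-degenerate Beta variate and $\alpha_0>0$ make $\log c_t$ non-degenerate), and the recurrence of a centered random walk gives $\limsup_k S_k=+\infty$ a.s., again contradicting $S_k\to-\infty$. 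Hence $E\log c_0<0$.

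The main obstacle is the boundary case $E\log c_0=0$ in the necessity part: here the law-of-large-numbers comparison is inconclusive and one must invoke the oscillation (recurrence) of a centered random walk, which in turn requires checking that $\log c_t$ is genuinely non-degenerate. The sufficiency part, by contrast, is essentially a direct application of Bougerol's theorem once the i.i.d.\ and boundedness structure of $\{c_t\}$ has been extracted.
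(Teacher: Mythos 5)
Your proposal is correct and follows essentially the same route as the paper's proof: sufficiency via Theorem 3.1 of \cite{Bougerol1993} applied to the affine SRE $f_{t+1}^o=\omega_0+c_t f_t^o$, and necessity by unrolling the recursion, bounding the non-negative partial sums by $f_t^o$, and concluding that $\prod_{i=1}^{k}c_{t-i}\to 0$ a.s. The only difference is the final step of the necessity argument: the paper closes it by citing Lemma 2.1 of \cite{bougerol1992} (a.s.\ convergence of the i.i.d.\ product to zero implies $E\log c_t<0$), whereas you reprove that scalar lemma directly, handling $E\log c_0>0$ by the strong law and the boundary case $E\log c_0=0$ by the recurrence of a centered, non-degenerate random walk, which makes the argument self-contained but is otherwise the same idea.
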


Theorem \ref{SEth} above derives a necessary and sufficient moment condition for the Beta-$t$-GARCH model to generate stationary ergodic paths. A simpler restriction on the parameters of the model that is sufficient for obtaining stationary and ergodic paths is  
$$\beta_0+\alpha_{0} +\gamma_{0}/2<1.$$
 
Theorem \ref{moments} complements Theorem \ref{SEth} by providing additional restrictions which ensure that the paths generated by the Beta-$t$-GARCH are not only strictly stationary and ergodic, but also have a bounded moment. 

\begin{theorem}\label{moments}
Let $E c_t^z<1$, where $z\in \mathbb{R}^+$, then  (\ref{tgass}) admits a unique stationary  and ergodic solution $\{f_{t}^{o}\}_{t \in \field{Z}}$ that satisfies $E|f_t^o|^z<\infty$.
\end{theorem}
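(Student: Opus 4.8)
The plan is to exploit the affine structure of the recursion. Once the sign dummy is rewritten through the innovations, the equation in (\ref{tgass}) for $f_t^o$ becomes a linear (affine) stochastic recurrence equation with i.i.d.\ multiplicative coefficients $c_t$, and both the existence of the stationary solution and the moment bound then follow by elementary manipulations combined with Theorem \ref{SEth}.

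First I would record that $\{c_t\}$ is i.i.d.\ and that $c_t$ is independent of $f_t^o$. Under the stated parameter restrictions $f_t^o>0$ almost surely, so $d_t=\mathbf{1}\{y_t\le0\}=\mathbf{1}\{\varepsilon_t\le0\}$ and hence $c_t=\beta_0+(\alpha_0+\gamma_0\mathbf{1}\{\varepsilon_t\le0\})(v_0+1)b_t$ is a measurable function of $\varepsilon_t$ alone; since $\{\varepsilon_t\}$ is i.i.d., so is $\{c_t\}$, and $c_t$ is independent of $f_t^o$, the latter being measurable with respect to $\{\varepsilon_s:s<t\}$.

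Next I would establish existence and uniqueness by reducing to Theorem \ref{SEth}. Applying Jensen's inequality to the concave function $\log$ gives $z\,E\log c_t=E\log c_t^z\le\log E c_t^z<\log 1=0$, so that $E\log c_t<0$ because $z>0$; Theorem \ref{SEth} then yields the unique stationary ergodic solution. Iterating the recursion backward, and using $E\log c_t<0$ to guarantee almost sure convergence, this solution admits the explicit representation
$$f_t^o=\omega_0\sum_{k=0}^{\infty}\prod_{j=1}^{k}c_{t-j},$$
with the empty product (the $k=0$ term) equal to one.

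It then remains to bound $E|f_t^o|^z$, and I would split on the value of $z$. For $0<z\le1$ I would invoke subadditivity $(\sum_k a_k)^z\le\sum_k a_k^z$ for nonnegative terms together with the independence of the factors, obtaining
$$E(f_t^o)^z\le\omega_0^z\sum_{k=0}^{\infty}\prod_{j=1}^{k}E c_{t-j}^z=\omega_0^z\sum_{k=0}^{\infty}(E c_0^z)^k,$$
a convergent geometric series since $E c_0^z<1$. For $z>1$ I would instead apply Minkowski's inequality in $L^z$ and factor using independence,
$$\big\|f_t^o\big\|_z\le\omega_0\sum_{k=0}^{\infty}\Big\|\prod_{j=1}^{k}c_{t-j}\Big\|_z=\omega_0\sum_{k=0}^{\infty}(E c_0^z)^{k/z},$$
which converges because $(E c_0^z)^{1/z}<1$. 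In either regime $E|f_t^o|^z<\infty$, completing the argument. The only genuinely delicate points are justifying the series representation together with the interchange of expectation and infinite summation, and choosing the correct inequality (subadditivity versus Minkowski) in each regime; the remaining steps are direct consequences of Theorem \ref{SEth} and the independence structure.
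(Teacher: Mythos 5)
Your proof is correct and follows essentially the same route as the paper's: the backward-iterated series representation $f_t^o=\omega_0\bigl(1+\sum_{k=1}^{\infty}\prod_{j=1}^{k}c_{t-j}\bigr)$, subadditivity for $z\le 1$, and Minkowski's inequality plus independence for $z\ge 1$, with the geometric series $\sum_k (E c_0^z)^{k}$ (or $\sum_k (E c_0^z)^{k/z}$) closing the argument. Your explicit Jensen step showing $E c_t^z<1$ implies $E\log c_t<0$, and hence that Theorem \ref{SEth} delivers the stationary ergodic solution, is a welcome clarification of a point the paper leaves implicit.
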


Having analyzed some properties of the Beta-$t$-GARCH as a data generating process, we now turn to the properties of the model as a filter that is fitted to the data. 

\subsubsection*{Invertibility of the filter}
Let us analyze invertibility of the functional filtered parameter $\hat f_t$. 
The filtered equation of the Beta-$t$-GARCH is given by 
\begin{eqnarray}\label{fpbt}
\hat f_{t+1}( \theta)=\omega+\beta \hat f_t(\theta)+(\alpha +\gamma d_t)\frac{(v+1)y_t^2}{(v-2)+ y_t^2/\hat f_t(\theta)}, \qquad t\in \mathbb{N},
\end{eqnarray}
where the recursion is  initialized at a point $\hat f_{0}(\theta)\in \mathcal{F}_\theta=[\bar \omega,\infty)$. The observations $\{y_1,\dots,y_n\}$ are considered to be a realization from a random process. If we assume correct specification, then the generating process is given by (\ref{tgass}) and there exists some true unknown parameter $\theta_{0}$ that defines the properties of the data. It is  straightforward to see that the set $\mathcal{F}_\theta$ where the SRE in (\ref{fpbt}) lies is given by $[\bar \omega,\infty)$. This is true irrespective of the correct specification of the model as 
the last summand on the right hand side of the equation in (\ref{fpbt}) is positive with probability 1.

Corollary \ref{fpt} follows immediately from Proposition \ref{pp1} and  provides sufficient conditions for the desired invertibility result.

\begin{corollary}\label{fpt}
Let $\{y_t\}_{t\in \mathbb{N}}$ be a stationary and ergodic sequence of random variables, and let $\Theta$ be a compact set such that 
$$E\log\left|\beta+(\alpha+\gamma d_0)\frac{(v+1)y^4_0}{\left ((v-2)\bar \omega+y_0^2\right)^2}\right|<0, \; \forall \;\theta \in \Theta,$$
where $\bar \omega = \omega/(1-\beta)$.
Then, the sequence $\{\hat f_t\}_{t\in\mathbb{N}}$ defined in (\ref{fpbt}) is continuously invertible, i.e.
$$\|\hat f_t -\tilde{f}_t\|_\Theta \xrightarrow{\text{e.a.s.}} 0\;\; \text{as} \;\; t\rightarrow \infty,$$
for any initialization $\hat f_0 \in \mathbb{C}(\Theta, \mathcal{F}_\Theta)$ and where $\{\tilde{f}_t\}_{t\in \mathbb{Z}}$ is a stationary and ergodic sequence.
\end{corollary}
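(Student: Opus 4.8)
The plan is to verify that the three conditions (i)--(iii) of Proposition~\ref{pp1} hold for the Beta-$t$-GARCH filter defined in (\ref{fpbt}), since Corollary~\ref{fpt} follows immediately once these are checked. The key observation that makes everything work is the explicit form of the state space: as noted in the text, the filtered parameter $\hat f_t(\theta)$ lives in $\mathcal{F}_\theta=[\bar\omega,\infty)$ with $\bar\omega=\omega/(1-\beta)$, because the last summand on the right-hand side of (\ref{fpbt}) is strictly positive. This lower bound $\bar\omega$ on $f$ is precisely what converts the supremum defining $\Lambda_t(\theta)$ into the clean quantity appearing in the hypothesis of the corollary, and it is the crux of the whole argument.

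First I would compute the derivative $\dot\phi(f,Y_t^k,\theta)=\partial\phi/\partial f$ of the recursion map. Differentiating the right-hand side of (\ref{fpbt}) with respect to $f$ gives
\[
\dot\phi(f,y_t,\theta)=\beta+(\alpha+\gamma d_t)\frac{(v+1)y_t^4/f^2}{\left((v-2)+y_t^2/f\right)^2}
=\beta+(\alpha+\gamma d_t)\frac{(v+1)y_t^4}{\left((v-2)f+y_t^2\right)^2}.
\]
Since $\beta\ge 0$ and $(\alpha+\gamma d_t)\ge 0$, the expression is nonnegative and strictly decreasing in $f$ on $[\bar\omega,\infty)$, so the supremum over $f\in\mathcal{F}_\theta$ is attained at the lower endpoint $f=\bar\omega$. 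This yields
\[
\Lambda_t(\theta)=\sup_{f\in\mathcal{F}_\theta}|\dot\phi(f,Y_t^k,\theta)|
=\beta+(\alpha+\gamma d_t)\frac{(v+1)y_t^4}{\left((v-2)\bar\omega+y_t^2\right)^2},
\]
so the hypothesis $E\log\Lambda_0(\theta)<0$ of the corollary is exactly condition~(iii)'s contraction requirement $E\log\Lambda_0(\theta)<0$; the almost-sure continuity of $\theta\mapsto\log\Lambda_0(\theta)$ on the compact set $\Theta$ follows from the continuity of the expression in $(\omega,\beta,\alpha,\gamma,v)$, noting that the denominator is bounded away from zero by $((v-2)\bar\omega)^2>0$.

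Next I would dispatch conditions (i) and (ii). For (i), I would take a convenient reference point $\bar f\in\mathcal{F}_\Theta$ and bound $\|\phi(\bar f,Y_t^k,\cdot)\|_\Theta$ using the explicit form of (\ref{fpbt}): the term $(\alpha+\gamma d_t)(v+1)y_t^2/((v-2)+y_t^2/\bar f)$ is bounded above by a constant times $y_t^2$ uniformly in $\theta$ over the compact $\Theta$, so $E\log^+\|\phi(\bar f,Y_t^k,\cdot)\|_\Theta<\infty$ provided $E\log^+ y_t^2<\infty$, which holds automatically because $\log^+$ grows slower than any moment and $y_t$ is stationary and ergodic. For (ii), I would bound $\dot\phi$ above uniformly in $f$ and $\theta$: the ratio $y_t^4/((v-2)f+y_t^2)^2$ is bounded by $1/(v-2)^2$ times a constant (indeed it is at most $1$ when $f\ge 0$), so $\sup_{f,\theta}|\dot\phi|$ is bounded by a constant plus a constant multiple of $y_t^4/((v-2)\bar\omega+y_t^2)^2\le$ a finite constant, making the $\log^+$ moment trivially finite.

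\textbf{The main obstacle} I anticipate is being careful about the interplay between the state space $\mathcal{F}_\theta=[\bar\omega,\infty)$ and the uniform-over-$\Theta$ statements in conditions (i) and (ii), since $\bar\omega=\omega/(1-\beta)$ itself varies with $\theta$, so $\mathcal{F}_\Theta=\bigcup_{\theta\in\Theta}\mathcal{F}_\theta$ is determined by the smallest $\bar\omega$ over the compact set $\Theta$. One must confirm that this infimum is strictly positive so that the denominators $((v-2)f+y_t^2)^2$ stay bounded away from zero uniformly, which is guaranteed because $\Theta$ is compact and the defining parameter region imposes $\omega>0$ and $\beta<1$. Once the supremum in $\Lambda_t(\theta)$ is correctly identified at the endpoint $\bar\omega$ and the uniform bounds are established, the conclusion follows directly by invoking Proposition~\ref{pp1}, with no further work required.
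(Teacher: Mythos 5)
Your proposal takes exactly the route the paper intends: the paper gives no separate proof of Corollary \ref{fpt}, stating only that it follows immediately from Proposition \ref{pp1}, and the implicit argument is the one you spell out — differentiate the map in (\ref{fpbt}) to get $\dot\phi(f,y_t,\theta)=\beta+(\alpha+\gamma d_t)(v+1)y_t^4/\bigl((v-2)f+y_t^2\bigr)^2$, use that this is nonnegative and decreasing in $f$ on $\mathcal{F}_\theta=[\bar\omega,\infty)$ to conclude that the supremum defining $\Lambda_t(\theta)$ is attained at $f=\bar\omega$, so the corollary's hypothesis is precisely the contraction condition (iii), and then verify (i) and (ii). Your derivative computation and the identification of $\Lambda_t(\theta)$ are correct, and this is indeed the crux.

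There is, however, one incorrect justification, even though the step it supports remains true. For condition (i) you reduce the problem to $E\log^+ y_t^2<\infty$ and claim this ``holds automatically because $\log^+$ grows slower than any moment and $y_t$ is stationary and ergodic.'' That inference is false: stationarity and ergodicity impose no moment restriction at all, and one can easily construct stationary ergodic (even i.i.d.) sequences with $E\log^+|y_0|=\infty$. The repair is immediate and needs no moment condition whatsoever: since $y_t^2/\bigl((v-2)+y_t^2/\bar f\bigr)\le \bar f$ for all $y_t$, one has $\phi(\bar f,Y_t^k,\theta)\le \omega+\beta\bar f+(\alpha+\max\{\gamma,0\})(v+1)\bar f$, a deterministic bound uniform over the compact set $\Theta$, so $E\log^+\|\phi(\bar f,Y_t^k,\cdot)\|_\Theta<\infty$ holds trivially. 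This is the same ``ratio bounded by a constant'' device you correctly employ for condition (ii), where you note $y_t^4/\bigl((v-2)f+y_t^2\bigr)^2\le 1$. With that substitution your verification is complete and coincides with the paper's argument.
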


It is clearly implied by Corollary \ref{fpt} that the Lipschitz coefficient  $\Lambda_0(\theta)$ depends on the data generating process through $y_0$. Therefore, in practice, the parameter region $\Theta$ cannot be explicitly obtained from the contraction condition $E\log \Lambda_0(\theta)<0$. As we have discussed in Section \ref{sec2}, under the assumption of correct specification or of $y_0$ having a symmetric distribution around zero, the unfeasible contraction condition $E\log \Lambda_0(\theta)<0$ is ensured by the following feasible sufficient condition
\begin{eqnarray}\label{cc}
\frac{1}{2}\log\left|\beta+\alpha (v+1)\right|+\frac{1}{2}\log\left|\beta+(\alpha+\gamma)(v+1)\right|<0.
\end{eqnarray} 
This result is obtained from taking the supremum over $y_0$ from which it follows with probability 1 that
$$E\log\left|\beta+(\alpha+\gamma d_0)\frac{(v+1)y^4_0}{\left ((v-2)\bar \omega+y_0^2\right)^2}\right| \le E\log\left|\beta+(\alpha+\gamma d_0)(v+1)\right|.$$
Then by assuming that the median of $y_0$ is equal to zero, the feasible condition in (\ref{cc}) follows immediately.

The theory developed in Sections \ref{sec3} and \ref{sec4} can be used to formulate an alternative to (\ref{cc}). The estimated  region $\hat\Theta_n$ that satisfies an empirical version of $E\log \Lambda_0(\theta)<0$ is given by
\begin{eqnarray}\label{ec}
n^{-1}\sum_{t=1}^n\log\left|\beta+(\alpha+\gamma d_t)\frac{(v+1)y^4_t}{\left ((v-2) \bar\omega+y_t^2\right)^2}\right|<0.
\end{eqnarray} 
This empirical condition imposes weaker restrictions on the parameter region. In the following, we discuss how the difference between the condition (\ref{cc}) and (\ref{ec}) can be relevant in practice. Figure \ref{fig:reg2} complements Figure \ref{fig:reg1} by showing that our empirical region is significantly larger than the region obtained from (\ref{cc}). Most importantly, Figure \ref{fig:reg2} reveals that the ML point estimates obtained from the S\&P 500 index lie well inside the empirical region. 
\begin{frame}{}
\begin{figure}[h!]
\center
\includegraphics[scale=0.75]{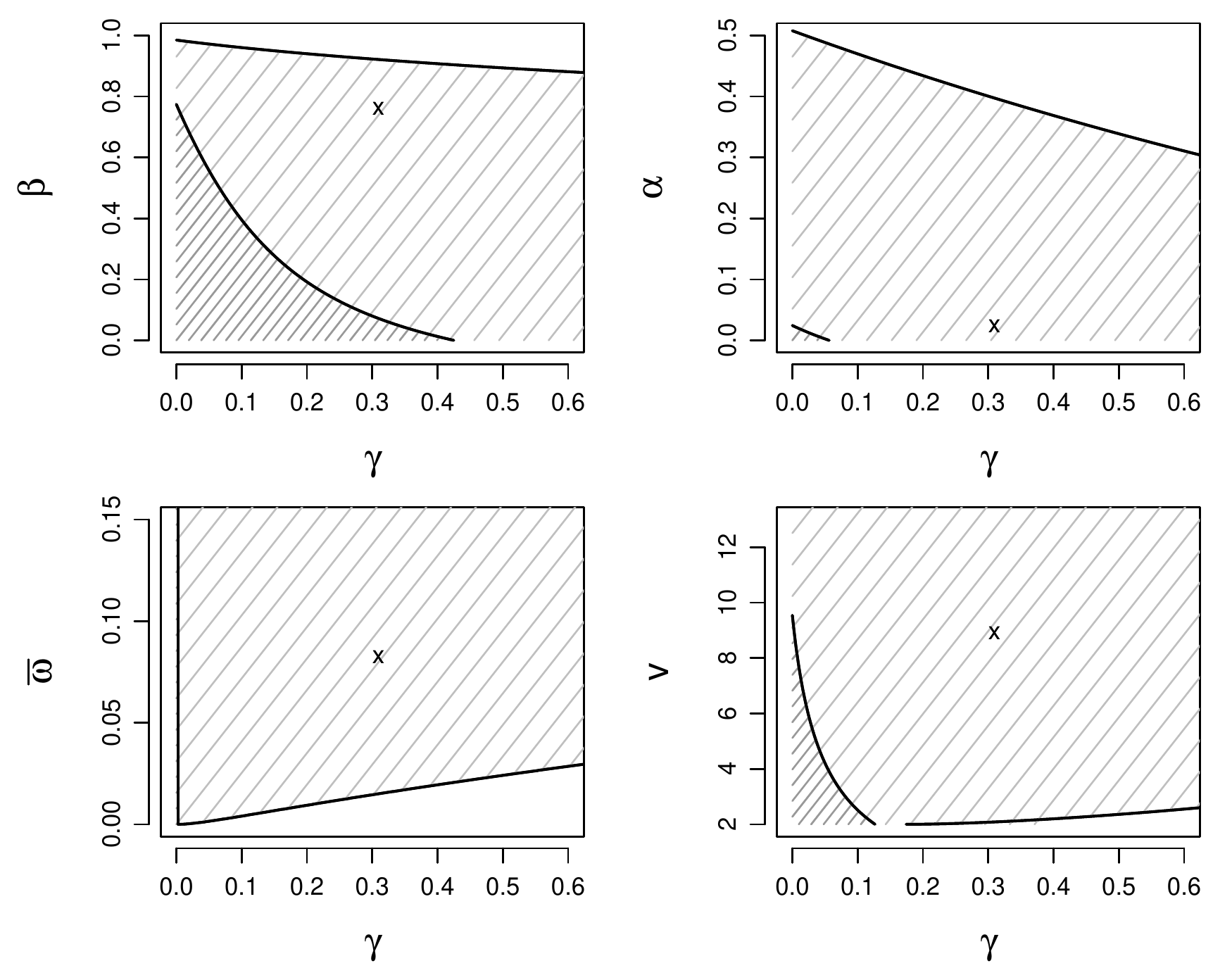}
\vspace{-0.2cm}
\caption{\textit{The light gray area represent the parameter region obtained from (\ref{ec}) for the log-returns of the S\&P 500. In the 2-dimensional plots the other parameters are fixed at their estimated value. The dark gray area is the region obtained from  (\ref{cc}). The crosses denote the estimated value of the parameter.}}
\label{fig:reg2}
\end{figure}
\end{frame}

From the theory developed in Section \ref{sec5}, we obtain the confidence bounds for the unfeasible parameter region. The conditions required for Proposition \ref{pr.ci}, and hence for obtaining the confidence bounds, are valid as can easily be verified in this case. In particular, the condition $E|\log \Lambda_0(\theta)|^r<\infty $ is satisfied for any $r>0$ as long as $\beta>0$. Also, from the results in \cite{francq2006}, it follows that the strong mixing assumption is always satisfied when the model is correctly specified. Figure \ref{fig:reg3} provides  a high degree of  confidence that the  Beta-$t$-GARCH filter is indeed invertible. Figure \ref{fig:reg2} presents the 95\% confidence bounds for the invertibility region. We highlight that the point estimate lies well inside the 95\% lower bound.  

\begin{frame}{}
\begin{figure}[h!]
\center
\includegraphics[scale=0.75]{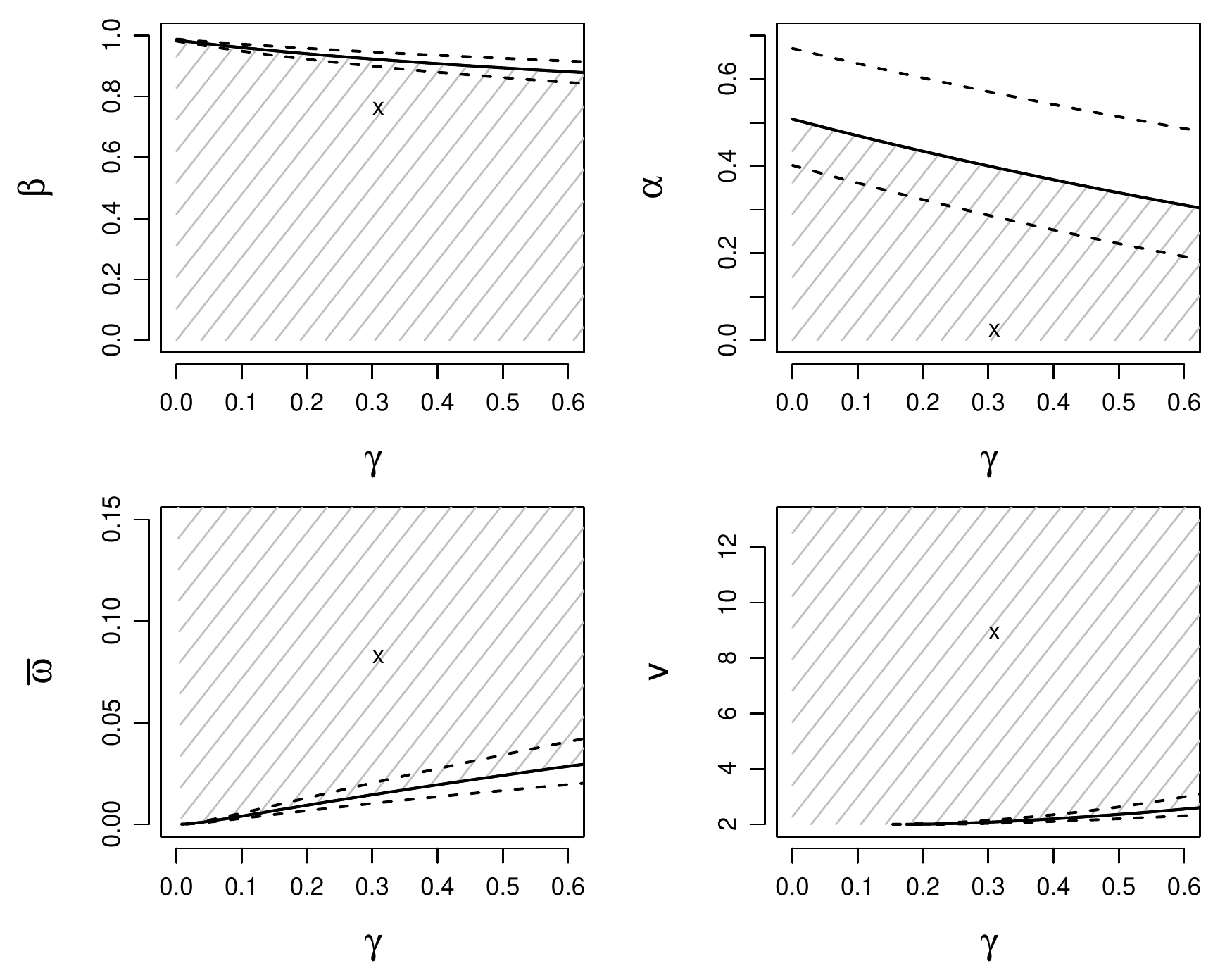}
\vspace{-0.2cm}
\caption{\textit{95\% confidence bounds for the invertibility region are marked by the dashed lines. The light gray area represent the parameter region obtained from (\ref{ec}) for the log-returns of the S\&P 500.  Crosses denote the estimated value of the parameter.} }
\label{fig:reg3}
\end{figure}
\end{frame}

Table \ref{tab:1} reveals that the importance of our empirical invertibility condition is not specific to the S\&P 500 index only. For the monthly time series of financial returns of the well-known indexes considered in Table \ref{tab:1}, we obtain the maximizer $\hat\theta_{n}$ of the likelihood function and we show that inequality (\ref{cc}), evaluated at $\theta=\hat\theta_{n}$, fails whereas inequality (\ref{ec}) holds. These results suggest that condition (\ref{cc}) is too restrictive in practice and that condition (\ref{ec}) can be used to define a reasonably large region of the parameter space on which we can maximize the log-likelihood function.  The last column of Table \ref{tab:1} indicates that the null hypothesis of whether the point estimate is a boundary point of the invertibility region is strongly rejected in all cases.

\begin{table}[ht]
\centering
\begin{tabular}{lcccccccc}
  \cline{2-9}
\vspace{-0.4cm}\\
 & $\omega$ & $\beta$ & $\alpha$ & $\gamma$ & $v$ &  (\ref{cc})& (\ref{ec})& p-value \\ 
  \hline
\vspace{-0.4cm}\\
DJIA&0.058 &0.554& 0.000 & 0.371 & 7.417 & 0.357  & -0.507& 0.000 \\ 
\vspace{-0.6cm}\\
 & \footnotesize{(0.019)} & \footnotesize{(0.160)} & \footnotesize{(0.047)} &\footnotesize{(0.116)}&\footnotesize{(2.339)} & &   & \\ 
\vspace{-0.4cm}\\
S\&P 500 & 0.020 & 0.759 & 0.023 & 0.309 & 8.893 & 0.691 & -0.181 &0.000\\ 
\vspace{-0.6cm}\\
 & \footnotesize{(0.013)} & \footnotesize{(0.114)} & \footnotesize{(0.046)} &\footnotesize{(0.111)}&\footnotesize{(2.640)} & &   & \\ 
\vspace{-0.4cm}\\
NASDAQ  & 0.026  & 0.754 & 0.106 & 0.198 &9.865&1.022&-0.109  &0.000\\ 
\vspace{-0.6cm}\\
 & \footnotesize{(0.010)} &\footnotesize{(0.077)} &\footnotesize{(0.033)} &  \footnotesize{(0.071)} & \footnotesize{(3.396)} &  &   &\\ 
\vspace{-0.4cm}\\
NI 225&0.088 &0.637& 0.000 & 0.230 & 26.552 &  0.746  & -0.416 & 0.000 \\ 
\vspace{-0.6cm}\\
 & \footnotesize{(0.010)} & \footnotesize{(0.000)} & \footnotesize{(0.010)} & \footnotesize{(0.037)} & \footnotesize{(1.083)}& &   & \\ 
\vspace{-0.4cm}\\
FTSE 100 & 0.042 &0.595 & 0.059 & 0.332 &7.621 & 0.737 & -0.378 &0.000\\ 
\vspace{-0.6cm}\\
 & \footnotesize{(0.012)} & \footnotesize{(0.134)} & \footnotesize{(0.049)} &\footnotesize{(0.107)}&\footnotesize{(2.255)} & &   & \\ 
\vspace{-0.4cm}\\
DAX  & 0.046  &0.731 & 0.050 & 0.212 &7.932 &0.642&-0.218  &0.000\\ 
\vspace{-0.6cm}\\
 & \footnotesize{(0.013)} &\footnotesize{(0.088)} &\footnotesize{(0.046)} &  \footnotesize{(0.073)} & \footnotesize{(2.905)} &  &   &\\ 
\hline\vspace{-0.3cm}\\
\end{tabular}
\vspace{-0.2cm}
\caption{\textit{Parameter estimates for the model specified in (\ref{tgass}) for the log-returns of some of the stock indexes Dow Jones Industrial (DJIA), Standard and Poor's 500 (S\&P 500), NASDAQ, Nikkei 225 (NI 225), London Stock Exchange (FTSE) and German DAX. For all these indexes, time series of monthly returns from January 1980 to April 2016 are considered. The columns labeled (\ref{cc}) and (\ref{ec}) contain the values of respectively condition (\ref{cc}) and (\ref{ec}) evaluated at the estimated parameter value. The last column contains the $p$-value of the test whether the point estimate is in a boundary point of the ``true'' invertibility region.} }
\label{tab:1}
\end{table}

Having provided strong evidence of the invertibility of the Beta-$t$-GARCH filter, we are now ready to discuss consistency of the ML estimator in these larger parameter spaces defined by the feasible empirical parameter restrictions.

\subsubsection*{Consistency of the  ML estimator}

The log-likelihood function $\hat L_n$ is defined as in (\ref{llik}) with $\hat l_t(\theta)$  given by
\begin{eqnarray*}
\hat l_t( \theta)= \log \left(\frac{\Gamma\left(2^{-1}(v+1)\right)}{\sqrt{(v-2) \pi}\Gamma\left(2^{-1}v\right)}\right) - \frac{1}{2}\log \hat f_t(\theta)-\frac{v+1}{2}\log \left(1+\frac{y_t^2}{(v-2) \hat f_t(\theta)}\right),
\end{eqnarray*}
where $\Gamma$ denotes the gamma function. Next we obtain the consistency results for the Beta-$t$-GARCH model. The first result follows from an application of Theorem \ref{th1}.
\begin{theorem}\label{CON1}
Let the observed data be generated by a stochastic process $\{y_t\}_{t\in\mathbb{Z}}$ that satisfies the model equations in (\ref{tgass}) at $\theta=\theta_0\in \Theta$ and   let $\Theta $ be a compact  set that satisfies  the condition in (\ref{cons}) and  such that $\omega>0$, $\beta\ge0$, $\alpha\ge0$ , $\gamma\ge-\alpha$ and $v > 2$ for any $\theta \in \Theta$. Then the ML estimator $\hat \theta_n$ defined in (\ref{mle}) is strongly consistent.
\end{theorem}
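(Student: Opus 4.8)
The plan is to verify the five conditions \textbf{C1}--\textbf{C5} of Theorem \ref{th1}, since the result is stated as an application of it. I would first record a structural fact I intend to use throughout: because condition (\ref{cons}) forces $\log\Lambda_0(\theta) < 0$ and $\Lambda_t(\theta) \geq \beta$, every $\theta \in \Theta$ has $\beta < 1$, and combined with compactness and $\omega > 0$ this pins the filter domain $\mathcal{F}_\theta = [\bar\omega, \infty)$, with $\bar\omega = \omega/(1-\beta)$, between two constants $0 < \omega_{\min} \leq \bar\omega \leq \bar\omega_{\max} < \infty$ uniformly over $\Theta$. The continuous invertibility needed for \textbf{C3} is already delivered by Corollary \ref{fpt}: a direct computation gives $\dot\phi(f, y, \theta) = \beta + (\alpha + \gamma d)(v+1)y^4/((v-2)f + y^2)^2$, which decreases in $f$, so $\Lambda_t(\theta) = \beta + (\alpha + \gamma d_t)(v+1)y_t^4/((v-2)\bar\omega + y_t^2)^2$ and condition (\ref{cons}) is precisely hypothesis (iii) of Proposition \ref{pp1}; I would dispatch (i) and (ii) by noting that $\phi(\bar f, y, \theta) \leq \omega + \beta\bar f + (\alpha + \gamma^+)(v+1)\bar f$ and $|\dot\phi| \leq \beta + (\alpha + \gamma^+)(v+1)$ are bounded over the compact $\Theta$, so their $\log^+$ moments are finite.

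For \textbf{C1} I would take stationarity and ergodicity of $\{y_t\}$ from the hypothesis together with Theorem \ref{SEth}, the two-sided solution of (\ref{tgass}) at $\theta_0$ being the stationary ergodic one exactly when $E\log c_t < 0$. To obtain $E|l_0(\theta_0)| < \infty$ I would decompose $l_0(\theta_0) = \log C(v_0) - \tfrac12\log f_0^o - \tfrac{v_0+1}{2}\log(1 + \varepsilon_0^2/(v_0-2))$, using $\tilde f_0(\theta_0) = f_0^o$ and $y_0^2/f_0^o = \varepsilon_0^2$: the constant is finite, the last term is integrable because $E\log^+\varepsilon_0^2 < \infty$ for the Student's $t$, and $E|\log f_0^o| < \infty$ because $f_0^o \geq \bar\omega_0$ bounds it below while the representation $f_{t+1}^o = \omega_0 + c_t f_t^o$ with $E\log c_t < 0$ and bounded $c_t$ gives $E\log^+ f_0^o < \infty$ by the theory of stochastic recurrence equations \cite{Bougerol1993}.

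Conditions \textbf{C4} and \textbf{C5} I expect to be routine, both resting on uniform boundedness. Differentiating the log-density gives $\partial l_t/\partial f = -1/(2f) + (v+1)y_t^2/(2f((v-2)f + y_t^2))$, whose modulus is bounded over $f \geq \bar\omega$ and $\theta \in \Theta$ by $(v_{\max}+2)/(2\omega_{\min})$, using $y_t^2/((v-2)f + y_t^2) \leq 1$; the mean-value theorem, with the intermediate point still in $[\bar\omega, \infty)$, then yields $\|\hat l_t - l_t\|_\Theta \leq C\|\hat f_t - \tilde f_t\|_\Theta$, so a constant sequence $\eta_t \equiv C$ serves in \textbf{C4}. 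For \textbf{C5} I would note $l_0(\theta) \leq \log C(v) - \tfrac12\log\omega_{\min}$ uniformly in $\theta$, since the quadratic-form term is nonpositive and $\tilde f_0(\theta) \geq \omega_{\min}$, so $\|l_0 \vee 0\|_\Theta$ is bounded and hence integrable.

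The one genuinely substantive step will be the identifiability \textbf{C2}, and this is where I expect the real work. The plan is first to match conditional laws: $l_0(\theta) = l_0(\theta_0)$ a.s.\ says the Student's $t$ density with degrees of freedom $v$ and scale $\tilde f_0(\theta)$ coincides with the one having $v_0$ and $f_0^o$ for almost every $y_0$, which forces $v = v_0$ and $\tilde f_0(\theta) = f_0^o$ a.s.; propagating this through the recursion gives $\tilde f_t(\theta) = f_t^o$ for all $t$. Substituting into (\ref{fpbt}) leaves the a.s.\ linear identity $(\omega - \omega_0) + (\beta - \beta_0)f_t^o + (\alpha - \alpha_0)g_t + (\gamma - \gamma_0)d_t g_t = 0$, with $g_t = (v_0+1)y_t^2/((v_0-2) + y_t^2/f_t^o)$, and the hard part will be to show that the four functions $1, f_t^o, g_t, d_t g_t$ are not almost surely linearly dependent, forcing every coefficient to vanish and hence $\theta = \theta_0$. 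This non-degeneracy, which I would argue from the continuous distribution of $\varepsilon_t$ and the genuine joint variation of $(f_t^o, y_t, d_t)$, is the crux; once it is in place, conditions \textbf{C1}--\textbf{C5} hold and Theorem \ref{th1} yields the strong consistency of $\hat\theta_n$.
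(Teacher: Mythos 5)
Your overall route is exactly the paper's: verify \textbf{C1}--\textbf{C5} and invoke Theorem \ref{th1}. Your handling of \textbf{C1}, \textbf{C3}, \textbf{C4} and \textbf{C5} matches the paper's proof in substance: for \textbf{C3} the paper simply cites Corollary \ref{fpt}; for \textbf{C4} and \textbf{C5} it uses the same mean-value bound with constant $\eta_t$ and the same upper bound on $l_0$. For \textbf{C1}, where you appeal loosely to ``the theory of stochastic recurrence equations,'' the paper makes this precise with a monotone-convergence argument: $(c_t^{\epsilon_i}-1)/\epsilon_i \downarrow \log c_t$, so $E\log c_t<0$ yields $Ec_t^{\bar\epsilon}<1$ for some $\bar\epsilon>0$, whence Theorem \ref{moments} gives $E(f_0^o)^{\bar\epsilon}<\infty$ and thus $E|\log f_0^o|<\infty$; your sketch can be completed this way, so I do not count it as a gap.

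The genuine gap is \textbf{C2}, precisely the step you yourself label ``the crux'' and then leave as a promissory note. ``Continuous distribution of $\varepsilon_t$ and genuine joint variation of $(f_t^o,y_t,d_t)$'' is not an argument, and a naive linear-independence claim about $1,\,f_t^o,\,g_t,\,d_tg_t$ is delicate because $g_t$ and $f_t^o$ are strongly dependent: $g_t$ contains $f_t^o$ as a factor. The paper's completion exploits exactly this factorization, $g_t=f_t^o(v_0+1)b_t$ with $b_t=\varepsilon_t^2/(v_0-2+\varepsilon_t^2)$, so the identity becomes $(\omega_0-\omega)+f_t^o z_t=0$ a.s., where $z_t=(\beta_0-\beta)+\bigl(\alpha_0-\alpha+(\gamma_0-\gamma)d_t\bigr)(v_0+1)b_t$, and then uses the probabilistic structure: $f_t^o$ is measurable with respect to past innovations and hence independent of $(b_t,d_t)$; $b_t$ is independent of $d_t$ (symmetry of $\varepsilon_t$); $f_t^o$ is non-degenerate; $b_t$ and $d_t$ are non-constant and $b_t>0$ a.s. Parameters are then eliminated sequentially: if $\omega\neq\omega_0$ then $f_t^o z_t$ equals a nonzero constant, which by independence forces $f_t^o$ to be degenerate, a contradiction, so $\omega=\omega_0$; then $z_t=0$ a.s.\ since $f_t^o>0$; then $\beta=\beta_0$ because otherwise $b_t$ would have to be constant; finally $\alpha=\alpha_0$ and $\gamma=\gamma_0$ because $d_t$ is non-constant and $b_t\neq0$ a.s. Note also that the non-degeneracy of $f_t^o$ requires $\alpha_0>0$, a constraint coming from the model setup in Section \ref{sec6} that your plan never isolates; if $\alpha_0=\gamma_0=0$ then $f_t^o$ is constant, the identity holds with $\omega\neq\omega_0$, $\beta\neq\beta_0$, and identifiability genuinely fails. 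Until this independence-and-elimination argument (or an equivalent) is supplied, your proof is incomplete at its decisive step.
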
 
Theorem \ref{CON1} considers a more general model but is also extends the asymptotic results in \cite{Ryoko2016} in several directions. In particular, Theorem \ref{CON1} does not impose the assumption that the time-varying parameter   $f_t^o$ is observed at $t=0$. Furthermore, it does not rely on the condition that the likelihood function is maximized on an arbitrarily small neighbourhood around the true parameter $\theta_0$. 
The next result shows the consistency of the ML estimator  in (\ref{smle}) for the Beta-$t$-GARCH model. 
\begin{theorem}\label{CON2}
Let the observed data be generated by a stochastic process $\{y_t\}_{t\in\mathbb{Z}}$ that satisfies the model equations in (\ref{tgass}) at $\theta_0\in \Theta_\delta$ and   let  $\bar \Theta $ be a  compact  set such that $\omega>0$, $\beta>0$, $\alpha\ge0$ , $\gamma\ge-\alpha$ and $v>2$  for any $\theta \in \bar \Theta$. Then the ML estimator $ {\doublehat \theta}_n$ defined in (\ref{smle}) is strongly consistent.
\end{theorem}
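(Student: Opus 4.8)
The plan is to deduce Theorem \ref{CON2} from the general consistency result for the constrained estimator, Theorem \ref{th2}, by verifying its hypotheses \textbf{A1}--\textbf{A3} for the Beta-$t$-GARCH specification. The natural starting point is the explicit form of the stochastic Lipschitz coefficient. Since $\dot\phi(f,Y_t^k,\theta)=\beta+(\alpha+\gamma d_t)\frac{(v+1)y_t^4}{((v-2)f+y_t^2)^2}$ is nonnegative (using $\beta\ge 0$, $\alpha\ge 0$, $\gamma\ge-\alpha$) and strictly decreasing in $f$, its supremum over $\mathcal F_\theta=[\bar\omega,\infty)$ is attained at $f=\bar\omega$, so that $\Lambda_t(\theta)=\beta+(\alpha+\gamma d_t)\frac{(v+1)y_t^4}{((v-2)\bar\omega+y_t^2)^2}$, which is precisely the quantity appearing in Corollary \ref{fpt}. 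This identifies $\Theta_0$ and the empirical region $\hat\Theta_n$ with the set in (\ref{ec}).

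First I would dispatch \textbf{A1} and \textbf{A3}, which mirror the verification underlying Theorem \ref{CON1}. Stationarity and ergodicity of the two-sided solution $\{y_t\}_{t\in\mathbb{Z}}$ are guaranteed by Theorem \ref{SEth} (the very existence of such a process forces $E\log c_t<0$), while $E|l_0(\theta_0)|<\infty$ follows because $f_0^o\ge\bar\omega_0>0$ bounds $\log f_0^o$ from below, the log-moment $E\log^+ f_0^o<\infty$ comes from standard moment bounds for the contracting affine SRE of Theorem \ref{SEth} (the multiplier $c_t$ being bounded), and under correct specification the remaining term reduces to $\frac{v_0+1}{2}\log(1+\varepsilon_0^2/(v_0-2))$, which is integrable. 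For \textbf{A3} I would recheck \textbf{C2}, \textbf{C4}, \textbf{C5} on an arbitrary compact $\Theta\subseteq\Theta_0$: \textbf{C4} holds because $\partial l_t/\partial f$ is bounded in absolute value by $(v+2)/(2\bar\omega)$ (using $f\ge\bar\omega$ and $\frac{y_t^2}{(v-2)f+y_t^2}\le 1$), yielding a bounded, hence integrable, Lipschitz sequence $\eta_t$; \textbf{C5} holds because $l_0(\theta)$ is bounded above a.s.\ (one has $-\tfrac12\log\tilde f_0\le-\tfrac12\log\bar\omega$ and the last summand is nonpositive); and \textbf{C2} is the model's identifiability, established by the standard argument that a.s.\ equality of the conditional densities forces equality of $(\omega,\beta,\alpha,\gamma,v)$.

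The genuinely new content is \textbf{A2}. Conditions (i) and (ii) of Proposition \ref{pp1} hold by uniform boundedness on compacts: for fixed $\bar f$ one has $\frac{y_t^2}{(v-2)+y_t^2/\bar f}\le\bar f$, so $\|\phi(\bar f,Y_t^k,\cdot)\|_\Theta$ is bounded, and $\frac{y_t^4}{((v-2)\bar\omega+y_t^2)^2}\le 1$ gives $\Lambda_t(\theta)\le\beta+(\alpha+\gamma)(v+1)$ bounded, so both $\log^+$ quantities are integrable. The new requirements are almost sure continuity of $\theta\mapsto\log\Lambda_0(\theta)$ on $\bar\Theta$ and $E\|\log\Lambda_0\|_{\bar\Theta}<\infty$, and this is exactly where the hypothesis $\beta>0$ on $\bar\Theta$ (strictly stronger than the $\beta\ge0$ allowed in Theorem \ref{CON1}) enters: it yields the uniform lower bound $\Lambda_0(\theta)\ge\beta\ge\inf_{\bar\Theta}\beta>0$, so that $\log\Lambda_0$ is finite and continuous and, being squeezed between $\log(\inf_{\bar\Theta}\beta)$ and $\log(\sup_{\bar\Theta}[\beta+(\alpha+\gamma)(v+1)])$, is bounded uniformly on the compact $\bar\Theta$; integrability is then immediate.

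I expect the main obstacle to be this last point rather than any single calculation. The condition $E\|\log\Lambda_0\|_{\bar\Theta}<\infty$ is what drives the uniform law of large numbers for $n^{-1}\sum_{t=1}^n\log\Lambda_t(\cdot)$, and hence the convergence of the data-dependent region $\hat\Theta_n$ to $\Theta_0$; without $\beta>0$ the coefficient $\Lambda_0$ can vanish (for instance at $\beta=0$ when $y_t=0$), making $\log\Lambda_0$ unbounded below and destroying this integrability. Keeping explicit track of this is the crux of the argument and explains the strengthened restriction in the statement. Identifiability \textbf{C2} is the only other delicate step, but it is routine for this exponential-type family. Once \textbf{A1}--\textbf{A3} are in place, Theorem \ref{th2} delivers ${\doublehat \theta}_n(\hat f_0)\xrightarrow{\text{a.s.}}\theta_0$ for every initialization $\hat f_0\in\mathbb{C}(\bar\Theta,\mathcal F_{\bar\Theta})$, which is the claim.
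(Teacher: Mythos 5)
Your proposal is correct and follows exactly the route the paper intends: apply Theorem \ref{th2} after verifying \textbf{A1}--\textbf{A3}, reusing the \textbf{C2}, \textbf{C4}, \textbf{C5} and stationarity/integrability checks already carried out in the proof of Theorem \ref{CON1}, and adding the genuinely new ingredient of \textbf{A2} --- the a.s.\ continuity and uniform integrability of $\log\Lambda_0$ on $\bar\Theta$ via the bound $\Lambda_0(\theta)\ge \beta \ge \inf_{\bar\Theta}\beta>0$, which is precisely why the hypothesis is strengthened from $\beta\ge 0$ in Theorem \ref{CON1} to $\beta>0$ here, in line with the paper's remark in Section \ref{sec6} that $E|\log\Lambda_0(\theta)|^r<\infty$ holds ``as long as $\beta>0$''. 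Note that the paper itself gives no explicit proof of Theorem \ref{CON2} (the appendix stops at Theorem \ref{CON1}), so your write-up correctly supplies the omitted verification, including the correct identification $\Lambda_t(\theta)=\beta+(\alpha+\gamma d_t)(v+1)y_t^4/\left((v-2)\bar\omega+y_t^2\right)^2$ that links $\hat\Theta_n$ to condition (\ref{ec}).
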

In contrast to Theorem \ref{CON1}, Theorem \ref{CON2} does not require the unfeasible invertibility  condition in (\ref{cons}) to be satisfied as the optimization of the likelihood is in a region that satisfies an empirical version of (\ref{cons}).

\subsection{Autoregressive model with time-varying coefficient}

The practical relevance of the empirical invertibility conditions is not restricted to volatility models only. On the contrary, it applies to the general class of observation driven models. Consider the first-order autoregressive model with a time-varying autoregressive coefficient and with a fat-tailed distribution as discussed in \cite{nlgas2014} and \cite{DMP204}. This model is specified by the equations
\begin{eqnarray*}
y_t&=&f_ty_{t-1}+\sigma \varepsilon_t,\qquad \qquad \qquad \qquad \qquad \{\varepsilon_t\}\sim t_v,\\
f_{t+1}&=&\omega +\beta f_{t}+\alpha \frac{ (y_t-f_t y_{t-1})y_{t-1}}{1+v^{-1}\sigma^{-2}(y_t - f_t y_{t-1})^2},
\end{eqnarray*}
where $\sigma$, $\omega$, $\beta$, $\alpha$ and $v$ are static parameters that need to be estimated and $t_v$ denotes the Student's $t$ distribution with $v$ degrees of freedom. This model is not exactly of the form in (\ref{model}) as the conditional density of $y_t$ given $f_t$ depends also on the lagged value $y_{t-1}$. However, the extensions of our results required for including this case, and also possibly exogenous variables in the conditional density, are trivial.

This autoregressive model implies a time-varying autocorrelation function. In particular, it can describe time series that exhibit periods of strong temporal persistence, or near-unit-root dynamics, and periods of low dependence, or strong mean reverting behaviour. There is evidence that various time series in economics feature such complex nonlinear dynamics; see \cite{bec2008acr} for an example in real exchange rates. 
By adopting the results of Proposition \ref{pp1} and taking into account that
$$\dot \phi(f, Y_t^k,\theta) = \beta + \alpha \frac{(y_t-f y_{t-1})^2-v\sigma^2}{\left((y_t-f y_{t-1})^2+v\sigma^2\right)^2}v\sigma^2y_{t-1}^2,$$
we obtain that the stochastic coefficient $\Lambda_t(\theta)$ is given by
$$\Lambda_t(\theta)=\max\left\{|\beta-\alpha y_{t-1}^2|,|\beta + \frac{1}{8}\alpha y_{t-1}^2|\right\}.$$
In this case there is not a clear way to derive sufficient conditions to ensure that $E\log\Lambda_t(\theta)<0$. A trivial solution would impose that $\alpha=0$ and $|\beta|<1$ but in this way we get a degenerate parameter region and $f_t$ becomes a static parameter. This situation is not of practical interest. An alternative option is to rely on the results of Section \ref{sec4} and to estimate the parameter region $\hat\Theta_n$. 

\begin{frame}{}
\begin{figure}[h!]
\center
\includegraphics[scale=0.7]{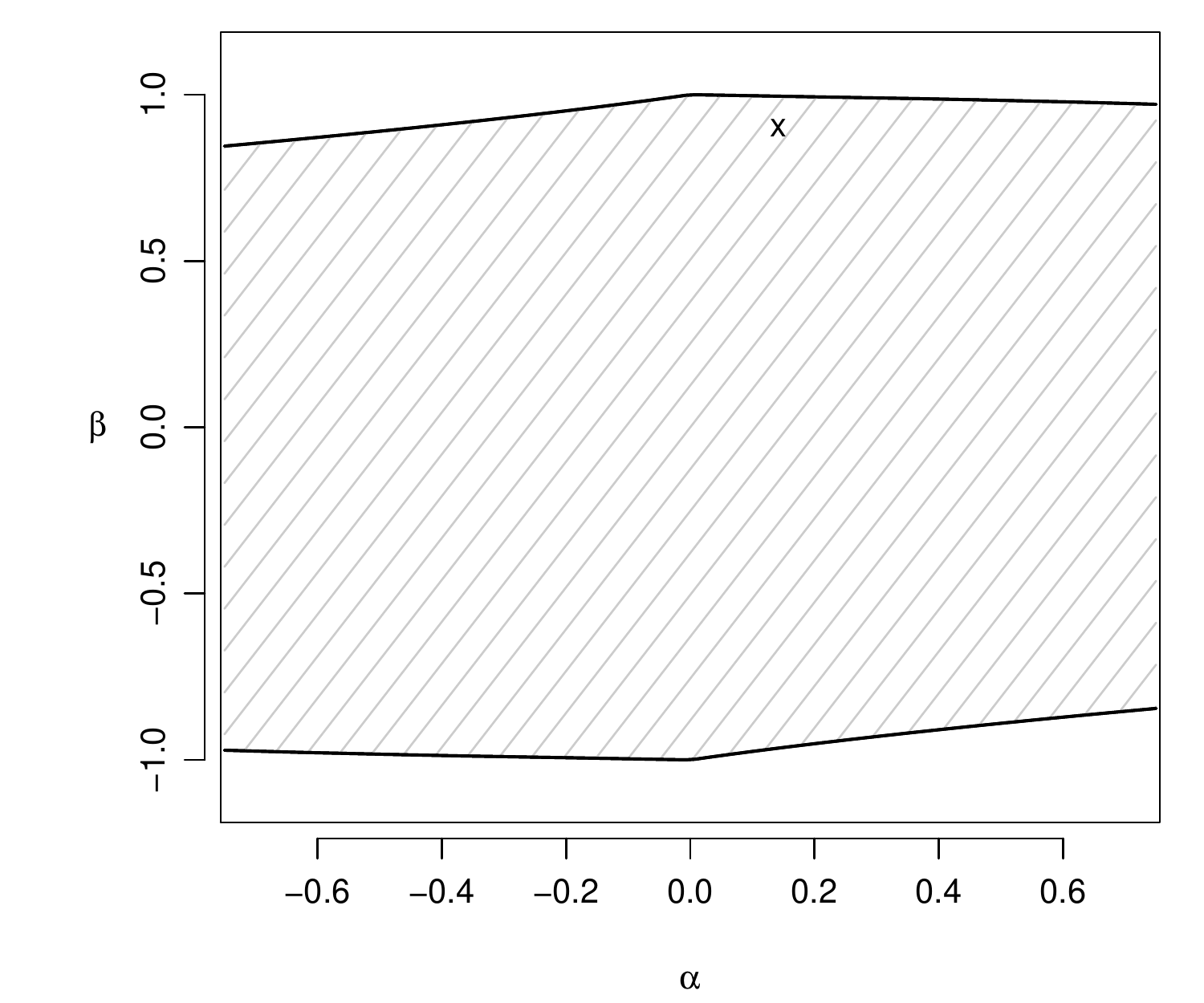}
\vspace{-0.2cm}
\caption{\textit{Parameter region and ML estimate obtained for the autoregressive model with a time-varying autoregressive coefficient 
and applied to the U.S. unemployment claims time series.}}
\end{figure}
\label{fig_3}
\end{frame}

To show how the results of the previous sections can be useful in this situation, we derive the estimated region for the time series of weekly changes of the logarithm of U.S.~unemployment claims; this data set is considered earlier in \cite{nlgas2014}. We analyze this data set using the model given above. From Figure \ref{fig_3} we learn that the maximizer of the likelihood function is contained in the estimated region. This shows how the empirical invertibility condition is not too restrictive. Moreover, due to the results in our study, we can ensure the reliability of the ML estimator.

\subsection{Fat-tailed location model}

Finally, we consider the Student's $t$ location model of \cite{HL2014} which is given by
\begin{eqnarray*}
y_t &=& f_t +\sigma \varepsilon_t,\qquad \qquad \qquad \qquad \qquad \qquad  \{\varepsilon_t\}\sim t_v,\\
f_{t+1} &=& \omega +\beta f_{t}+\alpha \frac{ y_t-f_t}{1+v^{-1}\sigma^{-2}(y_t - f_t )^2},
\end{eqnarray*}
where $\sigma$, $\omega$, $\beta$, $\alpha$ and $v$ are unknown static parameters.
In the application of rail travel data in the United Kingdom, \cite{HL2014}  show that this model is capable of extracting a smooth and robust trend from the rail travel data. \cite{HL2014} also provide an asymptotic theory for the ML estimator of the static parameters of the model. In particular, by relying on Lemma 1 of \cite{JRab2004}, they obtain the ML estimator properties under the restrictive and non-standard assumption that the true time-varying mean at time $t=0$, i.e. $f_{0}^o$, is known. In addition, the asymptotic results derived in \cite{HL2014} are only valid under correct model specification and assuming that the likelihood is maximized on an arbitrarily small parameter space containing $\theta_{0}$.
To complement their results, we address the invertibility issue and obtain new and more general asymptotic results for the ML estimator that do not rely on these restrictive assumptions. 

As long as  $|\beta|<1$, the sequence $\{\hat f_t(\theta)\}$ takes values in $[\bar \omega_l,\bar \omega_u]$, where $\bar \omega_l=(\omega-c)/(1-\beta)$ and $\bar \omega_u=(\omega+c)/(1-\beta)$, with $c=|\alpha|\sqrt{3v\sigma^2}/4$.
Defining the function $s_\theta(x):= v\sigma^2(x^2-v\sigma^2)/(x^2+v\sigma^2)^2$, we obtain that the stochastic coefficient $\Lambda_t(\theta)$ is 
$$\Lambda_t(\theta)=\max\left\{|z_{1t}|,|z_{2t}|\right\},$$
where $z_{1t}$ and $z_{2t}$ are respectively given by
\[ z_{1t} =
  \begin{cases}
    \beta-\alpha       & \text{if} \;\;\;y_t \in [\bar \omega_l, \bar \omega_u],\\
    \beta+\alpha \min\left(s_\theta(y_t-\bar \omega_u), s_\theta(y_t-\bar \omega_l)\right)  & \text{otherwise},\\
  \end{cases}
  \;\;\;\;\;\;\;\;\;\;\;\;\;\;\;
\]
and
\[ z_{2t} =
  \begin{cases}
    \beta+\alpha/8      &  \text{if} \;\;\; y_t\pm \sqrt{3v\sigma^2}  \in [\bar \omega_l, \bar \omega_u],\\
    \beta+\alpha \max\left(s_\theta(y_t-\bar \omega_u), s_\theta(y_t-\bar \omega_l)\right)  & \text{otherwise}.\\
  \end{cases}
\]
An upper bound for $\Lambda_t(\theta)$, independent of $y_t$, is then obtained as 
$$\Lambda_t(\theta)\le \max(|\beta-\alpha|,|\beta+\alpha/8|).$$
This condition can be too restrictive. Figure \ref{fig_4} shows yet another example where these restrictive conditions fail to hold while, on the other hand, their empirical counterparts are satisfied. For illustration purposes, we consider the above model for the time series of monthly changes in the U.S. consumer price index from January 1947 to February 2016. We show in Figure \ref{fig_4} that the estimated parameter region is larger and it contains the parameter estimate.

\begin{frame}{}
\begin{figure}[h!]
\center
\includegraphics[scale=0.75]{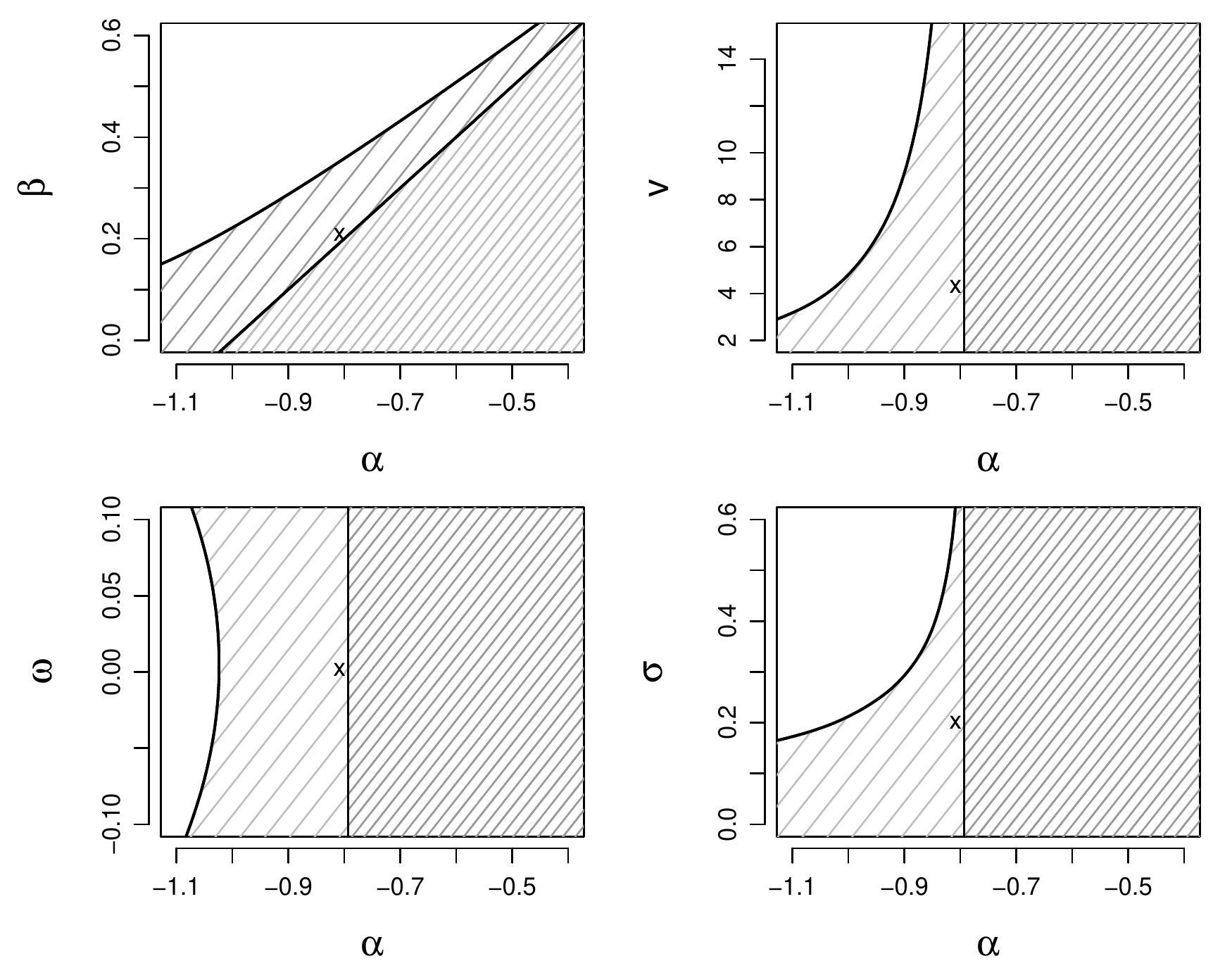}
\vspace{-0.2cm}
\caption{\textit{Parameter region and parameter estimate obtained for the Student's $t$ location model 
 and applied to the U.S. consumer price index time series from January 1947 to February 2016.}}
\end{figure}
\label{fig_4}
\end{frame}


\section{Conclusion}
 \label{sec8}
We have proposed considerably weaker conditions that can be used in practice for ensuring the consistency of the maximum likelihood estimator of the parameter vector in observation-driven time series models. These results are applicable to a wide class of well-known time series models including the generalized autoregressive conditional heteroskedasticity (GARCH) model. Further, we have shown that our consistency results hold for both correctly specified and misspecified models. Finally, we have derived an asymptotic test and confidence bounds for the unfeasible ``true'' invertibility region of the parameter space. The empirical relevance of our theoretical results has been highlighted for a selection of key observation-driven models that are applied to real datasets.

\appendix

\section*{Appendix}

\begin{proof}[Proof of Proposition \ref{pp1}]
To prove this proposition, we first rely on the results of Proposition 3.12 of  \cite{SM2006} and we then employ  the same argument  as in the proof of Theorem 2 of  \cite{Win2013} to relax the uniform contraction condition. This proposition  is closely related to Theorem 2 of  \cite{Win2013}, the main difference is that we explicitly allow the set $\mathcal{F}_\theta$ to depend on $\theta$.

Consider the functional SRE 
 $$\hat f_{t+1}=\Phi_t( \hat f_{t}),\; t\in \mathbb{N},$$
 where the random map $\Phi_t$ is such that $\Phi_t(f)=\phi(f(\cdot), Y_t^k,\cdot)$ for any $f\in\mathbb{C}(C,\mathcal{F}_C)$, where $C$ denotes a compact set. This SRE lies in the separable  Banach space  $\mathbb{C}(C,\mathcal{F}_C)$ equipped with the uniform norm $\|\cdot\|_C$. Therefore,  taking into account that by the mean value theorem
 $$\sup_{f_1,f_2\in\mathcal{F}_C,f_1\neq f_2}\frac{|\phi(f_1, Y_t^k,\theta)-\phi(f_2, Y_t^k,\theta)|}{|f_1-f_2|}\le \sup_{f \in \mathcal{F}_C}| \dot\phi(f, Y_t^k,\theta)|,$$
from Proposition 3.12 of  \cite{SM2006}, it results  that the conditions
\begin{description}
\item[(a)] $E\log^+\|\phi(\bar f, Y_t^k,\cdot)\|_C<\infty \; \text{ for some } \bar f \in \mathcal{F}_C$.
\item[(b)] $E\sup_{\theta \in C}\sup_{f \in \mathcal{F}_C}\log^+|\dot\phi(f, Y_t^k,\theta)|<\infty.$
\item[(c)] $E\sup_{\theta \in C}\sup_{f \in \mathcal{F}_C}\log |\dot\phi(f, Y_t^k,\theta)|<0 .$
\end{description}
 are sufficient to apply Theorem of 3.1 \cite{Bougerol1993} and obtain the convergence result $\|\hat f_t -\tilde f_t\|_{C}\xrightarrow{e.a.s.}0.$ Note that this is  true for any given compact set $C$ that satisfies (a)-(c). 
Now, we define the following stochastic function
$$\Lambda_t^*(\theta_1,\theta_2):=\sup_{f \in \mathcal{F}_{\theta_1}}|\dot \phi(f, Y_t^k,\theta_2)|,$$
and, we define  a compact neighborhood  of $\theta \in \Theta$ with radius $\epsilon >0$ as $B_\epsilon(\theta)=\{\tilde \theta \in \Theta: \|\theta-\tilde \theta\|\le\epsilon\}$. Then, for any non-increasing sequence of constants $\{\epsilon_i\}_{i \in \mathbb{N}}$ such that $\lim_{i\rightarrow \infty}\epsilon_i=0$, the sequence $\left\{\sup_{(\theta_1,\theta_2)\in B_{\epsilon_i}(\theta)\times B_{\epsilon_i}(\theta)}\log \Lambda_0^*(\theta_1,\theta_2)\right\}_{i \in \mathbb{N}}$ is a non-increasing sequence of random variables and by continuity, which is ensured by (iii), we have that
$$\lim_{i\rightarrow \infty}\sup_{(\theta_1,\theta_2)\in B_{\epsilon_i}(\theta)\times B_{\epsilon_i}(\theta)}\log \Lambda_0^*(\theta_1,\theta_2)=\log \Lambda_0(\theta).$$ Condition (ii) implies that $E\sup_{(\theta_1,\theta_2)\in \Theta \times \Theta}\log \Lambda_0^*(\theta_1,\theta_2)\in \mathbb{R}\cup \{-\infty\}$. As a result, we can apply the monotone convergence theorem and obtain
$$ E\lim_{i\rightarrow \infty}\sup_{(\theta_1,\theta_2)\in B_{\epsilon_i}(\theta)\times B_{\epsilon_i}(\theta)}\log \Lambda_0^*(\theta_1,\theta_2)=E\log \Lambda_0(\theta).$$ Therefore, for any $\theta \in \Theta$ such that $E\log \Lambda_0(\theta)<0$ there exists an $\epsilon_\theta>0$ such that
$$ E\sup_{(\theta_1,\theta_2)\in B_{\epsilon_\theta}(\theta)\times B_{\epsilon_\theta}(\theta)}\log \Lambda_0^*(\theta_1,\theta_2)<0.$$ From this and noting that 
 $$\sup_{\theta \in B_{\epsilon_\theta}(\theta)}\sup_{f \in \mathcal{F}_{B_{\epsilon_\theta}(\theta)}}\log |\dot\phi(f, Y_t^k,\theta)|=\sup_{(\theta_1,\theta_2)\in B_{\epsilon_\theta}(\theta)\times B_{\epsilon_\theta}(\theta)}\log \Lambda_0^*(\theta_1,\theta_2),$$
 we obtain that the  conditions (a)-(c) are satisfied for the compact set $B_{\epsilon_\theta}(\theta)$ as (i) implies (a), (ii) implies (b) and (iii) implies (c). Therefore, we conclude that 
$$\|\hat f_t -\tilde f_t\|_{B_{\epsilon_\theta(\theta)}}\xrightarrow{e.a.s.}0.$$ 
The desired result follows as $\Theta$ is  compact and  $\Theta= \bigcup_{\theta \in \Theta}B_{\epsilon_\theta}(\theta)$. Therefore, there exists a finite set of points $\{\theta_1,\dots,\theta_K\}$ such that $\Theta= \bigcup_{k = 1}^K B_{\epsilon_k}(\theta_k)$ and it follows that
$$\|\hat f_t -\tilde f_t\|_{\Theta}=\bigvee_{k=1}^K\|\hat f_t -\tilde f_t\|_{B_{\epsilon_k(\theta_k)}}\xrightarrow{e.a.s.}0.$$
\end{proof}

\begin{proof}[Proof of Proposition \ref{pp3}]
By a.s.~convergence of $\hat \theta_n$ to $\theta_0$, there exists a random integer $T$ such that $\hat \theta_n \in B_{\epsilon_{\theta_0}}(\theta_0)$ for any $n\ge t\ge T$. Keeping the same notation than in the proof of Proposition \ref{pp1} above, let us define the stationary sequence $\rho_t:=\sup_{(\theta_1,\theta_2)\in B_{\epsilon_{\theta_0}}(\theta_0)\times B_{\epsilon_{\theta_0}}(\theta_0)} \Lambda_t^*(\theta_1,\theta_2)$ so that $E\log \rho_0<\infty$.
For $t> T$, we have
\begin{align*}
|\hat f_{t}(\hat \theta_n)-\tilde f_{t}(\theta_0)|&\le \|\hat f_t-\tilde f_t\|_{B_{\epsilon_{\theta_0}}}+|\tilde f_{t}(\hat \theta_n)-\tilde f_{t}(\theta_0)|\\
&\le \rho_t\|\hat f_{t-1}-\tilde f_{t-1}\|_\Theta+|\tilde f_{t}(\hat \theta_n)-\tilde f_{t}(\theta_0)|\\
&\le \prod_{s=T+1}^t\rho_s\|\hat f_{T}-\tilde f_{T}\|_\Theta+|\tilde f_{t}(\hat \theta_n)-\tilde f_{t}(\theta_0)|.
\end{align*}
The first term of the sum converges a.s. to $0$.
One can focus on the last term of the sum that can be bounded with
$$
\underbrace{|\phi(\tilde f_{t-1}(\hat \theta_n),Y_t^k,\hat \theta_n)-\phi(\tilde f_{t-1}(\hat \theta_n),Y_t^k,\theta_0)|}_{w_t(\hat\theta_n)}+|\phi(\tilde f_{t-1}(\hat \theta_n),Y_t^k,\theta_0)-\phi(\tilde f_{t-1}(\theta_0),Y_t^k,\theta_0)|.
$$
For any $\theta\in \Theta$ we have that
$$
|\phi(\tilde f_{t-1}(\hat \theta_n),Y_t^k,\theta)|\le \sup_{f\in\mathcal F_\Theta}|\dot\phi(f, Y_t^k,\theta)|(\|\tilde f\|_\Theta+|\bar f|)+|\phi(\bar f,Y_t^k,\theta)|.
$$
Conditions (i) and (ii) plus the extra condition $E[\log^+\|\tilde f_0\|_\Theta]<\infty$ ensure the existence of the logarithmic moments of $|\phi(\tilde f_{t-1}(\hat \theta_n),Y_t^k,\theta)|$ for any $\theta\in\Theta$. Thus we have $E\sup_{\theta\in\Theta}\log^+ w_t(\theta)<\infty$. Moreover, thanks to the SRE and for $n\ge t\ge T$ we have 
$$
|\phi(\tilde f_{t-1}(\hat \theta_n),Y_t^k,\theta_0)-\phi(\tilde f_{t-1}(\theta_0),Y_t^k,\theta_0)|\le \rho_t
|\tilde f_{t-1}(\hat \theta_n)-\tilde f_{t-1}(\theta_0)|.
$$
By a recursive argument, we obtain for any $n\ge t\ge T$,
\begin{align*}
|\tilde f_{t}(\hat \theta_n)-\tilde f_{t}(\theta_0)|&\le \rho_t|\tilde f_{t-1}(\hat \theta_n)-\tilde f_{t-1}(\theta_0)|+w_t(\hat\theta_n)\\
&\le \sum_{s\le t}\prod_{k=s+1}^t\rho_k w_s(\hat\theta_n).
\end{align*}
Applying Lemma 2.5.2 of \cite{Straumann2005} under $E\sup_{\theta\in\Theta}\log^+ w_t(\theta)<\infty$, we show the uniform convergence on $\Theta$ of the upper bound. We conclude by a continuous argument that this upper bound tends to 0 as $w_s(\hat\theta_n)\to 0$ a.s.~for any $s\le t \le n$ when $t\to\infty$.
\end{proof}

\begin{proof}[Proof of Theorem \ref{th1}]
We prove the theorem from the following intermediate steps:
\begin{description}
\item[(S1)] The model is identifiable, i.e.  $L(\theta_0)>L(\theta)$ for any $\theta\in \Theta$, $\theta\neq\theta_0$.
\item[(S2)] The function $\hat L_n$ converges a.s.~uniformly to $ L_n$ as $n\xrightarrow{} \infty$, i.e. $\|\hat L_n- L_n\|_\Theta \xrightarrow{\text{a.s.}} 0$ as $n\xrightarrow{} \infty$.
\item[(S3)] For any $\epsilon>0$, the following inequality holds with probability 1
\begin{eqnarray}\label{c1}
\limsup_{n \xrightarrow{}\infty}\sup_{\theta \in  B^c(\theta_0,\epsilon)}\hat L_n(\theta)<L(\theta_0),
\end{eqnarray}
where $ B^c(\theta_0,\epsilon)= \Theta \setminus B(\theta_0, \epsilon)$   with  $B(\theta_0,\epsilon)=\{\theta\in \Theta : \|\theta_0-\theta\|< \epsilon\}$;
\item[(S4)]  The result in (S3) implies strong consistency. 
 \end{description}
\begin{inparaenum}[(S1)]

\item First note that, by \textbf{C1},  $L(\theta_0)$ exists and is finite and, by \textbf{C5}, $L(\theta)$ exists for any $\theta \in \Theta$ with either $L(\theta)=-\infty$ or $L(\theta)\in \mathbb{R}$. For the values $\theta\in \Theta$ such that $L(\theta)=-\infty$, the result $L(\theta_0)>L(\theta)$ follows immediately as $L(\theta_0)$ is finite. Hence, from now on, we consider only the values $\theta\in \Theta$ such that $L(\theta)$ is finite. 
It is well known that $\log(x)\le x-1$ for any $x\in \mathbb{R}^+$ with the equality  only in the case $x=1$. This implies that almost surely
\begin{eqnarray}\label{ineq}
l_0(\theta)-l_0(\theta_0)\le\frac{p(y_0|\tilde f_0(\theta),\theta)}{p(y_0|f_0^o,\theta_0)}-1.
\end{eqnarray}
Moreover, we have that  the inequality in (\ref{ineq}) holds as a strict inequality with positive probability as the possibility that $p(y_0|\tilde f_0(\theta),\theta)=p(y_0|f_0^o,\theta_0)$ a.s.~is ruled out by \textbf{C2} for any $\theta \neq \theta_{0}$. As a result
\begin{eqnarray*}
E\left[E\left[l_0(\theta)-l_0(\theta_0)|y^{-1}\right]\right] < E\left[E\left[\frac{p(y_0|\tilde f_0(\theta),\theta)}{p(y_0|f_0^o,\theta_0)}\Big|y^{-1}\right]\right]-1=0, \quad  \ \forall \ \theta \neq \theta_{0}
\end{eqnarray*}
where the right hand side of the inequality is equal to zero as $p(y_0|f_0^o,\theta_0)$ is the true conditional density function.
The desired result $L(\theta_0)>L(\theta)$ follows as $l_0(\theta)-l_0(\theta_0)$ is integrable and therefore by the law of total expectation
$$L(\theta)-L(\theta_0)=E[E[l_0(\theta)-l_0(\theta_0)|y^{-1}]]<0 \quad \ \forall \ \theta \neq \theta_{0}.$$
This concludes the proof of step (S1).


\item
First, note that $\|\hat f_t-\tilde f_t\|_\Theta \xrightarrow{\text{e.a.s.}} 0$ as $t \to \infty$ by an application of Proposition \ref{pp1} as conditions \emph{(i)-(iii)} hold by  \textbf{C3} and  $\{y_t\}_{t\in \mathbb{Z}}$ is stationary and ergodic by \textbf{C1}. Second, by Lemma 2.1 of  \cite{SM2006} the series $\sum_{t=N}^\infty \eta_t \|\hat f_t-\tilde f_t\|_\Theta $ converges a.s. and therefore the inequality in \textbf{C4} implies $\sum_{t=N}^\infty  \|\hat l_t-l_t\|_\Theta < \infty $ a.s.. As a result  $ n^{-1} \sum_{t=1}^n \|\hat l_t - l_t\|_\Theta \xrightarrow{\text{a.s.}} 0$  and  $\|\hat L_n- L_n\|_\Theta \xrightarrow{\text{a.s.}} 0$  follows as $\|\hat L_n- L_n\|_\Theta \le n^{-1} \sum_{t=1}^n \|\hat l_t - l_t\|_\Theta$ for any $n\in \mathbb{N}$. This concludes the proof of (S2).

\item
First, note that in virtue of (S2) $\hat L_n$ is asymptotically equivalent to $L_n$ and therefore we just need to prove that (S3) holds for $L_n$. To show this, a similar  argument as in the proof of  Lemma 3.11 of \cite{Pfanzagl1969} is employed.
Consider any decreasing sequence of real numbers $\{\epsilon_i\}_{i \in \mathbb{N}}$ such that $\lim_{i  \xrightarrow{}\infty}\epsilon_i=0$, then $\{\sup_{\theta^*\in B(\theta,\epsilon_i)}l_0(\theta^*)\}_{i \in \mathbb{N}}$ defines a non-increasing sequence of random variables and, by continuity, we have that  $\lim_{i  \xrightarrow{}\infty} \sup_{\theta^*\in B(\theta,\epsilon_i)}l_0(\theta^*)=l_0(\theta)$. As \textbf{C5}  implies  $E\sup_{\theta \in \Theta} l_0(\theta)<+\infty$ we can apply the monotone convergence theorem and we get 
$$\lim_{i  \xrightarrow{}\infty} E \sup_{\theta^*\in B(\theta,\epsilon_i)}l_0(\theta^*)=L(\theta).$$
Recalling that $L(\theta_0)>L(\theta)$ by (S1), we have that for any $\theta\neq\theta_0$ there exists an $\epsilon_\theta>0$ such that 
$$\limsup_{n\xrightarrow{}\infty}\sup_{\theta^* \in  B(\theta,\epsilon_\theta)} L_n(\theta^*)\le E \sup_{\theta^*\in B(\theta,\epsilon_\theta)}l_0(\theta^*)<L(\theta_0).$$ 
Finally, by compactness of $B^c(\theta_0,\epsilon)$ and by $B^c(\theta_0,\epsilon)\subseteq \bigcup_{\theta\in B^c(\theta_0,\epsilon) }B(\theta,\epsilon_\theta)$, there is a finite set of points $\{\theta_1,\dots,\theta_K\}$ such that $B^c(\theta_0,\epsilon)\subseteq \bigcup_{k=1}^K B(\theta_k,\epsilon_{k})$. Therefore, for any $n \in \mathbb{N}$ we have
$$\sup_{\theta \in  B^c(\theta_0,\epsilon)} L_n(\theta)\le \bigvee_{k=1}^K n^{-1}\sum_{t=1}^n\sup_{\theta \in  B(\theta_k,\epsilon_k)} l_t(\theta),$$
and taking the limit in both sides of the inequality it results
$$\limsup_{n \xrightarrow{}\infty}\sup_{\theta \in  B^c(\theta_0,\epsilon)} L_n(\theta)\le \bigvee_{k=1}^K E\sup_{\theta \in  B(\theta_k,\epsilon_k)} l_0(\theta)<L(\theta_0).$$ This concludes the proof of (S3).

\item This last step follows from standard arguments due to \cite{wald1949}.
From the definition of the ML estimator, we have $\hat L_n(\hat\theta_n(\hat f_0) )\ge \hat L_n (\theta_0)$ for any $n \in \mathbb{N}$. Therefore, given the result in (S3),  we have that
\begin{eqnarray}\label{ccpp}
\liminf_{n\xrightarrow{}\infty} \hat L_n(\hat\theta_n(\hat f_0) )\ge L (\theta_0).
\end{eqnarray} 
Now, if we assume that there exists an $\epsilon>0$ such that $\limsup_{n\xrightarrow{}\infty}\|\hat\theta_n(\hat f_0) -\theta_0\|\ge \epsilon$, then in virtue of (\ref{cc}) it must hold that
$$\limsup_{n\xrightarrow{}\infty} \sup_{\theta \in  B^c(\theta_0,\epsilon)}\hat L_n(\theta)\ge L (\theta_0),$$ 
but because of (\ref{c1}) this event has probability zero. As a result, $\limsup_{n\xrightarrow{}\infty}\|\hat\theta_n(\hat f_0) -\theta_0\|< \epsilon$ with probability 1 for any $\epsilon>0$. This concludes the proof of the theorem.
\end{inparaenum}
\end{proof}


\begin{proof}[Proof of Theorem \ref{th2}]
To prove this theorem we show that the steps (S1)-(S4) in the proof of Theorem \ref{th1} hold replacing the set $\Theta$ with the set $\hat \Theta_n$.

First we show that the  following results hold true

\begin{description}
\item[(a)] Almost surely, for large enough $n$, the true parameter vector $\theta_0$ is  contained in the set $\hat\Theta_n$.
\item[(b)] Almost surely, for large enough $n$, the set $\hat\Theta_n$ is  contained in the compact set $\Theta_\delta$ defined as $\Theta_{\delta/2}:=\{\theta \in \bar \Theta: E\log \Lambda_0(\theta)\le-\delta/2\}$.
 \end{description}
By the a.s.~continuity of  $ \log \Lambda_t(\theta)$  in $\bar\Theta$ ensured by \textbf{A2}, the sequence $\{\log \Lambda_t\}_{t \in \mathbb{N}}$ is a stationary and ergodic sequence of elements in the separable Banach space $\mathbb{C}(\bar\Theta,\mathbb{R})$ equipped with the uniform norm $\|\cdot\|_{\bar\Theta}$. The uniform integrability condition $E\|\log\Lambda_0\|_{\bar\Theta}<\infty$ in \textbf{A2} allows to apply the ergodic theorem of \cite{rao1962} and it follows that 
\begin{eqnarray}\label{unicon}
\left\|n^{-1}\sum_{t=1}^n \log \Lambda_t-E\log \Lambda_0\right\|_{\bar\Theta}\xrightarrow{\text{a.s.}} 0,\; n \xrightarrow{} \infty.
\end{eqnarray}
This implies that  for a large enough $n$ all the points $\theta\in \bar\Theta$ such that $E\log\Lambda_0(\theta)<-\delta$ are contained in $\hat\Theta_n$. Therefore, the result (a) holds as condition \textbf{A1} ensures that $E\log\Lambda_0(\theta_0)<-\delta$. As concerns the result (b), the application of the uniform ergodic theorem implies that the map $\theta \mapsto E\log \Lambda_0(\theta)$ is continuous in $\bar \Theta$. This yields that the set  $\Theta_{\delta/2}$ is compact. Finally,  $\hat\Theta_n \subset \Theta_{\delta/2}$ almost surely for large enough $n$ follows immediately from (\ref{unicon}).

 Indeed, $\Theta_{\delta/2}$ is a compact set contained in $\bar \Theta$ and such that $E \log \Lambda_0(\theta)<0$ for any $\theta \in \Theta_{\delta/2}$. Therefore, from the result (b) together with \textbf{A1}-\textbf{A3}, it is easy to see that (S1) is a.s.~satisfied for large enough $n$ as it holds for the set $\Theta_{\delta/2}$. We also have that (S2) and (S3) are satisfied for the set $\hat \Theta_n$ as they hold for the set $\Theta_{\delta/2}$. Finally, the step (S4) follows in the same way as in the proof of Theorem \ref{th1} by noting that (a) implies that 
 $$\hat L_n(\doublehat\theta_n(\hat f_0) )\ge \hat L_n (\theta_0)$$
almost surely for large enough $n$.
 
 \end{proof}

\begin{proof}[Proof of Theorem \ref{th22}]
The expectation $E\log p^o(y_0|y^{-1})$ exists and is finite by \textbf{M1} and moreover $E\log p(y_0|\tilde f_0(\theta),\theta)$ exists for any $\theta \in \Theta_0$  by \textbf{M3}. This implies that the marginal KL divergence $KL(\theta)$ is well defined for any $\theta \in \Theta_0$. The condition \textbf{M2} guarantees that $L(\theta)$ has a unique maximizer in $\Theta_0$, which is denoted by $\theta^*$. This implies that $\theta^*$ is the unique minimizer of the average KL divergence $KL(\theta)$. As concerns the consistency result, replacing $\theta_0$ with $\theta^*$, the proof is equivalent to the the proof of Theorem \ref{th2}. This can be easily seen as the step (S1) holds by assumption replacing $\theta_0$ with $\theta^*$. Then, the steps (S2)-(S4) do not rely on the correct specification of the model and the consistency is obtained with respect to maximizer of the limit function $L$, which in this case is given by $\theta^*$.

\end{proof}

\begin{proof}[Proof of Proposition \ref{pr.ci}]
For any $\theta\in\Theta$, the random coefficient $\Lambda_t(\theta)$ is a measurable function of $Y_{t}^{k}$ for any  given $k\in \mathbb{N}$. Therefore, as $\{y_t\}_{t\in\mathbb{Z}}$ is geometrically $\alpha$-mixing ,   it results that  $\{\log \Lambda_t(\theta)\}_{t \in \mathbb{Z}}$ is  geometrically $\alpha$-mixing as well. Given  the convergence in probability of $\hat \sigma^2_n$ to  $$\lim_{n\rightarrow\infty} \text{Var}\left(n^{-1/2}\sum_{i=1}^n \log \Lambda_t(\theta) \right)$$ and accounting that $E|\log\Lambda_t(\theta)|^r<\infty$, the asymptotic normality result then follows immediately by an application of a central limit theorem for strong mixing processes (see for instance Theorem 7.8 of \cite{durrett2004})   together with an application of Slutsky's theorem.
\end{proof}

\begin{proof}[Proof of Theorem \ref{SEth}]
First note that  the model equation $f_{t+1}^o=\omega_0+f^o_t c_t$ is a stochastic recurrence equation of the form $f_{t+1}^o=\psi_t(f^o_t)$, where $\psi_t(x):=\omega_0+x c_t$ for any $x \in [0,\infty)$. Therefore, $\{\psi_t\}_{t \in \mathbb{Z}}$ is a stochastic sequence of maps from $[0,\infty)$ into $[0,\infty)$. The proof of the if part of the theorem follows noting   that the condition $E\log c_t<0$ is sufficient to satisfy the assumptions of Theorem 3.1 in  \cite{Bougerol1993}. In particular, the first assumption is  satisfied as $E |\omega_0+x c_t|<\infty$ for any $x\in [0,\infty)$ whereas the second assumption immediately holds by $E\log c_t<0$.

As concerns the only if part, we consider a similar argument as in \cite{bougerol1992}. 
In particular,  we show that if $\{f_t^o\}_{t \in \mathbb{Z}}$ is a stationary and ergodic solution of  (\ref{tgass}), then $E\log c_t$ has to be strictly negative. From the recursion
$$f_t^o=\omega_0\left(1+\sum_{k=1}^{n-1}\prod_{i=1}^kc_{t-i}\right)+\prod_{i=1}^nc_{t-i}f^o_{t-n},$$
it follows that almost surely the following inequality holds
$$\sum_{k=1}^{n-1}\prod_{i=1}^{k}c_{t-i}\le f^o_t, \;\;\;\; \forall \; n\in  \mathbb{N}.$$
This means that $\lim_{n \to \infty}\sum_{k=1}^{n-1}\prod_{i=1}^{k}c_{t-i}$ has to be finite almost surely and therefore $\prod_{i=1}^{k}c_{t-i}$ has to converge almost surely to zero as $k \rightarrow \infty$. As $\{c_t\}_{t \in \mathbb{Z}}$ is an i.i.d sequence of random variables, the almost sure convergence to zero of $\prod_{i=1}^{k}c_{t-i}$ implies that $E\log c_t$ is strictly negative by lemma 2.1 of \cite{bougerol1992}. This concludes the proof of the theorem.
\end{proof}

\begin{proof}[Proof of Theorem \ref{moments}]

When the process admits a stationary solution, the following representation holds
$$f_t^o=\omega_0\left(1+\sum_{k=1}^{\infty}\prod_{i=1}^kc_{t-i}\right).$$ 
In the case $z\in [1,\infty)$, by the Minkowski inequality and considering that $\{c_t\}_{t\in \mathbb{Z}}$ is an i.i.d. sequence of positive random variables, we have that
$$\left(E(f_t^o)^z\right)^{1/z}\le\omega_0\left(1+\sum_{k=1}^{\infty} (E c_{t-i}^z)^{k/z}\right).$$ 
Therefore, when $E c_{t-i}^z<1$, the result $E(f_t^o)^z<\infty$ follows from the convergence of the series $\sum_{k=1}^{n} (E c_{t-i}^z)^{k/z}$. As concerns the case $z\in [0,1)$, by sub-additivity we have that
$$E(f_t^o)^z\le\omega_0^z\left(1+\sum_{k=1}^{\infty} (E c_{t-i}^z)^{k}\right).$$  Then, as before, the desired result follows from the convergence of the series $\sum_{k=1}^{n} (E c_{t-i}^z)^{k}$.
\end{proof}

\begin{proof}[Proof of Theorem \ref{CON1}]
 First note that the expression of the probability density function of a Student's $t$ random variable with $v$ degrees of freedom is
$$k_v(x)=s(v)(1+v^{-1}x^2)^{-(v+1)/2},$$
where
$$s(v)=\frac{\Gamma\left(2^{-1}(v+1)\right)}{\sqrt{v\pi} \Gamma\left(2^{-1}v\right)},$$
and where $\Gamma$ denotes the gamma function.

In the following we check that the conditions C1-C5 are satisfied, then the proof  follows by an application of Theorem  \ref{th1}.

(C1) The stationarity and ergodicity of the sequence $\{y_t\}_{t\in\mathbb{Z}}$ is a direct consequence of Theorem \ref{SEth}. In the following, we prove that the integrability condition $E|l_0(\theta_0)|\le\infty$ is satisfied. First, note that $l_0(\theta_0)$ is given by
$$l_0(\theta_0)=\log s(v_0)-\frac{1}{2}\log f_0^o - \frac{v_0+1}{2}\log \left( 1+v_0^{-1}\varepsilon_0^2\right),$$
therefore we just need to show that $E|\log  f_0^o|<\infty$ holds. Consider a decreasing sequence of numbers $\{\epsilon_i\}_{i \in \mathbb{N}}$, $\epsilon_i >0$, such that $\lim_{i \rightarrow \infty}\epsilon_i=0$, then $\{(c_t^{\epsilon_i}-1)/{\epsilon_i}\}_{i \in \mathbb{N}}$ is a decreasing sequence of random variables such that $\lim_{i \rightarrow \infty}(c_t^{\epsilon_i}-1)/{\epsilon_i}=\log c_t$. An application of the monotone convergence theorem leads to
$$\lim_{i \rightarrow \infty}E\left(\frac{c_t^{\epsilon_i}-1}{\epsilon_i}\right)= E\log c_t.$$
Therefore if $E\log c_t<0$, then there exists an $\bar\epsilon>0$ such that  $E(c_t^{\bar\epsilon}-1)/{\bar\epsilon}<0$ and thus $E c_t^{\bar\epsilon}<1$. In virtue of Theorem \ref{moments},  $E(f_t^o)^{\bar\epsilon}<\infty$ and thus we have that $E \log^+f^o_t<\infty$.  The desired result follows as $f^o_t\ge \omega_0/(1-\beta_0)>0$ a.s. and therefore $E \log^+f^o_t<\infty$ implies $E |\log f^o_t|<\infty$.

(C2)  Note that  $a_1k_{v_1}(a_1 x)=a_2k_{v_2}(a_2 x)$ for any $x\in \mathbb{R}$ if and only if $(v_1,a_1)=(v_2,a_2)$. Therefore, if $\varepsilon_0\sim t_v$ then $a_1k_{v_1}(a_1 \varepsilon_0)=a_2k_{v_2}(a_2 \varepsilon_0)$ a.s. if and only if $(v_1,a_1)=(v_2,a_2)$ as $\varepsilon_0$ is an absolutely continuous random variable with a positive density function on $\mathbb{R}$.   As a result, considering that  $l_0(\theta_0)=l_0(\theta)$ a.s. if and only if 
$$k_{v_0}(\varepsilon_0)=\sqrt{\frac{f_0^o}{\tilde f_0(\theta)}}k_v\left(\sqrt{\frac{f_0^o}{\tilde f_0(\theta)}}\varepsilon_0\right) \; \text{a.s.},$$
we have  that  $l_0(\theta_0)=l_0(\theta)$  a.s. if and only if $v=v_0$ and $f_0^o=\tilde f_0(\theta_0)$ a.s.. This means that  the non-trivial implication $l_0(\theta_0)=l_0(\theta)$ a.s. only if  $\theta = \theta_0$ is satisfied if we can show that, given $v=v_0$, $f_0^o=\tilde f_0(\theta)$ a.s. only if  $\theta = \theta_0$. Considering that the sequence $\{\tilde f_t\}_{t \in \mathbb{Z}}$ is stationary, we have that $f_0^o=\tilde f_0(\theta)$ a.s. is the same as $f_t^o=\tilde f_t(\theta)$ a.s. for any $t \in \mathbb{Z}$. Assuming $f_{t}^o=\tilde f_{t}(\theta)$ a.s., the difference $f_{t+1}^o - \tilde f_{t+1}(\theta)$ satisfies
$$f_{t+1}^o -\tilde f_{t+1}(\theta)=\omega_0-\omega + f_{t}^o z_{t},$$
$$z_{t}=\beta_0-\beta +\Big(\alpha_{0}-\alpha+(\gamma_{0}-\gamma)d_{t}\Big)(v_0+1)b_t.$$
Now, the first step is to show that if  $f_{t+1}^o - \tilde f_{t+1}(\theta)=0$ a.s., then $\omega_0=\omega$, the proof is by contradiction. 
Assume that  $\omega_0\neq\omega$ and  $f_{t+1}^o - \tilde f_{t+1}(\theta)=0$ a.s., then it must be that  $f_{t}^o z_{t}=\omega-\omega_0 \neq 0$ a.s.. Noting that $f_{t}^o$ is independent of $z_{t}$, the only way this is possible is if  both $f_{t}^o$ and $z_{t}$ are constants different from zero. However, the possibility that $f_{t}^o$ has a degenerate distribution is ruled out by $\alpha_{1,0}>0$, therefore $\omega=\omega_0$. As $\omega=\omega_0$ and  $f_{t+1}^o $ is non-zero with probability 1, the only way to have $f_{t+1}^o -\tilde f_{t+1}(\theta)$ a.s. is if $z_t=0$ a.s.. The second step is to show that we need also $\beta=\beta_0$. Using the same argument as before, to have $\beta \neq \beta_0$ and $z_t=0$ a.s.  the random variable $b_t$ has to be constant as $b_t$ is independent of $d_t$. However, $b_t$ is non-constant for any $v_0\in (2,+\infty)$. Therefore, we have that $\beta=\beta_0$. Finally, having   $\beta=\beta_0$, to have $z_t=0$ a.s.  it must be that  $\big(\alpha_{0}-\alpha+(\gamma_{0}-\gamma)d_{t}\big)=0$ a.s.. Indeed, as $d_t$ is non-constant, this is possible only if $\alpha=\alpha_{0}$ and $\gamma=\gamma_{0}$. This concludes the proof.

(C3) This condition is immediately satisfied by Corollary \ref{fpt}.

(C4) From the expression of $l_t(\theta)$ and by an application of the mean value theorem, it results that 
$$|\hat l_t(\theta)- l_t(\theta)|\le |r_t(\theta)| |\hat f_t(\theta) -\tilde f_t(\theta)|,$$
for any $\theta \in \Theta$ and any $t\in \mathbb{N}$. The stochastic coefficient $r_t(\theta)$ has the following expression
$$r_t(\theta)=2^{-1}f^*_t(\theta)^{-1}\left(\frac{(v+1)v^{-1}f^*_t(\theta)y^2_t}{1+v^{-1}f^*_t(\theta)y^2_t}-1\right),$$
 where  $f^*_t(\theta)$ a point between $\tilde f_t(\theta)$ and $\hat f_t(\theta)$. Considering that $\tilde f_t(\theta)$ and $\hat f_t(\theta)$ lie in the set $[c,+\infty )$, $c=\inf_{\theta\in \Theta}\omega/(1-\beta)>0$, it results that 
 \begin{eqnarray*}
\|\hat l_t- l_t\|_\Theta &\le& \| r_t\|_\Theta \|\hat f_t -\tilde f_t\|_\Theta\\
&\le& \bar r  \|\hat f_t -\tilde f_t\|_\Theta,
 \end{eqnarray*}
where
 $$\bar r=2^{-1}c^{-1}\left(1+c^{-1}\left(\max_{\theta \in \Theta}v+1\right)\right).$$
 This shows that C4 is satisfied setting $\eta_t=\bar r$ for any $t\in \mathbb{N}$.

(C5) In view of $\tilde f_0(\theta)\ge \inf_{\theta \in \Theta}\omega/(1-\beta)>0$ a.s. for any $\theta \in \Theta$, it results that
$$\sup_{\theta\in\Theta}l_0(\theta)\le \sup_{\theta \in \Theta}s(v)-\frac{1}{2}\log\left(\inf_{\theta \in \Theta}\omega/(1-\beta)\right)<\infty,$$
 with probability 1. This proves the desired result  $E\|l_0\vee 0\|_\Theta<\infty $.
\end{proof}


\bibliographystyle{apalike}
\bibliography{references}

\begin{thebibliography}{}

\bibitem[Bec et~al., 2008]{bec2008acr}
Bec, F., Rahbek, A., and Shephard, N. (2008).
\newblock {The ACR Model: A Multivariate Dynamic Mixture Autoregression}.
\newblock {\em Oxford Bulletin of Economics and Statistics}, 70(5):583--618.

\bibitem[Berkes et~al., 2003]{berkes2003}
Berkes, I., Horv\'ath, L., and Kokoszka, P. (2003).
\newblock {GARCH processes: structure and estimation}.
\newblock {\em Bernoulli}, 9(2):201--227.

\bibitem[Blasques et~al., 2015]{BSGW2015}
Blasques, F., Gorgi, P., Koopman, S.~J., and Wintenberger, O. (2015).
\newblock {A Note on ``Continuous Invertibility and Stable QML Estimation of
  the EGARCH(1,1) Model''}.
\newblock {\em Tinbergen Institute Discussion Paper 15-131/III}.

\bibitem[Blasques et~al., 2014a]{BSA2014}
Blasques, F., Koopman, S.~J., and Lucas, A. (2014a).
\newblock {Maximum Likelihood Estimation for Generalized Autoregressive Score
  Models}.
\newblock {\em Tinbergen Institute Discussion Paper 14-029/III}.

\bibitem[Blasques et~al., 2014b]{nlgas2014}
Blasques, F., Koopman, S.~J., and Lucas, A. (2014b).
\newblock Optimal formulations for nonlinear autoregressive processes.
\newblock {\em Tinbergen Institute Discussion Paper 14-103/III}.

\bibitem[Blasques et~al., 2014c]{blasques2014se}
Blasques, F., Koopman, S.~J., and Lucas, A. (2014c).
\newblock Stationarity and ergodicity of univariate generalized autoregressive
  score processes.
\newblock {\em Electronic Journal of Statistics}, 8(1):1088--1112.

\bibitem[Bollerslev, 1986]{bol1986}
Bollerslev, T. (1986).
\newblock {Generalized Autoregressive Conditional Heteroskedasticity}.
\newblock {\em Journal of Econometrics}, 31(3):307--327.

\bibitem[Bougerol, 1993]{Bougerol1993}
Bougerol, P. (1993).
\newblock {Kalman Filtering with Random Coefficients and Contractions}.
\newblock {\em SIAM Journal on Control and Optimization}, 31(4):942--959.

\bibitem[Bougerol and Picard, 1992]{bougerol1992}
Bougerol, P. and Picard, N. (1992).
\newblock {Strict Stationarity of Generalized Autoregressive Processes}.
\newblock {\em The Annals of Probability}, 20(4):1714--1730.

\bibitem[Cox, 1981]{cox1981}
Cox, D.~R. (1981).
\newblock {Statistical Analysis of Time Series: Some Recent Developments}.
\newblock {\em Scandinavian Journal of Statistics}, 8(2):93--115.

\bibitem[Creal et~al., 2013]{ckl2013}
Creal, D., Koopman, S.~J., and Lucas, A. (2013).
\newblock {Generalized Autoregressive Score Models with Applications}.
\newblock {\em Journal of Applied Econometrics}, 28(5):777--795.

\bibitem[Davis et~al., 2003]{Davis2003}
Davis, R.~A., Dunsmuir, W. T.~M., and Streett, S.~B. (2003).
\newblock {Observational-driven models for Poisson counts}.
\newblock {\em Biometrika}, 90(4):777--790.

\bibitem[Delle~Monache and Petrella, 2016]{DMP204}
Delle~Monache, D. and Petrella, I. (2016).
\newblock Adaptive models and heavy tails.
\newblock {\em Bank of England Working Paper No. 577}.

\bibitem[Durrett, 2004]{durrett2004}
Durrett, R. (2004).
\newblock {\em Probability: theory and examples}.
\newblock Duxbury Press.

\bibitem[Engle, 1982]{engle1982}
Engle, R.~F. (1982).
\newblock {Autoregressive Conditional Heteroscedasticity with Estimates of the
  Variance of United Kingdom Inflation}.
\newblock {\em Econometrica}, 50:987--1007.

\bibitem[Engle, 2002]{Engle2002}
Engle, R.~F. (2002).
\newblock {Dynamic Conditional Correlation}.
\newblock {\em Journal of Business \& Economic Statistics}, 20(3):339--350.

\bibitem[Engle and Manganelli, 2004]{EM2004}
Engle, R.~F. and Manganelli, S. (2004).
\newblock {Conditional Autoregressive Value at Risk by Regression Quantiles}.
\newblock {\em Journal of Business \& Economic Statistics}, 22(4):367--381.

\bibitem[Engle and Russell, 1998]{EngleRussel1998}
Engle, R.~F. and Russell, J.~R. (1998).
\newblock {Autoregressive Conditional Duration: A New Model for Irregularly
  Spaced Transaction Data}.
\newblock {\em Econometrica}, 66(5):1127--1162.

\bibitem[Francq and Zakoian, 2004]{FZ2004}
Francq, C. and Zakoian, J.~M. (2004).
\newblock {Maximum Likelihood Estimation of Pure GARCH and ARMA-GARCH
  Processes}.
\newblock {\em Bernoulli}, 10(4):605--637.

\bibitem[Francq and Zako{\"\i}an, 2006]{francq2006}
Francq, C. and Zako{\"\i}an, J.-M. (2006).
\newblock {Mixing properties of a general class of GARCH(1,1) models without
  moment assumptions on the observed process}.
\newblock {\em Econometric Theory}, 22(5):815--834.

\bibitem[Glosten et~al., 1993]{GLOSTEN1993}
Glosten, L.~R., Jagannathan, R., and Runkle, D.~E. (1993).
\newblock {On the Relation between the Expected Value and the Volatility of the
  Nominal Excess Return on Stocks}.
\newblock {\em The Journal of Finance}, 48(5):1779--1801.

\bibitem[Granger and Andersen, 1978]{Granger197887}
Granger, C. and Andersen, A. (1978).
\newblock {On the invertibility of time series models}.
\newblock {\em Stochastic Processes and their Applications}, 8(1):87 -- 92.

\bibitem[Harvey, 2013]{H2013}
Harvey, A. (2013).
\newblock {\em Dynamic Models for Volatility and Heavy Tails: With Applications
  to Financial and Economic Time Series}.
\newblock New York: Cambridge University Press.

\bibitem[Harvey and Luati, 2014]{HL2014}
Harvey, A. and Luati, A. (2014).
\newblock {Filtering With Heavy Tails}.
\newblock {\em Journal of the American Statistical Association},
  109(507):1112--1122.

\bibitem[Ito, 2016]{Ryoko2016}
Ito, R. (2016).
\newblock {Asymptotic Theory for Beta-t-GARCH}.
\newblock {\em Cambridge Working Papers in Economics CWPE1607}.

\bibitem[Jensen and Rahbek, 2004]{JRab2004}
Jensen, S.~T. and Rahbek, A. (2004).
\newblock {Asymptotic Inference for Nonstationary GARCH}.
\newblock {\em Econometric Theory}, 20(6):1203--1226.

\bibitem[Lee and Hansen, 1994]{Lee94}
Lee, S. and Hansen, B. (1994).
\newblock {Asymptotic theory for the GARCH(1,1) quasi-maximum likelihood
  estimator}.
\newblock {\em Econometric Theory}, 10(1):29--52.

\bibitem[Lumsdaine, 1996]{Lum1996}
Lumsdaine, R.~L. (1996).
\newblock {Consistency and Asymptotic Normality of the Quasi-Maximum Likelihood
  Estimator in IGARCH(1,1) and Covariance Stationary GARCH(1,1) Models}.
\newblock {\em Econometrica}, 64(3):575--596.

\bibitem[Nelson, 1991]{Nelson1991}
Nelson, D.~B. (1991).
\newblock {Conditional Heteroskedasticity in Asset Returns: A New Approach}.
\newblock {\em Econometrica}, 59(2):347--370.

\bibitem[Newey and West, 1987]{Newey1987}
Newey, W. and West, K. (1987).
\newblock {A Simple, Positive Semi-definite, Heteroskedasticity and
  Autocorrelation Consistent Covariance Matrix}.
\newblock {\em Econometrica}, 55(3):703--08.

\bibitem[Patton, 2006]{Patton2006}
Patton, A.~J. (2006).
\newblock Modelling asymmetric exchange rate dependence.
\newblock {\em International Economic Review}, 47(2):527--556.

\bibitem[Pfanzagl, 1969]{Pfanzagl1969}
Pfanzagl, J. (1969).
\newblock {On the Measurability and Consistency of Minimum Contrast Estimates}.
\newblock {\em Metrika}, 14(1):249--272.

\bibitem[Rao, 1962]{rao1962}
Rao, R.~R. (1962).
\newblock Relations between weak and uniform convergence of measures with
  applications.
\newblock {\em The Annals of Mathematical Statistics}, 33(2):659--680.

\bibitem[Robinson and Zaffaroni, 2006]{robinson2006}
Robinson, P.~M. and Zaffaroni, P. (2006).
\newblock {Pseudo-maximum likelihood estimation of ARCH($\infty$) models}.
\newblock {\em The Annals of Statistics}, 34(3):1049--1074.

\bibitem[Russell, 2001]{Russell2001}
Russell, J.~R. (2001).
\newblock {Econometric Modeling of Multivariate Irregularly-Spaced
  High-Frequency Data}.
\newblock {\em University of Chicago}.

\bibitem[Sorokin, 2011]{sorokin2011}
Sorokin, A. (2011).
\newblock Non-invertibility in some heteroscedastic models.
\newblock {\em Arvix preprint 1104.3318}.

\bibitem[Straumann, 2005]{Straumann2005}
Straumann, D. (2005).
\newblock {Estimation in Conditionally Heteroschedastic Time Series Models}.
\newblock {\em Springer, New York}, 181.

\bibitem[Straumann and Mikosch, 2006]{SM2006}
Straumann, D. and Mikosch, T. (2006).
\newblock {Quasi-Maximum-Likelihood Estimation in Conditionally
  Heteroschedastic Time Series: a Stochastic Recurrence Equation Approach}.
\newblock {\em The Annals of Statistics}, 34(5):2449--2495.

\bibitem[Wald, 1949]{wald1949}
Wald, A. (1949).
\newblock {Note on the Consistency of the Maximum Likelihood Estimate}.
\newblock {\em The Annals of Mathematical Statistics}, 20(4):595--601.

\bibitem[White, 1982]{white1982}
White, H. (1982).
\newblock {Maximum Likelihood Estimation of Misspecified Models}.
\newblock {\em Econometrica}, 50(1):1--25.

\bibitem[Wintenberger, 2013]{Win2013}
Wintenberger, O. (2013).
\newblock {Continuous Invertibility and Stable QML Estimation of the
  EGARCH(1,1) Model}.
\newblock {\em Scandinavian Journal of Statistics}, 40(4):846--867.

\end{thebibliography}

\end{document}